\newacro{ses}[SES]{smallest enclosing sphere}
\newtheorem{theorem}{Theorem}
\newtheorem{lemma}[theorem]{Lemma}
\newcommand{\fsync}{\textsc{$\mathcal{F}$sync}}
\newcommand{\ssync}{\textsc{$\mathcal{S}$sync}}
\newcommand{\async}{\textsc{$\mathcal{A}$sync}}
\def\Look/{\texttt{Look}}
\def\Compute/{\texttt{Compute}}
\def\Move/{\texttt{Move}}
\def\LCM/{\texttt{LCM}}
\def\Gathering/{\textsc{Gathering}}
\def\ChainFormation/{\textsc{Chain-For\-ma\-tion}}
\def\MaxChainFormation/{\textsc{Max-Chain-Forma\-tion}}
\def\MaxLineFormation/{\textsc{Max-Line-Formation}}
\def\UniformCircleFormation/{\textsc{Uniform-Circle-Formation}}
\def\Oblot/{\ensuremath{\mathcal{OBLOT}}}
\def\Luminous/{\ensuremath{\mathcal{LUMI}}}
\newcommand{\half}{\ensuremath{\frac{1}{2}}}
\newcommand{\quarter}{\ensuremath{\frac{1}{4}}}
\title{The Max-Line-Formation Problem
\thanks{This work was partially supported by the German Research Foundation (DFG) under the project number 453112019; ME 872/14-1.}
}
\author[]{Jannik Castenow}
\author[]{Thorsten G\"otte}
\author[]{Till Knollmann}
\author[]{Friedhelm Meyer auf der Heide}
\affil[]{Heinz Nixdorf Institute and Department of Computer Science\\
	Paderborn University,
 \{janniksu, thgoette, tillk, fmadh\}@mail.upb.de}
\date{}
\newenvironment{breakablealgorithm}
{
	\begin{center}
		\refstepcounter{algorithm}
		\hrule height.8pt depth0pt \kern2pt
		\renewcommand{\caption}[2][\relax]{
			{\raggedright\textbf{\fname@algorithm~\thealgorithm} ##2\par}%
			\ifx\relax##1\relax 
			\addcontentsline{loa}{algorithm}{\protect\numberline{\thealgorithm}##2}%
			\else 
			\addcontentsline{loa}{algorithm}{\protect\numberline{\thealgorithm}##1}%
			\fi
			\kern2pt\hrule\kern2pt
		}
	}{
		\kern2pt\hrule\relax
	\end{center}
}
\begin{document}

\maketitle

\begin{abstract}
We consider $n$ robots with \emph{limited visibility}: each robot can observe other robots only up to a constant distance denoted as the \emph{viewing range}.
The robots operate in discrete rounds  that are either fully synchronous (\fsync{}) or semi-synchronized (\ssync{}).
Most previously studied formation problems in this setting seek to bring the robots closer together (e.g., \Gathering/ or \ChainFormation/).
In this work, we introduce the \MaxLineFormation/ problem, which has a contrary goal:
to arrange the robots on a straight line of maximal length.

First, we prove that the problem is impossible to solve by robots with a constant sized \emph{circular} viewing range.
The impossibility holds under comparably strong assumptions: robots that agree on \emph{both} axes of their local coordinate systems in \fsync{}.
On the positive side, we show that the problem is solvable by robots with a constant \emph{square} viewing range, i.e., the robots can observe other robots that lie within a constant-sized square centered at their position.
In this case, the robots need to agree on only \emph{one} axis of their local coordinate systems.
We derive two algorithms:
the first algorithm considers oblivious robots (\Oblot/) and converges to the optimal configuration in time $\mathcal{O}(n^2 \cdot \log (n/\varepsilon))$ under the \ssync{} scheduler ($\varepsilon$ is a convergence parameter).
The other algorithm makes use of locally visible lights (\Luminous/).
It is designed for the \fsync{} scheduler and can solve the problem exactly in optimal time $\Theta(n)$.
We also argue how a combination of the two algorithms can solve the \MaxLineFormation/ exactly in time $\mathcal{O}(n^2)$ under the \ssync{} scheduler with the help of the \Luminous/ model.

Afterward, we show that both the algorithmic and the analysis techniques can also be applied to the \Gathering/ and \ChainFormation/ problem:
we introduce an algorithm with a reduced viewing range for \Gathering/ and give new and improved runtime bounds for the \ChainFormation/ problem.
\end{abstract}


\section{Introduction}

Robot formation tasks aim to arrange $n$ mobile robots in a specific formation.
The robots are modeled as points in the Euclidean plane, and usually, the robot capabilities are very restricted.
Robots are assumed to be externally \emph{identical} (all robots have the same appearance), \emph{anonymous} (no identifiers), \emph{autonomous} (no central control) and \emph{homogeneous} (all robots execute the same algorithm).
Furthermore, the robots operate in discrete rounds denoted as \LCM/ cycles.
Each \LCM/ cycle consists of three operations: \Look/, \Compute/ and \Move/.
During the \Look/ operation, each robot takes a snapshot of its surroundings.
Afterward, the robot computes a target point during \Compute/ and finally moves there in the \Move/ operation.
With the additional assumptions that robots are \emph{silent} (no communication) and \emph{oblivious} (no memory of previous \LCM/ cycles), this is known as the \Oblot/ model~\cite{DBLP:series/lncs/FlocchiniPS19}.
The \Luminous/ model \cite{DBLP:series/lncs/LunaV19}, on the contrary, does not demand the robots to be silent and oblivious.
Instead, robots are equipped with a light that nearby robots (as well as the robot itself) can perceive.
The light can have different colors, and thus, the robots obtain a constant-sized memory and can communicate state information to their neighbors.
In addition to these core features, both models have a variety of freedom in some other assumptions; for instance, the \LCM/ cycles might be fully synchronous (\fsync{}), semi-synchronous (\ssync) or completely asynchronous (\async).
All schedulers are assumed to be fair such that each robot can execute its \LCM/ cycle infinitely often.
Time is measured in \emph{epochs}, i.e., the smallest number of rounds such that each robot has executed its \LCM/ cycle at least once.
In \fsync{} an epoch is equal to one round.

Our focus lies on robots with limited visibility, i.e., each robot cannot perceive the entire swarm but only nearby robots.
The terms \emph{connectivity range} and \emph{viewing range} are distinguished (see e.g., \cite{DBLP:conf/sss/CastenowKKH20,DBLP:conf/sss/PoudelS17}).
Robots are connected to all robots up to a distance equal to their connectivity range and can see all robots within their viewing range (the viewing range is at least as large as the connectivity range).
Initial configurations are connected w.r.t.\ the connectivity range and algorithms typically maintain this connectivity. The larger viewing range enhances the local information of the robots.
Additionally, viewing and connectivity range can be \emph{circular} or \emph{square}.
More precisely, a circular connectivity range of $c$ means that a robot is connected to all robots in the distance at most $c$ (all neighbors lie within the circle of radius $c$ around the robot).
In contrast, the square connectivity range of $sc$ connects a robot $r$ to all other robots located within an axis aligned $2sc \times 2sc $-sized square centered at $r$.
Similarly, circular and square viewing ranges are defined.
In many applications, the connectivity and the viewing range are identical.
The literature especially focusing on the runtime of formation algorithms
often benefits from a viewing range that is larger than the connectivity range, see e.g., \cite{DBLP:conf/ipps/AbshoffC0JH16,DBLP:journals/tcs/CastenowFHJH20,DBLP:conf/sss/CastenowH0KH20,DBLP:conf/sss/PoudelS17}.

Typical well-studied benchmark problems for robots with limited visibility are the
\Gathering/ and the \ChainFormation/ problem.
\Gathering/ demands the robots to gather at a single, not predefined, position.
\ChainFormation/ considers a chain of robots between two stationary outer robots: each inner robot has two identifiable neighbors (the neighborhoods are predefined and fixed). The goal is to arrange the robots on the line segment connecting the outer robots.
Both \Gathering/ and \ChainFormation/ can be characterized as \emph{contracting}: the robots move closer together.
Much less is known about formation tasks for robots with limited visibility that aim to achieve a contrary goal: to \emph{expand} the robots' positions.
One example is the \UniformCircleFormation/ problem in which $n$ robots are to move such that their positions form a regular polygon \cite{DBLP:conf/icdcit/DuttaCDM12,DBLP:conf/icdcit/MondalC20}.
Another, very recent example and the main inspiration for this work is the \MaxChainFormation/ problem~\cite{DBLP:conf/sss/CastenowKKH20}.
The \MaxChainFormation/ problem is a variant of the \ChainFormation/ problem.
The difference is that \MaxChainFormation/ gives the outer robots the ability to move.
The new goal is to transform the chain of robots with connectivity and viewing range $c$ into a straight line of length $(n-1) \cdot c$.

In this work, we introduce the \MaxLineFormation/ problem.
The goal is similar to the \MaxChainFormation/ problem: to move the robots with connectivity range $c$ such that their positions form a straight line of length $(n-1) \cdot c$.
The difference is that \MaxLineFormation/ does not consider predefined chain neighborhoods.
Instead, robots can observe the positions of all robots within their viewing range and do not have any fixed neighbors.
We analyze under which robot capabilities the problem is solvable, derive algorithms, and analyze their runtime.

\paragraph{Related Work} \label{section:relatedWork}

Due to space constraints, we focus on robots that operate in the \LCM/ model and results about \Gathering/, \ChainFormation/ and \MaxChainFormation/ with a particular focus on research that considers a runtime analysis of the proposed algorithms.
For a very recent and comprehensive overview of different robot formation algorithms, we refer the reader to \cite{DBLP:series/lncs/11340}.
Oblivious and disoriented robots (\Oblot/), can solve \Gathering/ in $\mathcal{O}(n^2)$ rounds (\fsync) with the \textsc{GTC} algorithm.
\textsc{GTC} moves robots in each round towards the center of the smallest enclosing circle of their neighborhood \cite{DBLP:journals/trob/AndoOSY99,DBLP:conf/spaa/DegenerKLHPW11}.
\textsc{GTC} achieves the currently best-known runtime for disoriented and oblivious robots in the Euclidean plane.
Faster algorithms for disoriented robots could so far only be designed under the \Luminous/ model.
There are two algorithms for robots located on a two-dimensional grid \cite{DBLP:conf/ipps/AbshoffC0JH16,DBLP:conf/spaa/Cord-Landwehr0J16}.
Another algorithm for robots in the Euclidean plane that are connected in a closed chain topology \cite{DBLP:conf/sss/CastenowH0KH20} exists.
When assuming the \Oblot/ model and one axis agreement, an asymptotically optimal algorithm with runtime $\mathcal{O}(\Delta)$ has been introduced in \cite{DBLP:conf/sss/PoudelS17}.
The algorithm assumes a square connectivity range of $1$ and a circular viewing range of $\sqrt{10}$.
Notably, the algorithm even works with the same runtime guarantees under the \async{} scheduler.

\ChainFormation/ has been initially introduced in \cite{DBLP:conf/ifip10/DyniaKLH06}.
The authors introduce the \textsc{GTM} algorithm that moves each robot to the midpoint between its neighbors.
For the \fsync{} scheduler, a runtime of $\mathcal{O}(n^2 \cdot \log (n/\varepsilon))$ rounds has been proven.
Later on, an almost matching lower bound (for the algorithm) of $\Omega(n^2 \cdot\log(1/\varepsilon))$ has been derived \cite{DBLP:conf/spaa/KlingH11}.
Algorithms with stronger assumptions, e.g., the \Luminous/ model, are able to achieve better runtimes \cite{DBLP:conf/spaa/DyniaKHS07,DBLP:journals/tcs/KutylowskiH09}.

Very recently, the \MaxChainFormation/ problem has been introduced~\cite{DBLP:conf/sss/CastenowKKH20}.
Started in one-dimensional configurations, the \textsc{Max-GTM} algorithm has a runtime of $\mathcal{O}(n^2 \cdot \log(n/\varepsilon))$ and $\Omega(n^2 \cdot \log(1/\varepsilon))$ rounds under the \fsync{} scheduler.
However, a specific class of input configurations does not converge to the optimal configuration.
For two-dimensional configurations, only a convergence result is known.
Additionally, for \Gathering/, \ChainFormation/ and \MaxChainFormation/, it is known that the problems can be solved optimally in a continuous time model \cite{DBLP:conf/sss/CastenowKKH20,DBLP:journals/topc/DegenerKKH15}.

\paragraph{Our Contribution}

We introduce the \MaxLineFormation/ problem.
The goal is to arrange $n$ robots with connectivity range $c$ on a straight line of length $(n-1) \cdot c$.
We start with an impossibility result and prove that there are initial configurations for which the problem cannot be solved deterministically by robots with constant sized \emph{circular} viewing and connectivity ranges.
In addition, also no algorithm that converges to the optimal solution can exist for these configurations.
The impossibility result even holds under strong assumptions: fully synchronized robots (\fsync{}) that agree on \emph{both} axes of their local coordinate systems.
On the positive side, we show that the problem becomes solvable for robots with identical \emph{square} connectivity and viewing ranges.
While \emph{square} connectivity and viewing ranges already have been proven to be useful to derive an efficient \Gathering/ algorithm \cite{DBLP:conf/sss/PoudelS17}, the \MaxLineFormation/ is the first known problem that can be solved under \emph{square} viewing ranges but not under \emph{circular} viewing ranges.
Our algorithms require the robots to agree on only \emph{one} axis of their local coordinate systems.
We introduce two algorithms:
The first algorithm considers the \Oblot/ model and converges to the optimal solution in $\mathcal{O}(n^2 \cdot \log (n/\varepsilon))$ epochs under the \ssync{} scheduler.
The analysis idea is based on the \emph{sample variance} of time \emph{inhomogeneous} Markov chains (a concept similar to the \emph{mixing time} of the time homogeneous case) inspired by \cite{DBLP:journals/tac/NedicOOT09}.
Afterward, we show that enhancing the robots with the \Luminous/ model allows us to derive an improved algorithm, i.e., the algorithm solves the problem exactly while simultaneously improving the runtime.
The algorithm considers the \fsync{} scheduler and solves the problem in $\Theta(n)$ epochs.
The runtime is asymptotically optimal.
Additionally, we argue that, with some additional synchronization, a combination of the two algorithms can solve the problem exactly with the help of lights in $\mathcal{O}(n^2)$ epochs under the \ssync{} scheduler.
Due to space constraints, only the high-level idea of the \ssync{} algorithm is contained in this version of the paper.

Our results compare to the \textsc{Max-GTM} algorithm for \MaxChainFormation/ (which has the same goal but considers predefined and fixed chain neighborhoods) problem as follows: our runtime of the \Oblot/ algorithm holds under the \ssync{} scheduler.
For \textsc{Max-GTM}, only runtimes in \fsync{} are known \cite{DBLP:conf/sss/CastenowKKH20}.
Additionally, our results about \MaxLineFormation/ hold for \emph{every} input configuration in which robots have distinct initial positions.
For \textsc{Max-GTM}, only a convergence result for a large class of configurations is known.
Additionally, certain classes of configurations do not converge to the optimal configuration \cite{DBLP:conf/sss/CastenowKKH20}.

Moreover, we identify an interesting relation to \Gathering/ and \ChainFormation/.
We first show that we can apply the main algorithmic idea of the $\Theta(n)$ algorithm to the \Gathering/ problem.
More precisely, we derive an algorithm for the \Oblot/ model that solves \Gathering/ of $n$ robots that agree on one axis of their local coordinate systems in $\Theta(\Delta)$ epochs under the \fsync{} scheduler, where $\Delta$ denotes the maximum distance of two robots in the initial configuration. \footnote{$\Omega(\Delta)$ is a trivial lower bound since at least one of the robots forming the diameter $\Delta$ must cover a distance of at least $\frac{\Delta}{2}$ to obtain \Gathering/. Since the robots have limited visibility, this requires $\Omega(\Delta)$ rounds.}
The algorithm uses a square viewing and connectivity range of $1$.
Up to now, the best-known algorithm achieving the same runtime uses a square connectivity range of $1$ and a circular viewing range of $\sqrt{10}$ \cite{DBLP:conf/sss/PoudelS17}.
Thus, our algorithm closes the gap between viewing and connectivity range.
Furthermore, we show how the analysis technique of the first algorithm (based on time inhomogeneous Markov chains) can also be applied for the \ChainFormation/ problem.
In this context, \emph{disoriented} robots (no agreement on the local coordinate systems) that are connected in a chain topology are assumed as well as a circular connectivity range and viewing range of $1$.
We prove that the \textsc{GTM} algorithm \cite{DBLP:journals/tcs/CohenP08,DBLP:conf/spaa/DyniaKHS07}, in which each robot moves to the midpoint between its two direct neighbors in every round, converges to the optimal configuration in $\mathcal{O}(n^2 \cdot \log (n/\varepsilon))$ epochs assuming the \ssync{} scheduler.
For one-dimensional configurations (all robots are initially collinear) this is a significant improvement over the so far best known runtime bound of $\mathcal{O}(n^5 \cdot \log(n/\varepsilon))$ epochs for this algorithm~\cite{DBLP:journals/tcs/CohenP08}.
For two-dimensional configurations, our result is the first runtime bound for the \ChainFormation/ problem derived for the \ssync{} scheduler.

\section{Model \& Notation}

\textbf{Time Model}
Robots operate in discrete LCM (\Look/, \Compute/, \Move/) cycles, denoted as \emph{rounds}.
Each robot takes a snapshot of its neighborhood during \Look/, computes a target point in \Compute/, and moves to this point in \Move/.
We assume a \emph{rigid} movement, robots always reach their target points during \Move/.
The timing of the executions of the LCM cycles is either fully synchronous (\fsync{}) or semi-synchronous (\ssync), i.e., the cycles are synchronous, but only a subset of all robots participates.
The executions are always fair: All robots execute their cycles infinitely often.
Time is measured in epochs, i.e., the smallest number of rounds until each robot processes one complete \LCM/ cycle.
We assume that the execution starts in round $t_0$ and denote the first round of the $k$-th epoch by $t_{e_k}$.
Thus, $t_{e_1} = t_0$.

\textbf{Robot Model}
We consider $n$ robots $r_1, \dots, \,r_{n}$ positioned in $\mathbb{R}^2$.
Initially, the robots are located at pairwise distinct locations \footnote{Otherwise, the problem is deterministically unsolvable since multiplicities cannot be resolved if the robots are activated simultaneously. }.
We assume a square connectivity and viewing range of $1$, i.e., two robots $r_i$ and $r_j$ are neighbors if $r_j$ is located inside of the $2 \times 2$-sized square centered at $r_i$ and vice versa.
Note that $1$ is only chosen for simplicity; it can be replaced by any constant $c$.
The neighborhood of a robot $r_i$ (the set of all visible robots) in round $t$ is denoted by $N_i(t)$.
The square connectivity graph in which two robots share an edge if they are neighbors is initially connected.
Robots are assumed to be \emph{transparent} and thus do not block the views between other robots.
Moreover, the robots agree on one axis of their local coordinate systems.
W.l.o.g.\, we assume that the robots agree on the $x$-axis.
Thus, the robots have a common understanding of left and right, while up and down can be inverted.
However, the robots agree on unit distance and can measure distances precisely.
When considering the \Oblot/ model, the robots are also silent and oblivious.

For one algorithm, we consider the \Luminous/ model.
Each robot is equipped with a constant number of lights $\ell_{1}, \dots, \ell_{k}$ with color sets $C_1, \dots , C_k$ and at every point in time, each light can have a single color out of its color set. \footnote{In the classical \Luminous/ model \cite{DBLP:series/lncs/LunaV19} each robot is equipped with a single light and color set. Our assumption of multiple lights and color sets can be transferred to the classical setting by choosing a single light with a color set of size at most $2^{\sum_{i=1}^{k} |C_i|}$ .}
Robots can perceive the lights of their neighbors during \Look/ and can manipulate their light during \Compute/.
Hence, if a robot $r_i$ decides to change its light color in round $t$, its neighbors can see this earliest in round $t+1$.

\textbf{Notation}
The position of a robot $r_j$ in round $t$ is denoted by $p_j(t) = (x_j(t), y_j(t))$ in a global coordinate system and by $p^{i}_j(t) = (x^{i}_j(t), y^{i}_j(t))$ in the local coordinate system of $r_i$.
Each robot lies in the center of its local coordinate system and thus $p^{i}_i(t) = (0,0)$.
For a robot $r_i$, $r^{i}_\ell(t)$ denotes the leftmost robot of its neighborhood in round $t$.
The position of $r^{i}_{\ell}(t)$ in the local coordinate system of $r_i$ in round $t$ is denoted by $p^{i}_\ell(t) = (x^{i}_\ell(t), y^{i}_\ell(t))$.
In case there are multiple such robots, $r^{i}_{\ell}(t)$ represents an arbitrary robot of all leftmost robots.
Similarly, $r^{i}_r(t)$ and $p^{i}_r(t)$ are defined for the rightmost neighbor.
Additionally, define $r^{i}_+(t)$ and $p^{i}_+(t)$ to be the closest neighbor above of $r_i$ and its position.
Analogously, $r^{i}_-(t)$ and $p^{i}_-(t)$ is defined as the closest neighbor below and its position.
In case no such robot exists, $r^{i}_+(t) = r_i$ and  $r^{i}_-(t) = r_i$.
For a vector $v$, we denote by $\widehat{v}$ the normalized vector $\frac{1}{\|v\|}v$.

\textbf{Problem Statement}
\MaxLineFormation/ demands to move $n$ robots with connectivity range $c$ such that their positions form a straight line of length $(n-1) \cdot c$.
We say that an $(1-\varepsilon)$-approximation of the optimal configuration is reached if the positions form a straight line of length at least $(1-\varepsilon) \cdot (n-1) \cdot c$.
During the entire execution of an algorithm, the connectivity graph has to remain connected.

\section{Impossibility Result \& Intuition about Square Ranges} \label{section:impossibility}

This section proves that \MaxLineFormation/ is unsolvable with constant-sized circular viewing and connectivity ranges.
Afterward, we give an intuition on how square ranges circumvent the impossibility.

\subsection{Impossibility with Circular Ranges} \label{section:impossibilityCirc}
\begin{restatable}{theorem}{firstTheorem} \label{theorem:mainTheorem}
	In the \Oblot/ model, for every constant sized \emph{circular} connectivity and viewing range, there exists an initial configuration with robots located at distinct positions such that the \MaxLineFormation/ problem is unsolvable.
	Furthermore, no convergence algorithm can exist for these configurations.
	This holds for robots that agree on both axes of their local coordinate systems and the \fsync{} scheduler.
\end{restatable}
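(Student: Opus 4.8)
The plan is to pin the impossibility on the one place where a \emph{circular} range is strictly weaker than a \emph{square} one: a robot with circular range $c$ never sees a robot that is close in one coordinate but at distance more than $c$ in the other coordinate, whereas a square range of half-side $c$ always sees the entire strip of width $2c$ around it. Accordingly, I would exhibit a single initial configuration $C_0$ of distinct points (depending only on $c$ and $n$) in which this blind spot, together with obliviousness and full synchrony, prevents the swarm from ever coordinating enough to form — or even approach — a straight line of length $(n-1)c$.

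Concretely I would build $C_0$ from two parallel, horizontal sub-chains: about $n/2$ robots at consecutive horizontal distance $d$ on each of the lines $y=0$ and $y=H$, for constants $\tfrac c2<d<c$ and $c<H\le 2c$ (e.g.\ $d=\tfrac34 c$, $H=\tfrac32 c$), together with a ``bridge'' robot at height $H/2$ linking the left ends of the two chains so that the connectivity graph is connected (and $n$ robots in total). With these parameters the graph is exactly a path that is not the optimal line; every robot of a chain that is neither a chain endpoint nor adjacent to the bridge sees \emph{only} its two chain neighbours, at relative positions $(\pm d,0)$; and no robot of one chain ever sees a robot of the other, since any cross-chain pair is at distance at least $H>c$. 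I would then prove, by induction on the round number, that this ``interior'' of each chain moves rigidly: in \fsync{} these robots all share the same view, so an oblivious algorithm assigns them a common displacement, whence the interior keeps internal horizontal spacing $d<c$ and — crucially — the two chains' interiors stay at vertical distance $H>c$. A short geometry lemma (two parallel horizontal segments of positive length at vertical distance bigger than $c$ cannot be $(1-\varepsilon)$-close to a line of length $(n-1)c$) then turns this into the claimed impossibility and non-convergence, with an algorithm-independent, round-independent constant gap.

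The step I expect to be the real obstacle is \emph{permanence}: as described, the rigid interior erodes inward from the chain endpoints — once an endpoint moves differently it changes its neighbour's view, and this peeling propagates — so the naive argument only forbids a solution for the first $\Theta(n)$ rounds, not forever. To close this gap I would modify the construction so that it has no ``seam'' to erode from: most plausibly, close the two chains into one long thin cycle (parallel chains joined by short bridges at \emph{both} ends) and impose an extra reflection symmetry swapping the two chains, then argue that the finitely many corner/bridge robots have views that force their (synchronised) moves to keep the two chains on opposite sides of a horizontal axis at distance $\Omega(c)$ forever. Verifying that no oblivious algorithm can break this at the corners — i.e.\ that the local view computations there really do pin the symmetry down round after round — is the delicate, calculation-heavy part, and is essentially the whole content of the proof; the two-chain picture is just the mechanism that makes it work. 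A fallback route, should pinning the symmetry fail, is a pure indistinguishability argument: exhibit two initial configurations that some robots cannot tell apart, one of which is a legitimate instance whose solution would force those robots to act ``wrongly'' in $C_0$, contradicting any purported algorithm.
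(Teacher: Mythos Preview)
Your main construction has exactly the gap you yourself flag, and your proposed repair does not close it. With agreement on \emph{both} axes, a deterministic algorithm is under no obligation to commute with the horizontal reflection that swaps your two chains: a robot at $(x,h)$ and its mirror at $(x,-h)$ have views that are reflections of one another, but since both robots share the same global ``up,'' the algorithm may legally send one of them up and the other up as well (or treat them asymmetrically in any other way). So the invariant ``the two chains stay on opposite sides of the axis at distance $\Omega(c)$'' is not forced by symmetry, and you are left with an unverified, algorithm-dependent case analysis at the corners that you correctly describe as ``essentially the whole content of the proof'' --- but do not carry out. The erosion problem is therefore still open in your plan.

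Your fallback, however, is the right idea and is precisely what the paper does, in a much smaller and cleaner form than a long cycle. Assume an algorithm $\mathcal{M}$ solves the problem; run it on three robots, so that at some time $t_f$ they sit at $p_2\pm(0,c)$ and $p_2$. Now build a seven-robot ``H'': two vertical triples $r_4,r_5,r_6$ and $r_8,r_9,r_{10}$ at mutual distance $c$, with a single robot $r_7$ at distance $c$ between the two middle robots $r_5,r_9$. Each tip $r_4,r_6,r_8,r_{10}$ sees exactly one neighbour at distance $c$ along the $y$-axis --- the same view as an endpoint of the solved three-robot line --- so $\mathcal{M}$ must leave it fixed. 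But then $r_5,r_7,r_9$ are each at distance exactly $c$ from at least two stationary robots in different directions, so any move disconnects the (circular!) connectivity graph. The configuration is therefore frozen and never becomes, or approaches, a line of length $6c$. This combines indistinguishability (pin the tips) with a geometric deadlock (trap the middles), and it is the deadlock half --- not a symmetry or rigidity-of-interior argument --- that supplies the permanence you were missing.
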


\begin{figure}[htbp]
	\begin{minipage}[t]{0.5\textwidth}
		\centering
		\includegraphics[width=0.65\textwidth]{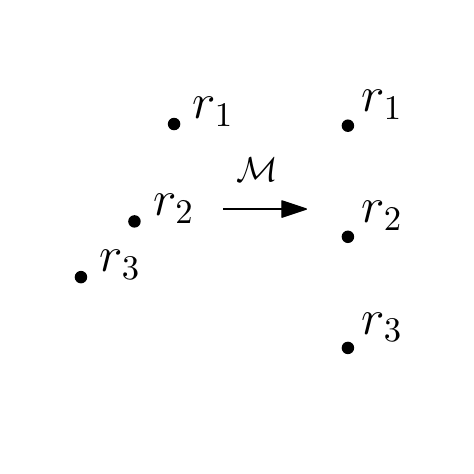}
		\caption{The config. $C_1$ transformed by $\mathcal{M}$.}
		\label{fig:impossibility}
	\end{minipage}
	$\,\,\,$
	\begin{minipage}[t]{0.5\textwidth}
		\centering
		\includegraphics[width=0.65\textwidth]{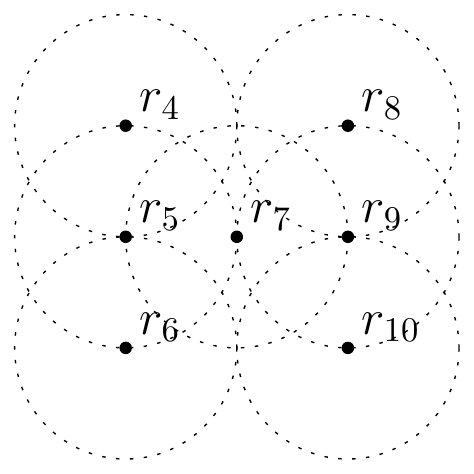}
		\caption{The configuration $C_2$.}
		\label{fig:impossibilitySquare}
	\end{minipage}

\end{figure}

\begin{proof}
	Initially, we assume identical viewing and connectivity ranges.
	The arguments for viewing ranges that are larger than the connectivity range are analogous and can be found in \Cref{section:completeImpossibility}.
	Thus, we assume a circular viewing and connectivity range of $c$.
	We prove the claim by contradiction.
	We assume that there is an algorithm $\mathcal{M}$ that is able to solve the \MaxLineFormation/ problem.
	Next, we derive a combination of $2$ initial configurations $C_1$ and $C_2$ and prove that if $\mathcal{M}$ is able to solve the problem starting in $C_1$, it cannot solve it starting in $C_2$.
	The configuration $C_1$ consists of three robots $r_1$, $r_2$ and $r_3$ at arbitrary (connected) positions.
	Since $\mathcal{M}$ is able to solve the problem, there is a time step $t_f$ such that the \MaxLineFormation/ problem is solved.
	W.l.o.g.\ we assume that $r_1$ and $r_3$ are located at the end of the line and $p_1(t_f), p_2(t_f)$ and $p_3(t_f)$ form a line parallel to the $y$-axis (otherwise we could rename the robots and rotate the following configuration $C_2$ accordingly).
	More precisely, $p_1(t_f) - p_2(t_f) = p_2(t_f) - p_3(t_f) = (0, c)$.
	See \Cref{fig:impossibility} for a depiction of the effects of $\mathcal{M}$ started in $C_1$.

	The configuration $C_2$ consists of $7$ robots, $r_4, \dots, r_{10}$ located at the following positions in a global coordinate system (not known to the robots):
	$p_4(t) =(-c,c) , p_5(t) =(-c,0), p_6(t) = (-c,-c), p_7(t) = (0,0), p_8(t) = (c,c), p_9(t) = (c,0),$ and  $p_{10}(t) = (c,-c)$.
	See \Cref{fig:impossibilitySquare} for a visualization of the configuration.
	In $C_2$, $r_4$ can only see $r_5$ and is located in distance $c$ of $r_5$.
	Moreover, it holds $p_4(t) - p_5(t) = p_1(t_f) - p_2(t_f)$ and $\|p_4(t) - p_5(t)\| = c$.
	Thus, $\mathcal{M}$ is not allowed to move $r_4$ since $\mathcal{M}$ cannot distinguish $r_1$ in configuration $C_1$ after time $t_f$ and $r_4$ in configuration $C_2$.
	By similar arguments, $\mathcal{M}$ is also not allowed to move $r_6,r_8$ and $r_{10}$.
	Hence, the only remaining robots that could be moved by $\mathcal{M}$ are $r_5$, $r_7$ and $r_9$.
	However, also these robots are not allowed to move.
	Consider the robot $r_5$ which is located in maximum distance to $r_4$, $r_6$ and $r_7$.
	No matter where $r_5$ moves, it loses the connectivity to either $r_4$ or $r_6$ as these robots remain at their position.
	The same arguments hold for $r_7$ and $r_9$.
	It follows that $\mathcal{M}$ cannot solve the problem $C_2$, which contradicts the assumption. \qed
\end{proof}

\subsection{Intuition about Square Ranges}

Next, we argue why the proof of \Cref{theorem:mainTheorem} does not hold when considering square viewing and connectivity ranges.
Assume that the algorithm $\mathcal{M}$ transforms the configuration $C_1$ into a line that is parallel to the $y$-axis.
Then, also the configuration $C_2$ is aligned with the $y$-axis.
Still, the robots $r_4, r_6, r_8$ and $r_{10}$ are not allowed to move.
The robots $r_5$ and $r_9$, however, gain the possibility to move horizontally.
More precisely, $r_5$ is allowed to move to the right (a distance of at most $1$) without losing the connectivity to $r_4$ and $r_6$ since the complete line segment connecting $r_5$ and $r_7$ is contained in the square viewing range of both $r_4$ and $r_6$.
Similarly, $r_9$ can move to the left.
See \Cref{fig:impossibilitySquareRanges} for a depiction of $C_2$ with square ranges instead of circular ones.
Consequently, an algorithm solving the \MaxLineFormation/ with the help of square ranges should arrange the robots on a line parallel to the $y$-axis.
The square ranges are only beneficial in case the local coordinate systems have the same orientation.
In case the robots are disoriented, the same impossibility result of \Cref{section:impossibilityCirc} also holds with square ranges.

%
%
%

\begin{figure}[htbp]
	\begin{minipage}[c]{0.67\textwidth}
		\centering
		\includegraphics[width=0.4\textwidth]{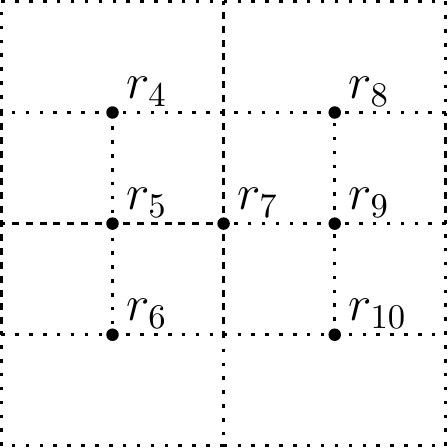}
	\end{minipage}\hfill
	\begin{minipage}[c]{0.3\textwidth}
		\caption{The configuration $C_2$ with square ranges instead of circular ones.}	\label{fig:impossibilitySquareRanges}
	\end{minipage}
\end{figure}

\section{\Oblot/ Algorithm} \label{section:oblot}

Based on the results of \Cref{section:impossibility}, \MaxLineFormation/ is unsolvable with \emph{circular} viewing and connectivity ranges.
In this section, we show that equipping the robots with \emph{square} connectivity and viewing ranges allows us to design an algorithm that converges to the optimal solution.
More precisely, we give an algorithm that converges to the optimal configuration assuming the \Oblot/ model and a square viewing and connectivity range of $1$.

\subsection{Intuition}

The algorithm works in two phases.
In the first phase, the positions of all robots are arranged on a straight line parallel to the $y$-axis.
Afterward, the line is stretched in the second phase.
Since the robots are oblivious and have limited visibility, robots cannot distinguish the phases and act upon their local view.
Nevertheless, we will show that there is a time $t'$ such that all robots have joined the second phase and will remain there for the rest of the execution.

\textbf{Phase 1:}
A robot $r_i$ whose neighborhood has not yet formed a line parallel to the $y$-axis moves only if its position is rightmost in its neighborhood.
Then, $r_i$ moves horizontally to the $x$-coordinate of its leftmost neighbor.
If another robot already occupies this position, $r_i$ executes a slight vertical movement into the positive (from its local view) $y$-direction to avoid a collision.
Collisions have to be avoided as they cannot be resolved deterministically.
More precisely, if the robot is located topmost in its neighborhood, it moves a constant distance upwards.
If the robot is not topmost, it determines the value $y^{i}_{min}$, the $y$-coordinate of its closest neighbor to the top.
Afterwards, it moves $\frac{1}{3}y^{i}_{min}$ upwards.
The factor of $\frac{1}{3}$ is essential since the robot with $y$-coordinate $y^{i}_{min}$ might do the same movement while having a different understanding of up and down.
Hence, a collision of the two robots is avoided.

\textbf{Phase 2:} In the second phase, all robots are located on the same line parallel to the $y$-axis, which can be seen as a particular case of the \MaxChainFormation/ problem.
Thus, the robots execute the \textsc{Max-GTM} algorithm designed for \MaxChainFormation/ \cite{DBLP:conf/sss/CastenowKKH20}: each inner robot (robots that have neighbors in each direction) move to the midpoint between their closest northern and their closest southern neighbor.
Outer robots (at the end of the line) have to stretch the line and move as far as possible away from their closest neighbor without losing connectivity.
Concretely, outer robots move as follows.
Let $r_1$ be an outer robot and $r_2$ its closest neighbor and $v(t) = p_1(t) - p_2(t)$.
Then, $r_1$ imagines a virtual robot $r_v$ at the position $p_v(t) = p_1(t) + \widehat{v}(t)$ and moves to $\half p_v(t) + \half p_2(t)$.
\subsection{Algorithm}

We define the following set of possibly colliding robots.
For a robot $r_i$, define $C_i(t) = \{r_j \in N_i(t) | x^i_j(t) = 0  \textnormal{ or } x^i_j(t)  = x^{i}_{\ell}(t)\}$.
Now, $r^{i}_{min} \in C_i(t)$ is the robot with minimal $y^{i}_{min}(t)$ among all robots with $y^{i}_{min}(t) > 0$.
Thus, $r^{i}_{min}$ represents the robot lying above of $r_i$ (from $r_i$'s view) that has the smallest $y$-coordinate among all robots in $C_i(t)$.
If no such robot exists, define $y^{i}_{min} = \frac{1}{10}$.
\Cref{algorithm:Collisions} describes the movement of a robot $r_i$.

\begin{algorithm}
	\caption{ \Oblot/ \MaxLineFormation/ }\label{algorithm:Collisions} \begin{algorithmic}[1]
		\If{$x^{i}_{r}(t) = 0$ and $x^{i}_{\ell}(t)  < 0$} \Comment{Check if $r_i$ is rightmost but not leftmost}
		\If{no robot is located on $(x^{i}_\ell(t), 0)$}
		\State $p_i(t+1) \gets (x^{i}_\ell(t),0)$ \Comment{$r_i$ can move safely to the left}
		\Else
		\State $p_i(t+1) \gets (x^{i}_\ell(t), \frac{1}{3} \cdot y^{i}_{min})$ \Comment{$r_i$ avoids a collision with a vertical movement}
		\EndIf
		\Else

		\If{$x^{i}_{r}(t) = 0$ and $x^{i}_{\ell}(t)  = 0$} \Comment{Check if neighbors are collinear}
		\If{$y^{i}_{+}(t) = 0$ and $y^{i}_{-}(t) < 0$} \Comment{Check if $r_i$ is top most}
		\State $v_-(t) \gets p^{i}_-(t) - p_i(t)$; $p_v(t) \gets p_i(t) - \widehat{v}_-(t)$ \Comment{Position of virtual robot}
		\State $p_i(t+1) \gets \frac{1}{2}p_-(t) + \frac{1}{2}p_v(t)$
		\ElsIf{$y^{i}_{+}(t) > 0$ and $y^{i}_{-}(t)  = 0$} \Comment{Check if $r_i$ is bottom most }
		\State $v_+(t) \gets p^{i}_+(t) - p_i(t)$; $p_v(t) \gets p_i(t) - \widehat{v}_+(t)$\Comment{Position of virtual robot}
		\State $p_i(t+1) \gets \frac{1}{2}p_+(t) + \frac{1}{2}p_v(t)$
		\Else
		\State $p_i(t+1) \gets \frac{1}{2} p_-(t) +\frac{1}{2} p_+(t) $
		\EndIf
		\EndIf
		\EndIf
		\State $r_i$ moves to $p_i(t+1)$
	\end{algorithmic}
\end{algorithm}

\subsection{Analysis} \label{section:maxssyncAnalysis}

Next, we introduce the analysis idea to prove the main theorem (\Cref{theorem:mainTheoremOblot}) about the \Oblot/ algorithm.
Due to space constraints, all proofs are deferred to \Cref{section:maxLineOblotProofs}.

\begin{theorem} \label{theorem:mainTheoremOblot}
	For every $0  < \varepsilon < 1$, after $\mathcal{O}(n^2 \cdot \log \left(n/\varepsilon\right))$ epochs, the robots have formed a line of length at least $(1-\varepsilon) \cdot (n-1)$.
\end{theorem}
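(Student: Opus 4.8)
I would split the runtime analysis into two parts mirroring the two phases of the algorithm: first show that after $\mathcal{O}(n^2)$ epochs all robots are permanently in Phase~2 (i.e., all positions lie on a common line parallel to the $y$-axis and stay there), and then show that the stretching dynamics of Phase~2 reach a $(1-\varepsilon)$-approximation within $\mathcal{O}(n^2\log(n/\varepsilon))$ further epochs. Adding the two bounds gives the claimed $\mathcal{O}(n^2\log(n/\varepsilon))$.

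\textbf{Phase~1 (collinearity).} For the first part I would track a potential measuring how far the configuration is from being a vertical line, e.g.\ the number of distinct $x$-coordinates among the robots, or the horizontal extent $\max_i x_i(t)-\min_i x_i(t)$ together with the ``staircase'' structure. The key structural observations to establish are: (i) the common $x$-axis agreement means ``leftmost'' and ``rightmost'' are globally consistent, so a robot only moves (horizontally) when it is rightmost in its neighborhood, and such a move either jumps to the $x$-coordinate of its left neighbor or does a small vertical nudge to dodge a collision; (ii) the collision-avoidance factor $\frac13$ together with the leftmost robot never moving horizontally guarantees no two robots ever collide and connectivity is preserved (this needs the square range, exactly as in the ``Intuition'' section); (iii) each epoch the rightmost column of robots either shrinks or merges leftward, so after $\mathcal{O}(n)$ epochs one more column is eliminated, yielding $\mathcal{O}(n^2)$ epochs until a single $x$-coordinate remains. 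I would also need to argue that once the configuration is a vertical line it stays one under the Phase~2 rules (inner robots move to vertical midpoints, outer robots move vertically along $\widehat v$), and that the small vertical nudges during the transient don't destroy the eventual collinearity — this is where I would be most careful, since robots near the transition may be executing Phase~1 and Phase~2 rules inconsistently.

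\textbf{Phase~2 (stretching).} Once all robots are on a vertical line, the $x$-coordinates are fixed and the problem reduces to a one-dimensional dynamics on the $y$-coordinates that is essentially \textsc{Max-GTM} for \MaxChainFormation/. I would analyze this as a time-inhomogeneous Markov chain: order the robots by $y$-coordinate as $y_{(1)}(t)\le\cdots\le y_{(n)}(t)$; each inner robot's new position is the average of its two neighbors, an outer robot moves distance $1$ away from its neighbor if room permits. Writing the gap vector $g_j(t)=y_{(j+1)}(t)-y_{(j)}(t)$, the inner updates act as a doubly-stochastic (averaging) operator on the gaps while the outer robots inject ``stretch'' whenever a boundary gap is below $1$. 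The convergence of the gap vector to the all-ones vector is governed by the sample variance / second-largest-eigenvalue-type contraction of products of these averaging matrices, following \cite{DBLP:journals/tac/NedicOOT09} and the \textsc{Max-GTM} analysis of \cite{DBLP:conf/sss/CastenowKKH20}; the relevant spectral gap for a path on $n$ nodes is $\Theta(1/n^2)$, so after $\mathcal{O}(n^2\log(\mathrm{initial\ deviation}/\varepsilon))$ epochs the total length is within a $(1-\varepsilon)$ factor of $n-1$. Since the initial spread is polynomially bounded in $n$ (the connectivity graph has diameter $\le n$, so $\Delta\le n$), the $\log$ term is $\log(n/\varepsilon)$, giving the stated bound.

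\textbf{Main obstacle.} The genuinely delicate point is the interface between the phases under the \ssync{} scheduler: because robots are oblivious and act on local views, different robots can simultaneously be in ``Phase~1 mode'' and ``Phase~2 mode,'' and an adversarial activation pattern could in principle keep re-creating horizontal displacement or threaten connectivity. I would handle this by proving a monotonicity invariant — once a maximal set of consecutive (in $y$) robots sharing the leftmost $x$-coordinate is ``locked'' it never gains horizontal spread again — and by showing the horizontal potential is non-increasing across every round and strictly decreases at least once per $\mathcal{O}(n)$ epochs until it hits zero. Establishing this invariant carefully, including the collision-freeness argument for the $\frac13 y^i_{min}$ nudge when two robots with opposite up/down orientations both move, is the technical heart of the proof; everything after the configuration becomes a vertical line is a (careful but standard) adaptation of the existing \textsc{Max-GTM} / time-inhomogeneous Markov chain machinery.
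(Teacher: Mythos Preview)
Your two-phase plan matches the paper's proof exactly: Phase~1 is \Cref{lemma:firstLemma} (count distinct $x$-coordinates, $\mathcal{O}(n^2)$ epochs), and Phase~2 is the sample-variance analysis of \Cref{lemma:sortedBound,lemma:potentialRatio} using the time-inhomogeneous Markov-chain decay from \cite{DBLP:journals/tac/NedicOOT09} to obtain a $\Theta(1/n^2)$ per-epoch contraction on $\Phi(t)=\sum_i (w_i(t)-1)^2$. The only recalibration is emphasis: you identify the Phase-1/Phase-2 interface under \ssync{} as the ``technical heart,'' whereas the paper dispatches it in a line --- since a robot executes the (purely vertical) Phase-2 rule only when \emph{all} its neighbors already share its $x$-coordinate, no horizontal displacement is ever re-created and the horizontal potential is trivially monotone, so the actual technical weight sits entirely in the Phase-2 variance-decay lemmas.
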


First, we argue that the first phase of the algorithm ends after $\mathcal{O}(n^2)$ rounds.

\begin{restatable}{lemma}{firstLemma}\label{lemma:firstLemma}
	After $\mathcal{O}(n^2)$ epochs, all robots are located on distinct positions on the same vertical line parallel to the $y$-axis.
	Moreover, the configuration is connected.
\end{restatable}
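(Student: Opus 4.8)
The plan is to track the progress of Phase~1 via a potential argument counting the number of \emph{distinct $x$-coordinates} occupied by the robots. Let $X(t)$ denote the set of $x$-coordinates $\{x_i(t) : i \in [n]\}$ in the (unknown) global frame, recalling that all robots agree on the $x$-axis so these coordinates are globally well-defined up to a common reflection. The key observation is that the only robots that move horizontally are those satisfying the Phase~1 trigger in Line~1 of \Cref{algorithm:Collisions}, i.e.\ robots that are rightmost but not leftmost in their neighborhood; such a robot jumps to the $x$-coordinate of its leftmost neighbor. First I would argue that $|X(t)|$ is non-increasing: a robot that moves always moves \emph{onto} an already-occupied $x$-coordinate (the one of $r^{i}_\ell(t)$), so no new $x$-coordinate is ever created, and the vertical sub-move (Line~5) never changes $x$-coordinates. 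Second, I would show that in every epoch in which the configuration is not yet a single vertical line, $|X(t)|$ strictly decreases within $\mathcal{O}(n)$ epochs: consider the robots sitting on the globally rightmost occupied $x$-coordinate, say $x_{\max}$; each such robot is rightmost in its own neighborhood, and since the connectivity graph is connected and more than one $x$-coordinate is present, at least one of them has a neighbor strictly to its left, hence satisfies the trigger and leaves $x_{\max}$ (possibly after first resolving a collision by the vertical $\tfrac13 y^{i}_{\min}$ move, which costs at most one extra epoch per robot before the horizontal move fires). Once every robot originally on $x_{\max}$ has been activated and moved, the value $x_{\max}$ is vacated and $|X|$ drops. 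Since $|X(t_0)| \le n$, after $\mathcal{O}(n)$ such phases, i.e.\ $\mathcal{O}(n^2)$ epochs, we reach $|X(t)| = 1$.

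The second thing to establish is that connectivity is preserved throughout, which is also where the \emph{square} viewing range is essential. When $r_i$ moves from $(x^i_r(t),0)=(0,0)$ horizontally to $(x^i_\ell(t),0)$, its $x$-displacement is at most $1$ (its leftmost neighbor is within the connectivity square) and it does not change its $y$-coordinate; for any neighbor $r_j$, the entire horizontal segment traversed lies within $r_j$'s $2\times 2$ square since both endpoints do (the left endpoint coincides in $x$ with $r^i_\ell$, which $r_j$ — being within distance $1$ in each coordinate of $r_i$ — also sees, as the relevant bound is on $|x|$ and $|y|$ separately). So $r_i$ keeps all its neighbors, and symmetrically, since only rightmost robots move and they move left, no robot is abandoned by a mover on its right side. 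The small vertical moves are by at most a constant ($\tfrac13\cdot$ the distance to the closest upper neighbor, or $\tfrac{1}{30}$ if there is none), again within the square. One must also check that multiple simultaneous moves in \ssync{} do not jointly break an edge: since every mover only contracts horizontal distances and the vertical moves are bounded by a third of an existing gap, a routine case check over which endpoints of an edge move confirms the edge survives.

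Third, I would argue that the final configuration has all robots on \emph{distinct} positions. The collision-avoidance rule (Lines~4--5) guarantees that whenever a robot would land on an occupied point it instead performs a positive vertical offset; the factor $\tfrac13$ is chosen so that even if the robot currently at $y^i_{\min}$ performs the symmetric move with an inverted local up-direction, the two targets differ — one ends at $\tfrac13 y^i_{\min}$, the other at $y^i_{\min} - \tfrac13\delta$ for some $\delta \le$ its own gap, and these cannot coincide. I would phrase this as an invariant: no two robots ever share a position, maintained across each epoch by the above disjointness of targets together with the fact that a robot's target never coincides with a stationary robot's position by construction. Combining the three parts: after $\mathcal{O}(n^2)$ epochs $|X(t)|=1$, so all robots lie on one vertical line; connectivity held throughout; and positions are pairwise distinct.

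The main obstacle I anticipate is the connectivity/collision bookkeeping under the \ssync{} scheduler when several robots move in the same round — in particular ensuring that a robot which is ``rightmost but not leftmost'' this round does not become disconnected because its unique left-neighbor $r^i_\ell$ also moved, or because an adversarial subset activates so that progress stalls. Handling this cleanly will likely require arguing that the set of robots on $x_{\max}$ only shrinks (none re-enters it, since moving always goes strictly left and nothing creates the coordinate $x_{\max}$ anew), so fairness forces it to empty within $\mathcal{O}(n)$ epochs regardless of the adversary's choices. The vertical-offset collision analysis is fiddly but elementary; the genuinely delicate point is making the ``strict decrease of $|X|$ every $\mathcal{O}(n)$ epochs'' argument watertight in the presence of adversarial, partial activation.
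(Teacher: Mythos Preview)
Your approach is correct and essentially identical to the paper's: both track the set of occupied $x$-coordinates, observe that no new ones are ever created and that the globally rightmost one is vacated within $\mathcal{O}(n)$ epochs (since in each epoch at least one robot on $x_{\max}$ is activated and moves left), yielding $\mathcal{O}(n^2)$ epochs by induction. The paper dispatches connectivity and collision-freeness in one sentence (``follow from the algorithm's description''), whereas you spell these out more carefully; as a minor correction, note that the collision-avoidance move in Line~5 already includes the horizontal jump to $x^{i}_\ell(t)$, so there is no separate ``extra epoch'' before the horizontal move fires.
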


Now, we can assume that the first phase is completed, and thus all robots are located on the same vertical line.
W.l.o.g., we rename the robots such that $y_1(t) \leq y_2(t) \leq ... \leq y_n(t)$.
Moreover, define $w_1(t) = 1$ and $w_i(t) = y_i(t) - y_{i-1}(t)$ for $2 \leq i \leq n$.
In addition, define $z_i(t) = (w_i(t)-w_1(t))$.
The algorithm is designed such that $\lim_{t \rightarrow \infty} w_i(t) = 1$ for all $i$.
To analyze this behavior, we consider the following function:
$\Phi(t) = \sum_{i=2}^{n} z_i(t)^2.$
The function $\Phi(t)$ is also known as the \emph{sample variance} \cite{DBLP:journals/tac/NedicOOT09}.
The name comes from a relation to time inhomogeneous Markov chains.
Although the algorithm is deterministic, the behavior of the vectors $w_i(t)$ can be interpreted as a time inhomogeneous Markov Chain.
The main course of our analysis is based on \cite{DBLP:journals/tac/NedicOOT09}, where the authors analyzed a similar behavior in the context of the distributed averaging consensus problem.
In this problem, there are $n$ agents, each having a numerical opinion.
Every round, an agent gets to know some other opinions and updates its opinion to the average.
Our application has one important difference: the values $w_i(t)$ do not average but converge to the fixed value $w_1(t)$.
Hence, many parts of the proof in \cite{DBLP:journals/tac/NedicOOT09} have to be reworked and adapted to our application.
First, we derive a bound on the change of $\Phi(t)$ between two epochs.
Define $w_{\pi_1}(t_{e_k}), w_{\pi_2}(t_{e_k}),$ $\dots,  w_{\pi_n}(t_{e_k})$ to be the values $w_i(t_{e_k})$ sorted from largest to smallest with ties broken arbitrarily.

\begin{restatable}{lemma}{thirdLemma}\label{lemma:sortedBound}
	For any epoch $k$,
	$\Phi(t_{e_k}) - \Phi(t_{e_{k+1}}) \geq \frac{1}{4} \sum_{i=1}^{n-1} \left(w_{\pi_i}(t_{e_k})-w_{\pi_{i+1}}(t_{e_k})\right)^2$.
\end{restatable}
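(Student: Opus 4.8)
The plan is to track the potential $\Phi(t) = \sum_{i=2}^n z_i(t)^2$ with $z_i(t) = w_i(t) - 1$ across a single epoch and show it decreases by at least a quarter of the sum of squared gaps between consecutively-ranked gap-values. First I would fix an epoch $k$ and abbreviate $w_i = w_i(t_{e_k})$, $w_i' = w_i(t_{e_{k+1}})$. Since in Phase~2 every robot moves to the midpoint of its two nearest vertical neighbors (or, for the outer robots, to the midpoint involving the virtual robot one unit beyond), one checks that after any single activation each new gap $w_i'$ is a convex combination of old gaps $w_j$, and more strongly that the vector $(w_2', \dots, w_n')$ over a full epoch is obtained from $(w_2, \dots, w_n)$ by multiplication with a doubly-substochastic (indeed, with the boundary $w_1 \equiv 1$ fixed, appropriately normalized) matrix $A = A(t_{e_k})$ whose rows sum to $1$; this is exactly the ``time-inhomogeneous Markov chain'' interpretation. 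The key structural facts I would extract, following \cite{DBLP:journals/tac/NedicOOT09}, are: (i) $A$ is stochastic, (ii) each diagonal entry $A_{ii} \geq \tfrac12$ (a robot that is activated keeps weight $\tfrac12$ on the ``gap below'' / ``gap above'' split, and a robot that is not activated keeps weight $1$ on itself — one has to argue the midpoint rule, when unrolled over the robots activated in the epoch, never puts more than half its mass off-diagonal on any coordinate), and (iii) $A$ has a connectivity/irreducibility property inherited from the connectivity of the square graph on the vertical line.

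The heart of the argument is the sample-variance decrease inequality: for a stochastic matrix $A$ with $A_{ii} \geq \tfrac12$ acting on a vector $w$ (with one coordinate pinned to the fixed value $1$), one has
\[
\Phi(t_{e_k}) - \Phi(t_{e_{k+1}}) \;\geq\; \frac14 \sum_{i<j} A_{ij}\,(w_i - w_j)^2 .
\]
This is the inhomogeneous analogue of the standard ``variance drop'' for reversible averaging; I would prove it by writing $\Phi(t_{e_k}) - \Phi(t_{e_{k+1}}) = \sum_i z_i^2 - \sum_i (Az)_i^2$ (using that pinning $w_1 = 1$ means $z_1 = 0$ and $A$ fixes that coordinate), expanding $(Az)_i^2$ via Cauchy–Schwarz / Jensen against the row distribution $A_{i\cdot}$, and using $A_{ii} \geq \tfrac12$ to retain a clean $\tfrac14$ factor on the cross terms $A_{ij}(z_i - z_j)^2 = A_{ij}(w_i - w_j)^2$. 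Then, to pass from $\sum_{i<j} A_{ij}(w_i-w_j)^2$ to the claimed $\sum_{i=1}^{n-1}(w_{\pi_i} - w_{\pi_{i+1}})^2$ over the sorted permutation $\pi$, I would use the connectivity of $A$: the support graph of $A$ over the epoch is connected, so for each consecutive pair in the sorted order there is a path in $\mathrm{supp}(A)$ linking the two endpoints, and a telescoping + Cauchy–Schwarz argument (every edge has weight bounded below by a constant coming from $A_{ii} \geq \tfrac12$ and the bounded degree of the square-range graph) shows the edge sum dominates $\sum_i (w_{\pi_i} - w_{\pi_{i+1}})^2$ up to absorbing constants — here I must be careful that the constant stays exactly $\tfrac14$, which is why the lemma is stated with that precise factor and likely requires squeezing the midpoint structure rather than generic bounds.

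The main obstacle I anticipate is step (ii)/(iii): verifying that composing the per-round midpoint maps over all robots activated within one epoch still yields a single stochastic matrix with diagonal $\geq \tfrac12$ \emph{and} a connectivity property strong enough to move from arbitrary support-edges to the sorted-gap telescoping without losing the constant. Under \ssync{} an unboundedly large (but finite) number of rounds can occur in one epoch, so $A$ is a product of many elementary midpoint matrices; keeping the diagonal-$\tfrac12$ bound through such a product is the delicate point (each factor has diagonal $\geq \tfrac12$, but products of such matrices need not — one must use that the same coordinate is not ``averaged away'' too many times, exploiting that once a robot acts, the relevant neighbor gaps are themselves convex combinations that still retain $\tfrac12$ self-weight from the robot's own pre-move gap). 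I would isolate this as a separate structural claim about the epoch matrix before invoking the generic variance-drop inequality, mirroring the corresponding lemmas in \cite{DBLP:journals/tac/NedicOOT09} but reworked because here the boundary value $w_1$ is pinned rather than free to drift.
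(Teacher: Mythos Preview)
Your structural claim (ii) is false already for a \emph{single} round, and this breaks the whole matrix approach. In the paper's update rule (Case~1 of Lemma~\ref{lemma:potentialDifference}), when both $r_{i-1}$ and $r_i$ are active one has $w_i(t+1)=\tfrac12 w_{i-1}(t)+\tfrac12 w_{i+1}(t)$, so the diagonal entry on $w_i$ is $0$, not $\ge\tfrac12$. Consequently there is no hope of maintaining a diagonal lower bound through the epoch product, and the variance-drop inequality you state in the form $\Phi-\Phi'\ge\tfrac14\sum_{i<j}A_{ij}(w_i-w_j)^2$ is not available for the composed epoch matrix $A$. Your own caveat about products of diagonal-$\tfrac12$ matrices is exactly the failure mode, but the problem appears even before you take a product.

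The paper avoids this by never composing into a single epoch matrix. Instead it sums the exact per-round drops from Lemma~\ref{lemma:potentialDifference},
\[
\Phi(t_{e_k})-\Phi(t_{e_{k+1}})\;\ge\;\tfrac14\sum_{t=t_{e_k}}^{t_{e_{k+1}}}\sum_{i=2}^{n}\bigl(d_i^{-}(t)+d_i(t)+d_i^{+}(t)\bigr),
\]
and then uses a \emph{first-crossing-time} argument on the sorted sequence: for each cut index $\ell$, let $t_\ell$ be the first round in the epoch at which some $d$-term mixes a value from $\{w_{\pi_1},\dots,w_{\pi_\ell}\}$ with one from $\{w_{\pi_{\ell+1}},\dots,w_{\pi_n}\}$. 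Up to time $t_\ell$ the two sides have only averaged internally, so at that moment the two values being mixed still satisfy $w_{\pi_i}(t_\ell)\ge w_{\pi_\ell}(t_{e_k})$ and $w_{\pi_j}(t_\ell)\le w_{\pi_{\ell+1}}(t_{e_k})$, whence the corresponding $d$-term is at least $(w_{\pi_\ell}(t_{e_k})-w_{\pi_{\ell+1}}(t_{e_k}))^2$. Fairness of the scheduler guarantees every cut is crossed within the epoch, so each sorted gap is charged to some round's drop and the $\tfrac14$ survives exactly. Your proposed path-in-support-graph plus Cauchy--Schwarz route would instead pick up a path-length factor and could not deliver the stated constant.
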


Based on \Cref{lemma:sortedBound}, a lower bound on the relative change is derived.

\begin{restatable}{lemma}{fourthLemma}\label{lemma:potentialRatio}
	Suppose that $\Phi(t_{e_k}) > 0$.
	Then, $\frac{\Phi(t_{e_k}) -\Phi(t_{e_{k+1}})}{\Phi(t_{e_k})} \geq \frac{1}{8 n^2}$.

\end{restatable}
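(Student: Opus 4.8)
The plan is to chain \Cref{lemma:sortedBound} with a matching \emph{upper} bound on $\Phi(t_{e_k})$, expressed in terms of the very same sorted consecutive differences $w_{\pi_i}(t_{e_k})-w_{\pi_{i+1}}(t_{e_k})$, so that the two inequalities combine into a multiplicative decrease of the potential per epoch.

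First I would rewrite $\Phi$ in a form suited to the comparison. Since $z_i(t)=w_i(t)-w_1(t)$ and the $i=1$ summand $\big(w_1(t)-w_1(t)\big)^2$ vanishes, we have $\Phi(t)=\sum_{i=1}^{n}\big(w_i(t)-w_1(t)\big)^2$, and the multiset $\{w_i(t_{e_k})\}_{i=1}^{n}$ is exactly $\{w_{\pi_i}(t_{e_k})\}_{i=1}^{n}$. Writing $v_i:=w_{\pi_i}(t_{e_k})$, so that $v_1\ge v_2\ge\dots\ge v_n$, and letting $m$ be an index with $v_m=w_1(t_{e_k})$ (such an $m$ exists because $w_1(t_{e_k})$ is one of the $w_i(t_{e_k})$), this gives $\Phi(t_{e_k})=\sum_{j=1}^{n}(v_j-v_m)^2$. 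Now each $v_j$ lies in $[v_n,v_1]$, hence $|v_j-v_m|\le v_1-v_n$ and therefore $\Phi(t_{e_k})\le n\,(v_1-v_n)^2$. Writing $v_1-v_n=\sum_{i=1}^{n-1}(v_i-v_{i+1})$ as a telescoping sum of $n-1$ nonnegative terms and applying Cauchy--Schwarz yields $(v_1-v_n)^2\le (n-1)\sum_{i=1}^{n-1}(v_i-v_{i+1})^2$. Combining the last two estimates gives $\Phi(t_{e_k})\le n(n-1)\sum_{i=1}^{n-1}\big(w_{\pi_i}(t_{e_k})-w_{\pi_{i+1}}(t_{e_k})\big)^2\le n^2\sum_{i=1}^{n-1}\big(w_{\pi_i}(t_{e_k})-w_{\pi_{i+1}}(t_{e_k})\big)^2$, i.e. $\sum_{i=1}^{n-1}\big(w_{\pi_i}(t_{e_k})-w_{\pi_{i+1}}(t_{e_k})\big)^2\ge \Phi(t_{e_k})/n^2$.

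Plugging this into \Cref{lemma:sortedBound} yields $\Phi(t_{e_k})-\Phi(t_{e_{k+1}})\ge \frac14\cdot \Phi(t_{e_k})/n^2$, and dividing by $\Phi(t_{e_k})>0$ gives the claim with room to spare (one even gets $\tfrac{1}{4n^2}\ge\tfrac{1}{8n^2}$; the weaker constant presumably matches later estimates). I do not expect a real obstacle here: the only points needing a little care are the rewriting of $\Phi$ over the sorted multiset of the $w_i$, the observation that $w_1(t_{e_k})$ does appear among those sorted values so the anchor index $m$ is well defined, and the routine Cauchy--Schwarz step on the telescoping sum $v_1-v_n$.
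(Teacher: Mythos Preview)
Your argument is correct and in fact yields the stronger bound $\tfrac{1}{4n^2}$. The route, however, differs from the paper's. After invoking \Cref{lemma:sortedBound}, the paper exploits the scale- and shift-invariance of the ratio to normalize to $\sum_\ell w_{\pi_\ell}=0$ and $\sum_\ell(w_{\pi_\ell}-w_{\pi_1})^2=1$, then solves the resulting constrained minimization $\min \sum_i(w_i-w_{i+1})^2$ explicitly (via the substitution $u_i=w_i-w_{i+1}$), obtaining the constant $\tfrac{1}{8n^2}$. You instead bound $\Phi(t_{e_k})$ from above by the crude range estimate $\Phi\le n(v_1-v_n)^2$ together with Cauchy--Schwarz on the telescoping sum $v_1-v_n=\sum_i(v_i-v_{i+1})$. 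Your approach is more elementary and self-contained---no optimization, no normalization, no use of the fact that $w_1=1$ is actually the \emph{largest} value (you only need that it occurs among the $v_j$)---and it happens to give a constant that is a factor~$2$ better. The paper's optimization viewpoint, on the other hand, is closer in spirit to the averaging-consensus literature it cites and makes the extremal configuration (all $u_i$ equal) explicit.
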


A combination of \Cref{lemma:sortedBound,lemma:potentialRatio} yields the statement of \Cref{theorem:mainTheoremOblot}.

\section{\Luminous/ Algorithms} \label{section:luminousAlgorithms}

In this section, we derive an algorithm that solves \MaxLineFormation/ \emph{exactly} with the help of the \Luminous/ model under the \fsync{} scheduler (\Cref{section:lumiFsync}).
The algorithm achieves an asymptotically optimal runtime of $\Theta(n)$ rounds.
Additionally, in \Cref{section:lumiSSync}, we give an intuition about how a synchronization technique in combination with the \Oblot/ (\Cref{section:oblot}) and the \fsync{} algorithm (\Cref{section:lumiFsync}) is able to solve \MaxLineFormation/ exactly under the \ssync{} scheduler in $\mathcal{O}(n^2)$ epochs.

\subsection{Fast Algorithm for the \fsync{} scheduler} \label{section:lumiFsync}

The algorithm(\Cref{algorithm:luminousFsync}) also works in two phases: In the first phase, all robots are arranged on a straight line parallel to the $y$-axis, and in the second phase, the line is stretched until it has maximal length.
Compared to the \Oblot/ algorithm (\Cref{section:oblot}), the algorithm uses different core ideas in both phases.
In the first phase, all robots (instead of only the rightmost ones of their neighborhood) move to the left without losing connectivity -- this is necessary to achieve a linear speedup of the first phase.
The second phase makes use of lights to implement a sequential movement denoted as a \emph{run} inspired by  \cite{DBLP:conf/ipps/AbshoffC0JH16,DBLP:conf/sss/CastenowH0KH20,DBLP:conf/spaa/Cord-Landwehr0J16,DBLP:journals/tcs/KutylowskiH09}.
For the sake of clarity and due to space constraints, we present a variant of the algorithm in which the robots still move to the left during the second phase.
More precisely, after a linear number of rounds, the first phase ends, and the robots form a line parallel to the $y$-axis that continuously moves a distance of $1$ to the left.
Simultaneously, the robots stretch the line until it has maximal length.
However, the line structure is always maintained such that \MaxLineFormation/ is solved finally and remains solved (although the line keeps moving to the left).
Moving continuously to the left can be removed from the algorithm with some additional effort; an intuition is given in \Cref{section:fsyncAlgoNoMoving}.

\textbf{Phase 1:}
\emph{All} robots move as far as possible to the left : each robot $r_i$ moves to the $x$-coordinate $x^{i}_r(t)-1$.
Again, collision avoidance has to be ensured.
While moving to $x^{i}_r(t)-1$, the robot $r_i$ could collide with every robot located on its local $x$-axis (since these robots potentially also want to move to the $x$-coordinate $x^i_r(t) -1$).
The robot $r_i$ executes a vertical movement to avoid a collision.
Based on the ordering of neighbors on the local $x$-axis, $r_i$ gets assigned a unique $y$-coordinate as follows:
Define  $Y_i(t) = \{r_j \in N_i(t) |\, y^i_j(t) = 0 \}$ and let $x_{\pi_1}(t), x_{\pi_2}(t), \dots, x_{\pi_{|Y_i(t)|}}(t)$ be the $x$-coordinates of robots in $Y_i(t)$ in increasing order.
Additionally, let $k_i(t) \in \{1, \dots, |Y_i(t)|\}$ denote the position of $x_i(t)$ in the sorted sequence $x_{\pi_1}(t), x_{\pi_2}(t), \dots, x_{\pi_{|Y_i(t)|}}(t)$.
Furthermore, define $y^i_{min}(t)$ to be the minimal $y^{i}_j(t)$ of all $y^{i}_j(t) > 0$ of robots $r_j \in N_i(t)$.
If no such robot exists, define $y^{i}_{min}(t) = \frac{1}{10}$ (any constant of size at most $1$ works).
Then, $r_i$ gets assigned the $y$-coordinate $\frac{k_i(t)-1}{|Y_i(t)|} \cdot \frac{1}{3} y^{i}_{min}(t)$.
The factor $\frac{k_i(t)-1}{|Y_i(t)|}$ is unique for every robot on the local $x$-axis and the factor of $\frac{1}{3}$ is needed to prevent a collision with other robots that execute the same collision avoidance.

\textbf{Phase 2:}
For the second phase, lights are used.
Assume w.l.o.g.\ that the robots are ordered along the $y$-axis, i.e., $y_1(t) \geq y_2(t)  \geq \dots \geq y_n(t)$.
The core idea is a sequential movement started at $r_1$ and $r_n$ implemented with lights.
Such a movement is called a \emph{run} \cite{DBLP:conf/ipps/AbshoffC0JH16,DBLP:conf/sss/CastenowH0KH20,DBLP:conf/spaa/Cord-Landwehr0J16,DBLP:journals/tcs/KutylowskiH09}.
Assume that a run starts in round $t$.
Then, only $r_1$ and $r_n$ move.
In round $t+1$, only $r_2$ and $r_{n-1}$ move and so on.
A new run is started every three rounds.

Runs are realized with lights as follows.
The first required light $\ell_c$ with color set $C_c = \{0,1,2\}$ is used as a round counter.
Every round, all robots increment their light $\ell_c$.
Whenever $\ell_c = 2$ holds, both $r_1$ and $r_n$ activate a light $\ell_{mov}$ with $C_{mov} = \{0,1\}$ (the light is either active or inactive).
An active light $\ell_{mov}$ enables the corresponding robot to move.
Thus, in the next round, it holds $\ell_c = 0$ and both $r_1$ and $r_n$ detect an active light $\ell_{mov}$.
Both $r_1$ and $r_n$ now execute a movement (see below).
Additionally, they deactivate the light $\ell_{mov}$ and activate a light $\ell_{prev}$ with color set $C_{prev} = \{0,1\}$ to remember the movement.
Simultaneously, the robots $r_2$ and $r_{n-1}$ observe a neighbor on the $y$-axis with active light $\ell_{mov}$ ($r_1$ and $r_n$).
Additionally, neither $r_2$ nor $r_{n-1}$ has activated $\ell_{prev}$.
Hence, the robots activate $\ell_{mov}$ to continue the run.
In the next round, $r_1$ and $r_n$ observe a neighbor with active light $\ell_{mov}$ but do not activate their own light $\ell_{mov}$ since $\ell_{prev}$ is active.
Doing so ensures that the run keeps a fixed direction along the line.

Robots that have a run (the light $\ell_{mov}$ is active) move as follows.
In case $r_1$ has a run and not $r_2$ ($n > 2$), $r_1$ moves in distance $1$ vertically away from $r_2$.
More formally, $p_1(t+1) = (x^{1}_r(t) -1, -\frac{y^{1}_2(t)}{|y^{1}_2(t)|})$ (remember that in this variant the robots move also in phase $2$ to the left).
Similar, $r_n$ moves away from $r_{n-1}$ in distance $1$.
In case a robot $r_i$ has a run that came from $r_{i-1}$ ($r_{i-1}$ has activated $\ell_{prev}$ and $r_i$ has activated $\ell_{mov}$) and $r_{i+1}$ does not have a run, $r_i$ moves in vertical distance $1$ away from $r{+1}$: $p_i(t+1) = (x^{i}_r(t)-1, -\frac{y^{i}_{i+1}(t)}{|y^{i}_{i+1}(t)|})$.
Lastly, in case two neighboring robots have a run, for instance $r_i$ and $r_{i+1}$ have activated $\ell_{mov}$ both move only a vertical distance of $\half$ away from each other: $p_i(t+1) = (x^{i}_r(t)-1, -\frac{y^{i}_{i+1}(t)}{2|y^{i}_{i+1}(t)|})$.
The handling of the lights and the corresponding movement is depicted in \Cref{fig:secondPhaseFsync}.

\begin{breakablealgorithm}
	\caption{\Luminous/ Algorithm \fsync{} executed from the local view of $r_i$} \label{algorithm:luminousFsync} \begin{algorithmic}[1]
		\If{all neighbors are located on the $y$-axis}
		\If{$r_i = r^{i}_+(t)$ or $r_i = r^{i}_-(t)$}
		\If{$\ell_{mov} = 1$} \Comment{$\ell_{mov} = 1$ implies $\ell_c = 0$}
		\State $\ell_{mov} \gets 0; \ell_{prev} \gets 1$
		\State $r_c \gets $ closest neighbor on $y$-axis
		\If{$r_c$ has activated $\ell_{mov}$} \Comment{Special case $n=2$}
		\State $p_i(t+1) \gets  (x^{i}_r(t)-1, - \frac{1}{2 \cdot |y_c(t)|} \cdot y_c(t))$ \Comment{Move distance of $\half$}
		\Else
		\State $p_i(t+1) \gets  (x^{i}_r(t)-1, - \frac{1}{|y_c(t)|} \cdot y_c(t))$ \Comment{Move distance of $1$ }
		\EndIf
		\Else
		\If{$\ell_c = 2$}
		\State $\ell_{mov} \gets 1$;
		\EndIf
		\State  $p_i(t+1) \gets (x^{i}_r(t)-1, 0))$
		\EndIf
		\Else
		\If{$\ell_{mov}=1$ }
		\State $\ell_{mov} \gets 0$, $\ell_{prev} \gets 1$
		\If{closest neighbor above and below have set $\ell_{mov} = 0$}
		\State $r_c \gets $ closest neighbor with $\ell_{prev} = 0$
		\State $p_i(t+1) \gets  (x^{i}_r(t)-1, - \frac{1}{|y_c(t)|} \cdot y_c(t))$
		\Else
		\State $r_c \gets$ neighbor with $\ell_{mov} = 1$
		\State $p_i(t+1) \gets  (x^{i}_r(t)-1, - \frac{1}{2 \cdot |y_c(t)|} \cdot y_c(t))$
		\EndIf
		\Else
		\If{closest neighbor above or below has set $\ell_{mov} = 1$}
		\If{$\ell_{prev} = 0$}
		\State $\ell_{mov} \gets 1$
		\Else
		\State $\ell_{prev} \gets 0$
		\EndIf
		\EndIf
		\State  $p_i(t+1) \gets (x^{i}_r(t)-1, 0))$
		\EndIf
		\EndIf
		\Else
		\State $\{\ell_{mov},\ell_{prev}\} \gets 0$ \Comment{Deactivate lights if neighborhood is not in phase $2$}
		\If{$|Y_i(t)| > 0$}
		\State $p_i(t+1) \gets (x^{i}_r(t)-1, \frac{k_i(t)-1}{|Y_i(t)|} \cdot \frac{1}{3} y^{i}_{min}(t))$
		\Else
		\State $p_i(t+1) \gets (x^{i}_r(t)-1, 0)$
		\EndIf
		\EndIf
		\State $\ell_c \gets \ell_c+1$
		\State $r_i$ moves to $p_i(t+1)$
	\end{algorithmic}
\end{breakablealgorithm}


\begin{figure}[htbp]
	\begin{minipage}[c]{0.67\textwidth}
		\centering
		\includegraphics[width=\textwidth]{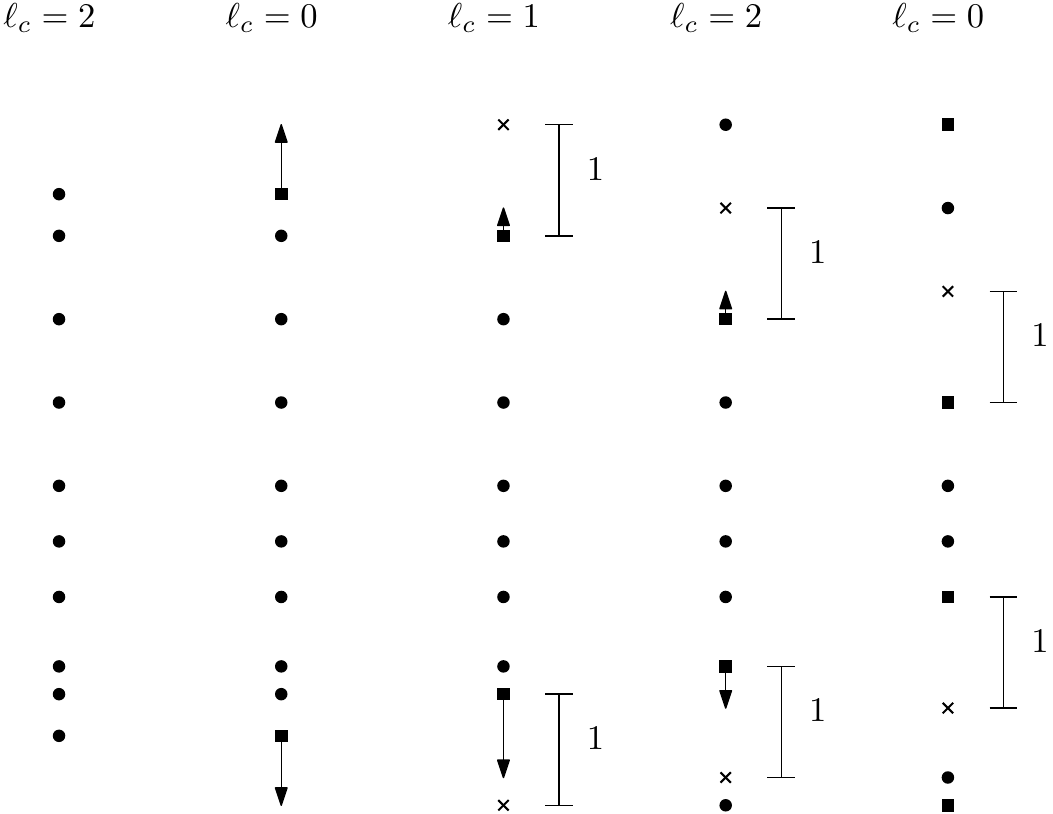}
	\end{minipage}\hfill
	\begin{minipage}[c]{0.3\textwidth}
		\caption{A square (cross) depicts a robot with active light $\ell_{mov}$ ($\ell_{prev}$).  Time proceeds from left to right. In the first line it holds $\ell_c = 2$ for all robots. In this round the top most and the bottom most robot activate $\ell_{mov}$.  In the next round ($\ell_c = 0$), these two robots move in distance $1$ of their neighbor (depicted by an arrow) and additionally deactivate $\ell_{mov}$ while activating $\ell_{prev}$. Afterward ($\ell_c = 1$) the next two robots with active light  $\ell_{mov}$ move in distance $1$ of their next neighbor.}
		\label{fig:secondPhaseFsync}
	\end{minipage}
\end{figure}

\textbf{Analysis:}
In the analysis (\Cref{section:appendixLumiProof}), it is proven that after a linear number of rounds, the first phase ends (and thus, the robots have formed a line parallel to the $y$-axis).
As a part of the proof, it is proven that no collisions occur, and the connectivity is always maintained.
Moreover, it is proven that as soon as phase $2$ is reached, the robots remain in phase $2$ (following from the algorithm's description).
Afterward, the runs of the second phase are analyzed.
The first run ensures that after $\mathcal{O}(n)$ rounds, the robots $r_{\lfloor n/2 \rfloor}$ and $r_{\lfloor n/2 \rfloor +1}$ have a vertical distance of $1$.
The second run ensures the same both  for $r_{\lfloor n/2 \rfloor -1}$ and $r_{\lfloor n/2 \rfloor}$ as well as  $r_{\lfloor n/2 \rfloor +1}$ and $r_{\lfloor n/2 \rfloor+2}$.
Hence, after $\mathcal{O}(n)$ runs, the line reaches maximal length.
Since each $3$ rounds, a new run is started, and each run proceeds one robot per round, the linear runtime follows.

\begin{restatable}{lemma}{sixthLemma}\label{lemma:fsyncCollinear}
	After $\mathcal{O}(n)$ epochs, all robots are located on distinct positions on the same vertical line parallel to the $y$-axis.
	Moreover, the configuration is connected.
\end{restatable}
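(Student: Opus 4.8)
The plan is to analyze Phase~1 of \Cref{algorithm:luminousFsync} and show it terminates within $\mathcal{O}(n)$ epochs (rounds, since we are in \fsync{}) while preserving connectivity and collision-freeness. First I would set up the invariants to maintain across rounds: (i) the square connectivity graph stays connected, (ii) no two robots ever occupy the same position, and (iii) a progress measure strictly decreases. For connectivity, observe that in Phase~1 every robot $r_i$ moves to $x$-coordinate $x^i_r(t)-1$, i.e., the $x$-coordinate of its rightmost neighbor minus $1$. The key geometric observation is that if $r_i$ and $r_j$ are neighbors with $x_i(t)\le x_j(t)$, then after the move their horizontal separation does not increase beyond $1$: $r_i$ moves to at least $x_j(t)-1$ (since $r_j$ is a candidate rightmost neighbor of $r_i$, or some robot further right is), and $r_j$ moves left by at least as much relative terms, so $|x_i(t+1)-x_j(t+1)|\le 1$; the vertical perturbations are bounded by $\tfrac13 y^i_{min}(t)\le \tfrac{1}{30}<1$, so the vertical separation also stays below the connectivity bound. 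This needs to be checked carefully, including the case where the rightmost neighbor of $r_i$ and of $r_j$ differ.

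Next I would handle collision avoidance. Two robots can only collide if they target the same point. Robots with distinct $x$-targets never collide, so the only danger is among robots sharing a local $x$-axis, i.e., the set $Y_i(t)$. The algorithm assigns robot $r_i$ the vertical offset $\tfrac{k_i(t)-1}{|Y_i(t)|}\cdot\tfrac13 y^i_{min}(t)$, where $k_i(t)$ is $r_i$'s rank among the collinear robots. I would argue: (a) two robots on the same horizontal line that remain on the same horizontal line after moving receive distinct offsets because $k_i(t)\ne k_j(t)$ and $y^i_{min}(t)=y^j_{min}(t)$ and $|Y_i(t)|=|Y_j(t)|$ (they see the same set); (b) robots from different horizontal lines cannot collide because the total vertical displacement of any robot is strictly less than $\tfrac13 y^i_{min}(t)$, which is less than the gap to the next horizontal line above — here the factor $\tfrac13$ and the definition of $y^i_{min}$ as the minimal positive vertical coordinate of a visible robot are exactly what guarantees the displaced robot stays strictly below that neighbor's (possibly displaced) position. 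The delicate point, as the text itself flags, is that the robot at height $y^i_{min}$ above $r_i$ may have an inverted notion of up/down and perform its own collision move; the $\tfrac13$ factor ensures the two moved positions remain separated. I would make this a short lemma.

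For termination, I would use a potential function measuring how far the configuration is from being a single vertical line, for instance $\Psi(t)=x_{\max}(t)-x_{\min}(t)$, the horizontal extent of the swarm. In each round, every robot with a strictly-left neighbor moves strictly left, and the rightmost robots move left by a full unit (their rightmost neighbor is themselves so $x^i_r(t)-1 = x_i(t)-1$ — wait, actually if $r_i$ is globally rightmost it has no right neighbor, so $x^i_r(t)=0$ in local coordinates means it does move left by $1$). So $x_{\max}$ decreases by $1$ each round until all robots are collinear in $x$; meanwhile $x_{\min}$ cannot decrease faster than... this requires care, since $x_{\min}$ also moves left. The right invariant is to track the extent relative to the leftmost robot, or better: once two robots share the same $x$-coordinate they may separate again only by the tiny vertical perturbation, not horizontally — actually they can re-separate horizontally if their rightmost neighbors differ. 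I would instead argue that the quantity $x_{\max}(t)-x_{\min}(t)$ is bounded by $n-1$ initially (by connectivity, the horizontal span of $n$ robots connected via unit-square edges is at most $n-1$), and that $x_{\max}(t)$ is non-increasing and decreases by at least $1$ in any round where the configuration is not yet $x$-collinear, while $x_{\min}(t)$ is also non-increasing but the gap closes — the cleanest formulation is that after at most $n-1$ rounds all robots have equal $x$-coordinate, and once equal they stay equal (every robot's rightmost neighbor is at the common $x$-coordinate, so every robot moves to that coordinate minus $1$, collectively shifting the line left by $1$ while staying collinear). Combined with the collision and connectivity invariants, this yields the lemma. The main obstacle I anticipate is the bookkeeping in the connectivity argument when neighbors have different rightmost neighbors and different up/down orientations simultaneously — i.e., showing that the vertical perturbations of two neighbors, which may point in opposite global directions and have different magnitudes, never push them more than distance $1$ apart vertically while the horizontal coordinates are being equalized; handling the interaction of the two phases at the boundary round (when a robot's neighborhood first becomes collinear) is a secondary subtlety, resolved by noting the algorithm only switches a robot to Phase~2 behavior when \emph{all} its neighbors are already on its $y$-axis.
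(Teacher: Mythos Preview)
Your connectivity and collision-avoidance discussion is more detailed than the paper's (which simply states these ``follow directly from the algorithm description''), and the outline is reasonable, though the claim that two robots on the same horizontal line necessarily see the same set $Y_i(t)$ and hence the same $|Y_i(t)|$ and $y^i_{min}(t)$ is not true in general under limited visibility and would need repair.

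The real gap is in the termination argument. The potential $\Psi(t)=x_{\max}(t)-x_{\min}(t)$ does \emph{not} strictly decrease in general. Consider robots at $(0,0),(0,1),\dots,(0,n-1)$ together with one robot at $(1,n-1)$. In round~$t$ every robot at $(0,i)$ with $i\le n-2$ has rightmost neighbor at $x=0$ and hence moves to $x=-1$, while the two robots at $y=n-1$ move to $x=0$. Thus $x_{\max}$ and $x_{\min}$ both drop by exactly $1$ and $\Psi$ stays equal to $1$; this persists for $\Theta(n)$ rounds. Your assertion that ``after at most $n-1$ rounds all robots have equal $x$-coordinate'' therefore has no supporting argument. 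The paper uses a different progress measure: $k_{\max}(t)$, the number of robots with $x$-coordinate in the half-open interval $(x_{\max}(t)-1,\,x_{\max}(t)]$. All of these robots land in $(x_{\max}(t)-2,\,x_{\max}(t)-1]$ next round, and by connectedness at least one robot that was already in that interval has a rightmost neighbor in the top strip and hence stays there too, giving $k_{\max}(t+1)\ge k_{\max}(t)+1$. This yields all robots in a unit-width strip after $\mathcal{O}(n)$ rounds.

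You are also missing the second stage of the paper's argument: once all robots lie in a unit-width $x$-strip they are not yet collinear, and the paper uses a separate propagation argument (starting from the globally rightmost robot $r_{\max}$, whose entire neighborhood becomes collinear in one round, then spreading outward via the topmost and bottommost robots of that neighborhood) to reach exact collinearity in a further $\mathcal{O}(n)$ rounds. Your plan jumps directly from ``gap closes'' to ``all equal'' without this step.
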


\begin{restatable}{theorem}{thirdTheorem} \label{theorem:fsyncLumi}
	After $\mathcal{O}(n)$ epochs, the robots have solved \MaxLineFormation/.
\end{restatable}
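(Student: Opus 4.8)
The plan is to prove Theorem~\ref{theorem:fsyncLumi} by combining Lemma~\ref{lemma:fsyncCollinear} with a careful analysis of the second phase (the \emph{runs}). By Lemma~\ref{lemma:fsyncCollinear}, after $\mathcal{O}(n)$ epochs all robots lie on a common vertical line, the configuration is connected, and no collisions occur; since the algorithm's test ``all neighbors are located on the $y$-axis'' is stable once satisfied (moving left uniformly preserves collinearity, and the vertical run-movements keep robots on a single line up to the $\le 1$ displacements that never break the ordering or connectivity), we may assume from some round $t'$ onward that every robot executes the phase-2 branch. It then remains to show that starting from $t'$, after $\mathcal{O}(n)$ further rounds all consecutive vertical gaps equal $1$, i.e.\ $y_i - y_{i+1} = 1$ for all $i$, which is exactly the \MaxLineFormation/ target for a $y$-parallel line of length $(n-1)$.

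The heart of the argument is an invariant describing what a run accomplishes. First I would establish the \textbf{light-mechanics invariant}: when $\ell_c = 2$ the endpoints $r_1, r_n$ set $\ell_{mov}$; the light $\ell_{mov}$ then propagates inward one robot per round (a robot adopts $\ell_{mov}$ exactly when a $y$-neighbor has it and its own $\ell_{prev}$ is $0$, and $\ell_{prev}$ is reset one round after being set), so a run is a well-defined token travelling from each end toward the middle, meeting at $r_{\lfloor n/2\rfloor}, r_{\lfloor n/2\rfloor+1}$, and a fresh run is launched every three rounds. Because a robot only moves vertically when it carries the token and its inward neighbor does not (or both carry it, the ``meeting'' case handled by the $\half$-factor), the motion within one run is: the robot steps to vertical distance $1$ (resp.\ $\half$ in the meeting case) from the neighbor on the side the token has \emph{not} yet reached. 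I would then prove the \textbf{gap invariant}: after the $j$-th run has completed, the $j$ outermost gaps on each side, i.e.\ $w_2=y_1-y_2,\dots,w_{j+1}$ and $w_n,\dots,w_{n-j+1}$, are all exactly $1$, and moreover no previously-fixed gap is ever disturbed by a later run (a token passing a robot whose outer gap is already $1$ leaves it at distance $1$, hence no change). This is an induction on $j$: the base case is the first run, which moves $r_1$ to distance $1$ from $r_2$, then $r_2$ to distance $1$ from $r_3$, and so on, so that when the tokens meet the entire chain has $w_2 = w_3 = \dots = w_n = 1$ — wait, more carefully, the first run only fixes gaps it passes over before the tokens collide, so it fixes $w_2,\dots,w_{\lfloor n/2\rfloor}$ from the top and $w_n,\dots,w_{\lfloor n/2\rfloor+2}$ from the bottom, leaving possibly $w_{\lfloor n/2\rfloor+1}$; but the meeting rule (both move $\half$) sets that middle gap to $1$ as well. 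The inductive step is then mostly bookkeeping: the $(j{+}1)$-st run re-traverses already-correct gaps without changing them and needs no new correctness because after the first run everything is already done.

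Actually, the cleaner framing — and the one I would commit to — is: the \emph{first} run alone makes all gaps equal to $1$, because each robot, when the token reaches it, is pushed to distance exactly $1$ from its not-yet-processed neighbor, and this is consistent since the not-yet-processed side is untouched when the token arrives; the collision of the two tokens in the middle is resolved by the $\half$-$\half$ symmetric rule, which, combined with the fact that everything outward of the two meeting robots is already at distance $1$, yields total length $(n-1)$. Subsequent runs are harmless (idempotent on a correct configuration). Hence \MaxLineFormation/ is solved the round the first run finishes. For the \textbf{timing}: phase $1$ costs $\mathcal{O}(n)$ rounds by Lemma~\ref{lemma:fsyncCollinear}; the first run is launched within $3$ rounds of phase $2$ beginning and advances one robot per round, so it reaches the middle after $\lceil n/2\rceil$ rounds; thus $\mathcal{O}(n)$ rounds total, and in \fsync{} epochs equal rounds, giving the $\mathcal{O}(n)$ bound. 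Matching the claimed $\Theta(n)$, the lower bound $\Omega(n)$ is immediate: an endpoint robot must travel distance $\Theta(n)$ (the final line has length $n-1$ while the initial configuration is connected, hence has diameter $\mathcal{O}(1)$ per coordinate locally, so the extreme robots move $\Omega(n)$), and with a constant viewing range each robot learns about distant structure only at speed $\mathcal{O}(1)$ per round.

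The main obstacle I anticipate is the phase-$2$ \textbf{collision- and connectivity-preservation} while the line is simultaneously drifting left: one must check that the vertical moves of distance $1$ (or $\half$) never cause two robots to coincide and never push two $y$-consecutive robots more than $1$ apart. The drift-left is uniform so it is irrelevant to vertical collisions; the real content is that when the token reaches $r_i$, the gap $w_i$ toward the already-processed side is still $\le 1$ (so $r_i$ does not leave the $[\,\text{dist}\le 1\,]$ range of $r_{i-1}$) and the gap toward the unprocessed side — which $r_i$ is about to \emph{set} to $1$ — was $\ge$ something bounded below away from $0$ so no overshoot collision occurs. Handling the one-round offset between ``$r_{i-1}$ moved'' and ``$r_i$ moves'' (during which $r_{i-1}$ is already at distance $1$ above $r_i$'s old position while $r_i$ has not yet moved) is where the $\ell_{prev}$ bookkeeping matters, and verifying that the invariant ``every robot sees its relevant neighbors on the $y$-axis throughout'' survives these transient states is the delicate part; I would isolate it as a short lemma about a single run before assembling the global bound.
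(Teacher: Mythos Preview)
Your plan correctly identifies the overall structure --- use Lemma~\ref{lemma:fsyncCollinear} for phase~1, then analyze the runs in phase~2 --- but the central claim you commit to for phase~2 is wrong.

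You assert that the first run alone sets every gap to $1$. It does not. Follow the top token: when $r_1$ moves to distance $1$ from $r_2$, the gap $g_1 := y_1 - y_2$ becomes $1$. In the next round $r_2$ carries the token and moves to distance $1$ from $r_3$, setting $g_2 = 1$; but $r_2$'s upward motion changes $g_1$ back to the \emph{old} value of $g_2$, which need not be $1$. Inductively, as the token advances, the gap it has just set to $1$ is immediately disturbed by the next robot's move. The only gap that survives at value $1$ once the two tokens annihilate is the \emph{middle} one. Your sentence ``everything outward of the two meeting robots is already at distance $1$'' is precisely the step that fails.

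The paper's argument uses the opposite invariant: after the $j$-th run, the $j$ gaps closest to the \emph{middle} equal $1$. The mechanism is exactly the observation you made in your first framing (before discarding it), but with the direction reversed: once a robot's \emph{inward} neighbor is already at distance $1$, that robot's token-move is a no-op, so the gap just set behind it is preserved. Concretely, in run~$2$ the token reaches $r_{\lfloor n/2\rfloor-1}$ and sets $g_{\lfloor n/2\rfloor-1} = 1$; then $r_{\lfloor n/2\rfloor}$ receives the token, but since $g_{\lfloor n/2\rfloor}$ is already $1$ from run~$1$, $r_{\lfloor n/2\rfloor}$ does not move and $g_{\lfloor n/2\rfloor-1}$ stays at $1$. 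Hence $\Theta(n)$ runs are needed rather than one; the $\mathcal{O}(n)$ bound still holds because runs are pipelined three rounds apart. Your ``$j$ outermost gaps'' invariant is the mirror image of the correct one and already fails for $j=1$.

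A minor aside: your $\Omega(n)$ justification is also off --- a connected configuration need not have bounded diameter. The clean lower bound fixes a specific initial configuration (say all robots inside a unit square) and notes that an endpoint must travel distance $\Omega(n)$ to reach the final line.
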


The algorithm can be implemented in the classical \Luminous/ model with a single light having $9$ colors.
Observe that no robot ever activates the lights $\ell_{prev}$ and $\ell_{mov}$ at the same time.
Thus, for each robot, it always holds: either $\ell_{prev}$, $\ell_{mov}$ or none of both are activated.
Additionally, each robot counts rounds with the light $\ell_c$ requiring $3$ colors.
Hence, the total number of required colors is $9$: $3$ colors of $\ell_c$, each combined with $3$ possible cases for the lights $\ell_{mov}$ and $\ell_{prev}$.
\subsection{\ssync{} Scheduler} \label{section:lumiSSync}
The first phase of the \ssync{} algorithm is identical to the first phase of the \Oblot/ algorithm (\Cref{section:oblot}):
Each robot that is rightmost in its neighborhood moves horizontally to the $x$-coordinate of its leftmost neighbor.
In case this position is already occupied, a slight vertical movement is used to avoid collisions.
The main idea of the second phase is the sequential movement (run) of \Cref{algorithm:luminousFsync}.
Due to the \ssync{} scheduler, an additional synchronization procedure needs to be added.
In \fsync{}, a robot with active light $\ell_{mov}$ can always be sure that the neighbors observe and adapt the light.
Since only a subset of robots is active in every round in \ssync{}, the light $\ell_{mov}$ might not be seen, and thus, the run stops.
To overcome this, we add a synchronization done with the light $\ell_c$.
In contrast to the $\fsync{}$ algorithm, the robots do not increment the light in every round they become active.
Instead, each run gets associated with a color of the light $\ell_c$.
More precisely, the main idea is as follows.
Assume that the robots have already formed a line parallel to the $y$-axis.
Moreover, we rename the robots such that $y_1(t) \leq y_2(t) \leq \dots \leq y_n(t)$.
Additionally, assume the configuration is well-initialized, i.e. all robots have set $\ell_c = 0$.
We describe the procedure from the view of $r_1$, it works analogously for $r_n$.
We denote by $\ell_i(r_j)$ the color of $r_j$'s light $\ell_i$ in round $t$ (the time parameter is omitted for readability).
As soon as $r_1$ is activated, it observes $\ell_c(r_2) = \ell_{prev}(r_2) = \ell_{mov}(r_2) = 0$.
Then, $r_1$ activates $\ell_{mov}$.
As soon as $r_1$ wakes up again, it executes its movement (it moves in distance $1$ of $r_2$), deactivates $\ell_{mov}$, activates $\ell_{prev}$ and increments $\ell_c$ such that $\ell_c = 1$.
In the future, $r_1$ will only deactivate $\ell_{prev}$ in case it detects $\ell_c(r_2) = 1$ (indicating that $r_2$ has taken over the run).
Hence, as soon as $r_2$ is activated and detects $\ell_c(r_1) = \ell_{prev}(r_1) = 1$ and $\ell_c(r_3) = \ell_{prev}(r_3) = \ell_{mov}(r_3) = 0$, it will activate $\ell_{mov}$.
Upon its next activation, $r_2$ executes its movement, deactivates $\ell_{mov}$. activates $\ell_{prev}$ and increments $\ell_c$.
As soon as two neighboring robots have activated $\ell_{mov}$ both move in distance $\half$ away from each other and stop the run (exactly as in \Cref{algorithm:luminousFsync}).
This way, the runs proceed along the line.
To conclude, a robot $r_j$ only takes over a run from its neighbor $r_{j-1}$ in case $\ell_c(r_{j-1}) = \ell_c(r_j) + 1$.
Additionally, $r_j$ will only deactivate $\ell_{prev}$ as soon as $\ell_c(r_{j-1}) \geq \ell(r_j)$ and $\ell_c(r_{j+1}) = \ell_c(r_j)$.

Note that it might happen due to the limited visibility that some runs already start while the first phase is not completed.
Hence, at the beginning of phase $2$, not all robots might be initialized with the same color of the light $\ell_c$.
In case a robot detects such a violation (e.g., the next robot that should take over the light $\ell_{mov}$ has a larger value of $\ell_c$), the usual movement is not executed.
Instead, simply the light $\ell_c$ is incremented.
Hence, for each constant number of runs, the light of one more robot is well-initialized, and the algorithm adjusts the colors of the lights $\ell_c$ in a self-stabilizing manner.
All in all, the first phase has a runtime of $\mathcal{O}(n^2)$ epochs (\Cref{lemma:firstLemma}), the second phase is after $\mathcal{O}(n)$ epochs well-initialized (arguments above) and completed after additional $\mathcal{O}(n)$ epochs (\Cref{theorem:fsyncLumi}).
The runtime of $\mathcal{O}(n^2)$ epochs follows.

%

\section{Relation to \Gathering/ and \ChainFormation/} \label{section:otherProblems}

Finally, we show that we can also apply the main ideas of our algorithms for the \MaxLineFormation/ problem in the context of \Gathering/ and \ChainFormation/.

\subsection{Gathering} \label{section:gathering}

We consider robots in the \Oblot/ model that agree on one axis of their local coordinate systems and operate under the \fsync{} scheduler.
Define $\Delta$ to be the maximal distance of two robots in the initial configuration in round $t_0$.
Moreover, $\Delta_x$ denotes $\max_{i,j} |x_i(t_0) - x_j(t_0)|$ and analogously $\Delta_y$ denotes $\max_{i,j} |y_i(t_0) - y_j(t_0)|$.
Observe that $\Delta_x \in \mathcal{O}(\Delta)$ and $\Delta_y \in \mathcal{O}(\Delta)$.
The core idea of the \Gathering/ algorithm (\Cref{algorithm:gathering}) is as follows: to use the first phase of \Cref{algorithm:luminousFsync} presented in \Cref{section:lumiFsync} to arrange the robots on a vertical line fast.
In this phase, every robot moves as far as possible to the left.
While in \Cref{section:lumiFsync}, collisions have to be avoided, this is not necessary for \Gathering/ since collisions are desired to gather all robots on a single point.
In \Cref{section:lumiFsync} it has been proven that this phase requires $\mathcal{O}(n)$ epochs.
We show with a slightly more elaborate argument that this phase requires only $\mathcal{O}(\Delta)$ epochs.
The second phase squeezes the line to gather all robots and works as follows: robots at the end of the line move half the distance towards their farthest neighbor.
All other robots move to the midpoint between their farthest neighbor above and their farthest neighbor below.
The complete algorithm is contained in \Cref{algorithm:gathering}.
The following theorem states the $\mathcal{O}(\Delta)$ runtime, see \Cref{section:appendixSectionSixProofs} for a proof.

\begin{restatable}{theorem}{fourthTheorem}
	\Gathering/ of $n$ robots agreeing on one axis of their local coordinate systems in the \Oblot/ model can be solved in $\mathcal{O}(\Delta)$ epochs under the \fsync{} scheduler.
\end{restatable}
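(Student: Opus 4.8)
The claim splits into two additive parts: an $\mathcal{O}(\Delta)$ bound for Phase 1 (arranging all robots on a single vertical line parallel to the $y$-axis), and an $\mathcal{O}(\Delta)$ bound for Phase 2 (squeezing that line to a point). Since $\Delta_x, \Delta_y \in \mathcal{O}(\Delta)$, it suffices to bound each phase by $\mathcal{O}(\Delta_x)$ and $\mathcal{O}(\Delta_y)$ respectively. The high-level structure therefore mirrors the \MaxLineFormation/ analysis of \Cref{section:lumiFsync}, but with the collision-avoidance machinery stripped out (collisions are now welcome) and with a sharper accounting of progress that replaces the $\mathcal{O}(n)$ bound by $\mathcal{O}(\Delta)$.

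\textbf{Phase 1.} I would first argue connectivity is preserved: when every robot $r_i$ moves to $x$-coordinate $x^i_r(t)-1$, its rightmost neighbor moves to an $x$-coordinate at least as far left, so horizontal gaps between any two originally-neighboring robots never grow beyond $1$ and no square-edge is lost (the vertical coordinates only get perturbed by the tiny collision-avoidance offsets, which I can bound by the constant $\tfrac13 y^i_{min}(t) \le \tfrac{1}{30}$, safely inside the range). Then the key quantity is $X(t) := \max_i x_i(t) - \min_i x_i(t)$, the horizontal extent; $X(t_0) = \Delta_x$. In each \fsync{} round, the rightmost robot(s) move left by at least the current horizontal gap to their leftmost neighbor, which is positive as long as the configuration is not yet collinear in $x$; more usefully, I would show that the globally leftmost robot never moves right (it has no neighbor strictly to its left except possibly at the same $x$-coordinate), so $\min_i x_i(t)$ is non-increasing, while $\max_i x_i(t) - \min_i x_i(t)$ strictly decreases by a constant amount each round until it hits $0$ — because the rightmost robot sees a neighbor within horizontal distance $1$ (connectivity) and moves onto or past it. This gives that after $\mathcal{O}(\Delta_x) = \mathcal{O}(\Delta)$ rounds all robots share one $x$-coordinate; once collinear in $x$, the algorithm's first branch is no longer triggered and Phase 1 stays completed. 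This is essentially the argument sketched in the paragraph before the theorem ("a slightly more elaborate argument"), replacing the $n$-based counting of \Cref{lemma:fsyncCollinear} with an extent-based one.

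\textbf{Phase 2.} Now all robots are on one vertical line; rename them $y_1(t) \ge \dots \ge y_n(t)$. The second phase has the outer robots $r_1, r_n$ move half the distance toward their farthest visible neighbor, and each inner robot move to the midpoint of its farthest-above and farthest-below neighbors. The natural potential is the vertical extent $Y(t) := y_1(t) - y_n(t)$, with $Y(t_0) \le \Delta_y + \mathcal{O}(1)$. Because $r_1$ moves strictly downward (toward its farthest neighbor below it, at distance at least the connectivity gap, which is $\le 1$) by half that distance, and symmetrically $r_n$ moves up, $Y(t)$ is non-increasing; I would show it contracts by a constant factor only if the line is "spread out", so instead I'd track the extent directly: $r_1$ descends by at least $\tfrac14$ of its distance to its nearest neighbor... but that distance can be tiny. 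The cleaner route, and the one I'd take, is to observe that this midpoint/half-way dynamics is exactly a \ChainFormation/-type contraction on a $1$-dimensional configuration, and the farthest-neighbor rule means each step the outer robot covers a $\Theta(1)$ fraction of the remaining half-extent on its side; summing a geometric-type series would only give $\mathcal{O}(\log)$, so the honest statement is that $Y(t)$ shrinks by an additive constant per round as long as $Y(t) > \mathcal{O}(1)$ (the farthest neighbor of $r_1$ within its unit square is at vertical distance in $[\,\Theta(1), 1\,]$ whenever the line has not yet collapsed into a unit-diameter clump), so after $\mathcal{O}(Y(t_0)) = \mathcal{O}(\Delta)$ rounds $Y(t) = \mathcal{O}(1)$; a final $\mathcal{O}(1)$-extent clump is gathered by the standard contraction argument in $\mathcal{O}(1)$ further rounds (or one invokes that \textsc{GTC}/midpoint gathering on a bounded-diameter instance finishes in $\mathcal{O}(1)$ \fsync{} rounds). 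Adding the two phases yields $\mathcal{O}(\Delta)$, matching the trivial $\Omega(\Delta)$ lower bound from the footnote.

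\textbf{Main obstacle.} The delicate point is Phase 2: proving that the vertical extent really decreases by an \emph{additive constant} every round rather than merely by a multiplicative factor. This needs the "farthest neighbor" rule to guarantee that the topmost robot always has a neighbor at vertical distance bounded below by a positive constant as long as the overall extent still exceeds a constant — which requires arguing that the robots cannot pile up arbitrarily densely near the top while leaving a long tail below, i.e. a lower bound on $y_1(t) - y_2(t)$ or on how far $r_1$'s farthest neighbor sits. If that fails in some configuration, one falls back to: the extent at least never increases, and a pigeonhole/amortized argument over which consecutive gaps are large shows $\Omega(1)$ total progress per round across the whole line, still giving $\mathcal{O}(\Delta)$. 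I would also need to double-check that Phase 2's inner-robot midpoint moves cannot break connectivity — this follows because moving to the midpoint of two visible (hence within-square) neighbors keeps a robot inside the convex hull of its neighborhood, and the $x$-coordinates stay equal throughout.
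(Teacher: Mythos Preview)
Your decomposition into two phases matches the paper, but both phases have real gaps.

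\textbf{Phase 1.} Your claim that $X(t) = \max_i x_i(t) - \min_i x_i(t)$ strictly decreases by a constant each round is not true. You correctly note that no robot ever moves right and that every robot lands at an $x$-coordinate $\le x_{\max}(t)-1$, so $\max_i x_i(t+1) \le x_{\max}(t)-1$. But the \emph{leftmost} robot can also move a full unit to the left: if all of its neighbors share its $x$-coordinate (which happens whenever there is a vertical column of height $>1$ at $x_{\min}$), then $x^i_r(t)=0$ in its local frame and it jumps to $x_{\min}(t)-1$. In that case $X(t+1)=X(t)$ and your potential stalls. The paper's proof (\Cref{lemma:gatheringCollinear}) handles precisely this: it tracks unit-width $x$-intervals anchored at $x_{\min}(t)$, observes that the rightmost nonempty interval empties every round, and then bounds the total number of \emph{new} leftmost intervals that can ever be created by $\Delta_y$ --- because a new leftmost interval arises only when the current leftmost robots see no neighbor strictly to their right, which forces a vertical column segment there. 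Summing gives $\Delta_x + \Delta_y \in \mathcal{O}(\Delta)$ rounds, not $\mathcal{O}(\Delta_x)$; this is the ``slightly more elaborate argument'' alluded to in the text, and it is not the simple extent-shrinking you propose.

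\textbf{Phase 2.} You correctly identify the obstacle --- $r_1$'s farthest visible neighbor can be arbitrarily close, so per-round downward progress is not bounded below --- but you do not resolve it; the ``pigeonhole/amortized'' fallback is left as a hope. The paper's fix is a concrete two-round argument: if in round $t$ the topmost robot $r_1$ moves down by less than $\tfrac{1}{10}$, its farthest neighbor $r_f$ lies within $\tfrac{1}{5}$; by connectivity $r_f$ sees some $r_{f'}$ at vertical distance $>1$ from $r_1$ (else $r_1$ would already see everyone), so $r_f$'s midpoint target lands at distance $\ge \tfrac12$ below $p_1(t)$. Hence in round $t+1$ the farthest neighbor of $r_1$ is at distance at least a constant, and $r_1$ now moves down by at least $\tfrac14$. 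One also checks (as the paper does) that $r_1$ remains topmost throughout. Thus every \emph{two} rounds both endpoints make constant additive progress, yielding $\mathcal{O}(\Delta_y)$ rounds until $r_1$ and $r_n$ see each other and gathering finishes in one more step.

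A minor point: \Cref{algorithm:gathering} has no collision-avoidance perturbation at all --- collisions are welcome --- so your $\tfrac13 y^i_{\min}$ discussion does not apply here.
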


\begin{algorithm}
	\caption{ \Oblot/ \Gathering/ \ssync{} (executed if \Gathering/ not done)}\label{algorithm:gathering} \begin{algorithmic}[1]
		\If{all neighbors are located on the $y$-axis}
		\State $r_a^{i}(t) \gets $ farthest robot above ($r_i$ if no such robot exists)
		\State $r_b^{i}(t) \gets $ farthest robot below ($r_i$ if no such robot exists)
		\State $p_i(t+1) \gets (x^{i}_r(t)-1,\frac{1}{2} y_a^{i}(t) + \frac{1}{2} y_b^{i}(t))$
		\Else

		\State $p_i(t+1) \gets (x^{i}_r(t)-1,0)$

		\EndIf
		\State $r_i$ moves to $p_i(t+1)$
	\end{algorithmic}
\end{algorithm}
\setlength{\textfloatsep}{10pt}

\subsection{Chain-Formation} \label{section:chainFormation}
Lastly, we study the \ChainFormation/ problem that considers \emph{disoriented} robots.
Additionally, the robots are connected in a chain topology: there are $n+2$ robots $r_0, r_1, \dots, r_{n+1}$.
The robots $r_0$ and $r_{n+1}$, denoted as \emph{outer robots}, are stationary (they do not move).
Every other robot $r_i$ has exactly two chain neighbors: $r_{i-1}$ and $r_{i+1}$ whose positions it can always observe.
The robots have a circular connectivity and viewing range of $1$.
Define by $w_i(t)= (w^x_i(t), w^y_i(t)) = p_i(t) - p_{i-1}(t)$ the vectors along the chain and $L(t) = \sum_{i=1}^{n+1}\|w_i(t)\|$.
Additionally, $D = \|p_0(t)- p_{n+1}(t)\|$.
The goal of the \ChainFormation/ problem is to move the robots such that $L(t) = D$ and to distribute the robots uniformly along the line segment between $r_0$ and $r_{n+1}$.
W.l.o.g,, assume that $r_0$ is positioned in the origin of a global coordinate system and $r_{n+1}$ on the positive $x$-axis in distance $D$ to $r_0$.
Then, in the optimal configuration it holds $w_i(t) = w_{\infty} = \frac{D}{n+1}$ for $1 \leq i \leq n+1$.
We say that an $\varepsilon$-approximation of the optimal configuration is reached in case $\|w_i(t) - w_{\infty}\| \leq \varepsilon$ for all $1 \leq i \leq n+1$.

For the problem, the \textsc{GTM} algorithm has been introduced \cite{DBLP:journals/tcs/CohenP08,DBLP:conf/spaa/DyniaKHS07}.
The algorithm moves each robot in every round to the midpoint between its two direct neighbors.
The \textsc{GTM} algorithm is very similar to the second phase of the \Oblot/ algorithm (\Cref{algorithm:Collisions}) presented in \Cref{section:oblot}.
Also, in \Cref{algorithm:Collisions}, robots that are not located at the end of the line move to the midpoint of their closest neighbors.
In \Cref{algorithm:Collisions}, however, the robots at the of the line are moving to stretch the line.
In contrast, the robots $r_0$ and $r_{n+1}$ of the \ChainFormation/ problem do not move.
Nevertheless, we  can apply a very similar analysis idea to the \textsc{GTM} algorithm:
We prove convergence independently for $w^{x}_i(t)$ and $w^y_i(t)$.
Since the arguments are identical, we concentrate on $w^{x}_i(t)$.
Define $\overline{x} = \frac{1}{n+1} \cdot \sum_{i=1}^{n+1} w^{x}_i(t)$.
Furthermore, define $z_i(t) = w^{x}_i(t) - \overline{x}$.
The analysis is based on the following function: $\Phi_2(t) = \sum_{i=1}^{n+1} z_i(t)^2$ that can be analyzed in most parts analogously  to $\Phi(t)$ in \Cref{section:oblot}.
See \Cref{section:chainFormationProofs} for a proof.

\begin{theorem} \label{theorem:chainFormationTheorem}
	For every $0 < \varepsilon < 1$,  \textsc{GTM} reaches an $\varepsilon$-approximation of the optimal configuration in $\mathcal{O}(n^2 \cdot \log(n/\varepsilon))$ epochs under the \ssync{} scheduler.
\end{theorem}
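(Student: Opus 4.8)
The plan is to mimic the \Oblot{} analysis of \Cref{section:oblot} (Lemmas~\ref{lemma:sortedBound} and~\ref{lemma:potentialRatio}) applied to the potential $\Phi_2(t)=\sum_{i=1}^{n+1} z_i(t)^2$, where $z_i(t)=w^x_i(t)-\overline{x}$, and then to translate a multiplicative decrease per epoch into the claimed $\mathcal{O}(n^2\log(n/\varepsilon))$ bound. First I would record the basic invariants: the quantity $\overline{x}=\frac{1}{n+1}\sum_i w^x_i(t)$ is \emph{constant} in time, since $\sum_i w^x_i(t)=p_{n+1}^x-p_0^x=D$ is preserved by the \textsc{GTM} update (a robot moving to the midpoint of its neighbors changes $w^x_i$ and $w^x_{i+1}$ by opposite amounts), and the outer robots $r_0,r_{n+1}$ never move. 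Hence $\sum_i z_i(t)=0$ always, and $\Phi_2$ genuinely plays the role of a variance. I would also note that when robot $r_i$ with $1\le i\le n$ moves to the midpoint of $r_{i-1},r_{i+1}$, the new edge vectors satisfy $w^x_i{}'=w^x_{i+1}{}'=\tfrac12(w^x_i+w^x_{i+1})$, i.e.\ the update is exactly the pairwise-averaging step underlying the sample-variance argument; robots that are not activated in a given round leave their incident edges unchanged, which is what forces the analysis into the time-inhomogeneous / \ssync{} setting and what \cite{DBLP:journals/tac/NedicOOT09} is built to handle.

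The key steps are then: (i) a single-epoch drop bound $\Phi_2(t_{e_k})-\Phi_2(t_{e_{k+1}})\ge \frac{1}{4}\sum_{i=1}^{n}\bigl(w^x_{\pi_i}(t_{e_k})-w^x_{\pi_{i+1}}(t_{e_k})\bigr)^2$ where $w^x_{\pi_1}\ge\dots\ge w^x_{\pi_{n+1}}$ are the sorted edge $x$-components, proved exactly as \Cref{lemma:sortedBound}: within one epoch every robot moves at least once, so one can compose the averaging updates and use convexity of $t\mapsto t^2$ to lower bound the loss by the sorted-gap sum, the factor $\tfrac14$ coming from the averaging coefficient $\tfrac12$ squared together with the telescoping over the epoch; (ii) a relative-drop bound $\frac{\Phi_2(t_{e_k})-\Phi_2(t_{e_{k+1}})}{\Phi_2(t_{e_k})}\ge \frac{1}{c\,n^2}$ for an absolute constant $c$, obtained from (i) by the same elementary inequality used for \Cref{lemma:potentialRatio}: for any reals $a_1\ge\dots\ge a_m$ with $\sum a_i=0$ one has $\sum_i a_i^2 \le m^2\sum_{i=1}^{m-1}(a_i-a_{i+1})^2$ (bound each $|a_i|$ by the total variation $\sum_j(a_j-a_{j+1})$, then use Cauchy--Schwarz), with $m=n+1$; (iii) iterating, $\Phi_2(t_{e_k})\le (1-\tfrac{1}{cn^2})^{k}\,\Phi_2(t_{e_1})$, and since $\Phi_2(t_{e_1})=\mathcal{O}(n)$ (each $|w^x_i(t_0)|\le 1$ by the connectivity range, so each $z_i(t_0)^2\le 4$), we get $\Phi_2 \le \varepsilon^2$ after $k=\mathcal{O}(n^2\log(n/\varepsilon))$ epochs; (iv) finally convert $\Phi_2\le\varepsilon^2$ into a per-edge bound: since $\sum z_i=0$ and the edges change by at most averaging, a small variance forces $|w^x_i(t)-\overline{x}|\le\varepsilon$ for all $i$ up to constants — more carefully, one runs the argument to $\Phi_2\le\varepsilon^2$ which already gives $|z_i|\le\varepsilon$ for every $i$, and the same for $\Phi$ in the $y$-coordinate, yielding $\|w_i(t)-w_\infty\|\le\sqrt2\,\varepsilon$; rescaling $\varepsilon$ by a constant gives the stated $\varepsilon$-approximation.

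The main obstacle I anticipate is step~(i) in the genuinely two-dimensional, \ssync{} setting: unlike the clean vertical-line situation of Phase~2 of \Cref{algorithm:Collisions}, here the chain is an arbitrary connected curve, so one must check that decoupling the $x$- and $y$-coordinates is legitimate (it is, because the \textsc{GTM} midpoint update acts affinely and coordinate-wise, so $w^x$ and $w^y$ evolve by the same linear rule independently), and that composing the per-round averaging maps over one \ssync{} epoch — where different robots move in different rounds and a robot may move several times — still yields a stochastic-matrix product whose ``Dobrushin-type'' coefficient is controlled; this is precisely the technical core of \cite{DBLP:journals/tac/NedicOOT09}, and the work is in verifying that our update matrices satisfy the hypotheses there (doubly stochastic, positive diagonal, connected union over an epoch) so that the sample-variance decrease of \Cref{lemma:sortedBound} carries over verbatim with $n$ replaced by $n+1$ and the two fixed endpoints $r_0,r_{n+1}$ handled as ``anchored'' coordinates that only shrink $\Phi_2$ further rather than obstructing it. A secondary point to be careful about is that collisions/overtaking among chain robots are irrelevant here (the chain neighborhoods are fixed and the midpoint rule keeps the robots within the convex hull of their neighbors), so no collision-avoidance machinery is needed and the analysis is cleaner than for \MaxLineFormation{}.
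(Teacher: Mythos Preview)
Your proposal is correct and follows essentially the same route as the paper: establish the analogues of Lemmas~\ref{lemma:sortedBound} and~\ref{lemma:potentialRatio} for $\Phi_2$ (now with $n+1$ edge vectors and the fixed endpoints playing the role of the anchor $w_1\equiv 1$), iterate the multiplicative $1-\Theta(1/n^2)$ decrease per epoch, and treat the $x$- and $y$-coordinates independently. The only cosmetic differences are that the paper obtains the ratio bound of step~(ii) via an explicit constrained minimization rather than your total-variation/Cauchy--Schwarz inequality, and it stops at $\Phi_2\le\varepsilon$ rather than $\varepsilon^2$---a slackness that, as you note, is absorbed into the $\log(n/\varepsilon)$.
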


\section{Conclusion}
We have introduced the \MaxLineFormation/ problem and proven that the problem is impossible to solve with \emph{circular} viewing and connectivity ranges.
On the positive side, we have derived three algorithms for robots with square viewing and connectivity ranges.
Several open questions remain: is it possible to solve the \MaxLineFormation/ exactly when considering oblivious robots (\Oblot/)?
Is the derived runtime for the \Oblot/ model tight or can there be a more efficient algorithm?
The same question about lower bounds is also still open for the \ChainFormation/ and the \Gathering/ problem.
Can the problem be solved by disoriented robots (robots that do not agree on any axis)?
For the last question, certainly \emph{square} ranges do not help to solve the problem as the square ranges cannot be aligned according to a common axis.

\bibliographystyle{splncs04}
 \bibliography{maxLine}

\newpage
\appendix

\appendix
\section{Complete proof of \Cref{section:impossibility}} \label{section:completeImpossibility}

\firstTheorem*

\begin{proof}
	Initially, we assume identical viewing and connectivity ranges.
	The arguments for viewing ranges that are larger than the connectivity range are analogous and can be found in \Cref{section:completeImpossibility}.
	Thus, we assume a circular viewing and connectivity range of $c$.
	We prove the claim by contradiction.
	We assume that there is an algorithm $\mathcal{M}$ that is able to solve the \MaxLineFormation/ problem.
	Next, we derive a combination of $2$ initial configurations $C_1$ and $C_2$ and prove that if $\mathcal{M}$ is able to solve the problem starting in $C_1$, it cannot solve it starting in $C_2$.
	The configuration $C_1$ consists of three robots $r_1$, $r_2$ and $r_3$ at arbitrary (connected) positions.
	Since $\mathcal{M}$ is able to solve the problem, there is a time step $t_f$ such that the \MaxLineFormation/ problem is solved.
	W.l.o.g.\ we assume that $r_1$ and $r_3$ are located at the end of the line and $p_1(t_f), p_2(t_f)$ and $p_3(t_f)$ form a line parallel to the $y$-axis (otherwise we could rename the robots and rotate the following configuration $C_2$ accordingly).
	More precisely, $p_1(t_f) - p_2(t_f) = p_2(t_f) - p_3(t_f) = (0, c)$.
	See \Cref{fig:impossibility} for a depiction of the effects of $\mathcal{M}$ started in $C_1$.

	The configuration $C_2$ consists of $7$ robots, $r_4, \dots, r_{10}$ located at the following positions in a global coordinate system (not known to the robots):
	$p_4(t) =(-c,c) , p_5(t) =(-c,0), p_6(t) = (-c,-c), p_7(t) = (0,0), p_8(t) = (c,c), p_9(t) = (c,0),$ and  $p_{10}(t) = (c,-c)$.
	See \Cref{fig:impossibilitySquare} for a visualization of the configuration.
	In $C_2$, $r_4$ can only see $r_5$ and is located in distance $c$ of $r_5$.
	Moreover, it holds $p_4(t) - p_5(t) = p_1(t_f) - p_2(t_f)$ and $\|p_4(t) - p_5(t)\| = c$.
	Thus, $\mathcal{M}$ is not allowed to move $r_4$ since $\mathcal{M}$ cannot distinguish $r_1$ in configuration $C_1$ after time $t_f$ and $r_4$ in configuration $C_2$.
	By similar arguments, $\mathcal{M}$ is also not allowed to move $r_6,r_8$ and $r_{10}$.
	Hence, the only remaining robots that could be moved by $\mathcal{M}$ are $r_5$, $r_7$ and $r_9$.
	However, also these robots are not allowed to move.
	Consider the robot $r_5$ which is located in maximum distance to $r_4$, $r_6$ and $r_7$.
	No matter where $r_5$ moves, it loses the connectivity to either $r_4$ or $r_6$ as these robots remain at their position.
	The same arguments hold for $r_7$ and $r_9$.
	It follows that $\mathcal{M}$ cannot solve the problem $C_2$, which contradicts the assumption. \qed

	Next, we consider a viewing range that is larger than the connectivity range but still a constant.
	W.l.o.g.\ we assume that there is a constant $\alpha  > 1$ such that the viewing range is of size $\alpha \cdot c$.
	The configuration is similar to before but the robots $r_4, r_6, r_7, r_8$ and $r_{10}$ are replaced by a line of $\lceil \alpha \rceil$ robots in maximum distance.
	More precisely, $r_4$ is replaced by $\lceil \alpha \rceil$ robots $r_{4,1}, r_{4,2}, \dots, r_{4,\lceil \alpha \rceil}$ with $p_{4,j}(t) = (-\lceil \alpha \rceil \cdot c, (\lceil \alpha \rceil - j +1) \cdot c)$.
	Similarly, $r_6, r_7, r_8$ and $r_{10}$ are replaced.
	Hence, in total $\lceil \alpha \rceil \cdot 5 +2$ robots are needed.
	See \Cref{fig:impossibilityLarger} for a visualization.
	The configuration is designed such that $r_{4,1}, r_{6,2}, r_{8_1}$ and $r_{10,2}$ are not allowed to move since the configuration looks like the final configuration from their point of view.
	All other robots are not allowed to move since their movement would disconnect the connectivity graph. \qed

	\begin{figure}[htbp]
		\centering
		\includegraphics[width=0.55\textwidth]{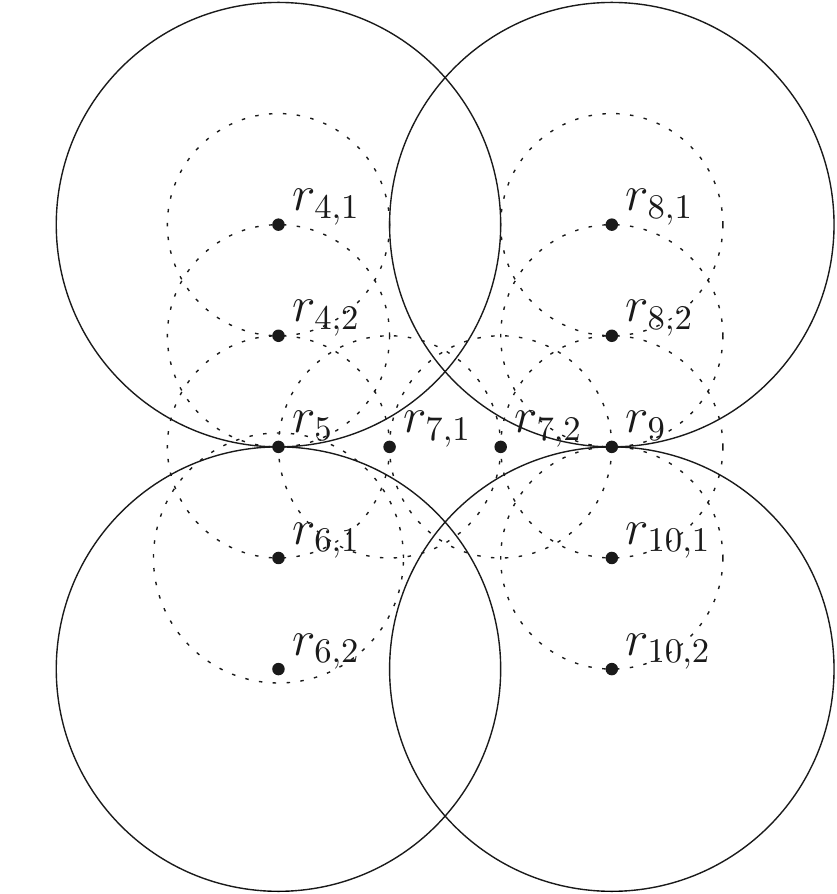}
		\caption{The configuration with $\alpha = 2$ is depicted. The dotted circles represent the connectivity ranges of the robots. The solid circles depict the viewing ranges of selected robots (other viewing ranges are left out for the sake of clarity).}
		\label{fig:impossibilityLarger}
	\end{figure}

\end{proof}
\clearpage
\newpage

\section{Omitted Proofs of \Cref{section:oblot}} \label{section:maxLineOblotProofs}

\firstLemma*

\begin{proof}
	Initially, at most $n$ distinct $x$-coordinates that are occupied by robots exist.
	In every epoch, at least one robot that occupies the rightmost $x$-position moves to the left as the configuration is connected.
	This movement does not create any new $x$-position as the robot moves to the $x$-coordinate of its leftmost neighbor.
	Additionally, no robot moves to the right.
	Hence, after at most $n$ epochs, no robot occupies the rightmost $x$-coordinate anymore.
	Thus, after $\mathcal{O}(n^2)$ epochs, all robots are located on the same vertical line by applying the same argument inductively.
	The connectivity and non-existence of collisions follow from the algorithm's description.
	\qed
\end{proof}

We define $\tau_i(t) =1$ if and only if $r_i$ is active in round $t$.
First of all, we derive formulas for the vectors $w_i(t+1)$.
For each vector, we have to consider $4$ cases: $\tau_i(t) = 1$ and $\tau_{i-1}(t) = 1$, $\tau_i(t) = 1$ and $\tau_{i-1}(t) = 0$, $\tau_i(t) = 0$ and $\tau_{i-1}(t) = 1$ and $\tau_i(t) = 0$ and $\tau_{i-1}(t) = 0$.
Furthermore, $\mu_{i}^{-}(t) = \tau_{i-1}(t) \cdot \left(\tau_{i-1}(t)-\tau_{i}(t)\right)$ and $\mu_{i}^{+}(t) = \tau_{i}(t) \cdot \left(\tau_{i}(t)-\tau_{i-1}(t)\right)$.
For the ease of notation, define $d_{i}^{-}(t) = \mu_{i}^{-}(t) \cdot \left(w_i(t)- w_{i-1}(t)\right)^2$, $d_i(t) = \tau_{i}(t) \cdot \tau_{i-1}(t) \cdot \left(w_{i-1}(t) - w_{i+1}(t)\right)^2$ and $d_{i}^{+}(t) = \mu_{i}^+(t) \cdot \left(w_i(t)- w_{i+1}(t)\right)^2$.
Observe that $d_i^{-}(t), d_i(t)$ and $d_i^{+}(t)$ are defined such that at most one of the three terms can be larger than $0$ (the other ones are equal to $0$).
Lastly, define $w_{n+1}(t) = w_1(t)$.

\begin{restatable}{lemma}{secondLemma} \label{lemma:potentialDifference}
	For any round $t$, it holds
	\begin{align*}
		\Phi(t+1) = \Phi(t) - \frac{1}{4}\sum_{i=2}^{n} d_{i}^{-}(t) + d_i(t) + d_i^{+}(t).
	\end{align*}
\end{restatable}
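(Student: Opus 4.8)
The plan is to compute $\Phi(t+1)-\Phi(t)$ directly from the update rules by isolating the contribution of each vector $w_i$. First I would record the four possible one-step updates for $w_i(t+1)=y_i(t+1)-y_{i-1}(t+1)$, depending on the activation pattern $(\tau_i(t),\tau_{i-1}(t))$: if neither endpoint of the gap moves, $w_i$ is unchanged; if only $r_{i-1}$ moves (to the midpoint of $r_{i-2},r_i$, or the stretching rule at the bottom end), the gap becomes $w_i-\frac{1}{2}(w_{i-1}-w_i)=\frac{1}{2}(w_i+\ldots)$ — more usefully, $w_i(t+1)=\frac{1}{2}w_i(t)+\frac{1}{2}w_{i-1}(t)$ is false; the correct bookkeeping is that the new gap differs from the old by $\pm\frac{1}{2}(\text{adjacent gap difference})$. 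I would write each case uniformly using the indicator combinations $\mu_i^{-},\mu_i^{+}$ and the inner-robot case indicator $\tau_i\tau_{i-1}$, noting that exactly one of these three is nonzero (equal to $1$) for any activation pattern, and that the corresponding change in $w_i$ is $-\frac{1}{2}(w_i-w_{i-1})$, $-\frac{1}{2}(w_i-w_{i+1})$, or $\frac{1}{2}(w_{i-1}+w_{i+1})-w_i=-\frac{1}{2}((w_i-w_{i-1})+(w_i-w_{i+1}))$ respectively. This matches the definitions of $d_i^{-},d_i^{+},d_i$ once squared.

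Next I would translate from $w_i$ to $z_i=w_i-w_1$ and to $\Phi=\sum_{i=2}^n z_i^2$. The key observation is that for $2\le i\le n$, $z_i(t+1)-z_i(t) = (w_i(t+1)-w_i(t)) - (w_1(t+1)-w_1(t))$, but $w_1$ is the ``anchor'' gap defined as constant $1$ (and more to the point, $r_1$ is the topmost robot whose behavior is governed by the virtual-robot stretching rule so that $w_1(t+1)=w_1(t)$ as long as phase 2 holds — this is exactly why $w_{n+1}(t):=w_1(t)$ is the right boundary convention). Hence $\Delta z_i = \Delta w_i$, and I can work with $\sum_{i=2}^n (z_i(t+1)^2 - z_i(t)^2) = \sum_{i=2}^n (2z_i(t)\,\Delta w_i + (\Delta w_i)^2)$. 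The plan is to expand $\Delta w_i = \frac{1}{2}(\text{signed sum of one or two adjacent gap-differences})$, substitute, and then reorganize the double sum by the individual ``edges'' $(w_{j}-w_{j-1})$ and $(w_j-w_{j+1})$. The cross terms $2z_i\Delta w_i$ telescope against the adjacent squared terms; this is the standard algebraic identity behind the sample-variance drop in \cite{DBLP:journals/tac/NedicOOT09}, adapted here because $w_1$ is fixed rather than averaged. After collecting, the surviving negative contribution is precisely $-\frac{1}{4}\sum_{i=2}^n (d_i^-(t)+d_i(t)+d_i^+(t))$.

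The main obstacle I anticipate is the careful edge/boundary accounting: keeping track of which gaps $w_i$ actually change under a given activation set, handling the endpoints $i=2$ and $i=n$ (where one of the ``adjacent'' gaps is $w_1$ or $w_{n+1}=w_1$), and verifying that the stretching moves of the two outer robots $r_1,r_n$ contribute exactly as the inner-robot formula predicts under the convention $w_{n+1}=w_1$ and $w_1\equiv$ const. I would therefore first prove the identity assuming all activated robots are inner robots using the midpoint rule, obtain the clean telescoping, and then check separately that replacing the inner-robot update at an endpoint by the virtual-robot stretching update changes nothing in the bookkeeping because the stretch is defined to keep $\|w_1\|$ and $\|w_{n+1}\|$ at their target value $1=w_1$, so those gaps are treated as frozen. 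The rest is the routine expansion of $(z_i(t+1))^2-(z_i(t))^2$ and regrouping, which I would not carry out in full here.
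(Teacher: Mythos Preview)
Your increment-based plan has a concrete error in the both-active case. You write that when $\tau_i\tau_{i-1}=1$ the change is $\Delta w_i=\tfrac12(w_{i-1}+w_{i+1})-w_i$ and that ``this matches the definitions of $d_i^{-},d_i^{+},d_i$ once squared.'' But $(\Delta w_i)^2=\tfrac14(w_{i-1}+w_{i+1}-2w_i)^2$, whereas $d_i=(w_{i-1}-w_{i+1})^2$; these coincide only if $w_i=\tfrac12(w_{i-1}+w_{i+1})$. So the squared increments do \emph{not} hand you the $d_i$ term, and the cross terms $2z_i\,\Delta w_i$ cannot simply ``telescope against the adjacent squared terms'' to fix this. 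In fact the telescoping intuition from \cite{DBLP:journals/tac/NedicOOT09} relies on a doubly-stochastic update; here the endpoints (the stretching moves of $r_1,r_n$) break that. A sanity check with $n=3$ and all robots active gives $\Phi(t+1)=\tfrac14\Phi(t)$ while $\Phi(t)-\tfrac14\sum d_i=\tfrac34\Phi(t)$, so the stated \emph{equality} actually fails at the boundary; only the inequality $\Phi(t+1)\le\Phi(t)-\tfrac14\sum(d_i^{-}+d_i+d_i^{+})$ holds, and that is all Lemma~\ref{lemma:sortedBound} uses.

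The paper does not work with $\Delta w_i$ at all: it expands $z_i(t+1)^2$ directly in each of the four activation cases. In the both-active case this gives $z_i(t+1)^2=(\tfrac12 z_{i-1}+\tfrac12 z_{i+1})^2=\tfrac12 z_{i-1}^2+\tfrac12 z_{i+1}^2-\tfrac14 d_i$, i.e.\ a convex combination of the neighbouring $z_j^2$ minus the desired $d$-term (and analogously $\tfrac12 z_{i-1}^2+\tfrac12 z_i^2-\tfrac14 d_i^{-}$, etc., in the other cases). Summing over $i$, each $z_j^2$ is picked up with total weight at most~$1$ --- this is the paper's ``each $w_i(t)$ occurs exactly twice'' remark --- which yields the bound. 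Your route via $2z_i\Delta w_i+(\Delta w_i)^2$ obscures exactly this structure, because in the both-active case $w_i$ itself disappears from $z_i(t+1)^2$, something the increment formulation cannot express cleanly.

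Minor: ``exactly one of these three is nonzero'' should be ``at most one'' --- when $\tau_{i-1}=\tau_i=0$ all three vanish.
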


\begin{proof}
	Consider a vector $w_i(t)$ with $2<i<n$.
	Next, we calculate $w_i(t+1)$.
	There are $4$ cases to consider: Case 1: $\tau_{i-1}(t) = 1$ and $\tau_{i}(t) = 1$, Case 2 and 3: $\tau_{i-1}(t) = 1$ and $\tau_{i}(t) = 0$ or vice versa and Case $4$: $\tau_{i-1}(t) = 0$ and $\tau_{i}(t) = 0$.
	The following formulas can be easily verified:
	\begin{enumerate}
		\item Case 1: $w_i(t+1) = \half w_{i-1}(t) + \half w_{i+1}(t)$
		\item Case 2: $w_i(t+1) =  \half w_{i-1}(t) + \half w_i(t)$
		\item Case 3: $w_i(t+1) =\half w_{i}(t) + \half w_{i+1}(t)$
		\item Case 4: $w_i(t+1) = w_i(t)$
	\end{enumerate}

	The formulas for the boundary vectors $w_2(t)$ and $w_n(t)$ are slightly different:
	Case 1: $\tau_{1}(t) = 1$, $\tau_{2}(t) = 1$, $\tau_{n-1}(t) = 1, \tau_{n}(t) = 1$, Case 2 and 3: $\tau_{1}(t) = 1$, $\tau_{2}(t) = 0, \tau_{n-1}(t) = 0$ and $\tau_{n}(t) =1$ or vice versa and Case $4$: $\tau_{1}(t) = 0, \tau_{2}(t) = 0, \tau_{n-1}(t) = 0$ and $\tau_{n}(t) = 0$.

	\begin{enumerate}
		\item Case 1: $w_2(t+1) = \half w_{1}(t) + \half w_{3}(t)$; $w_n(t+1) = \half w_{n-1}(t) + \half w_{n+1}(t)$
		\item Case 2: $w_2(t+1) =   \half w_{1}(t) + \half w_{2}(t)$; $w_n(t+1) = \half w_{n}(t) + \half w_{n+1}(t) $
		\item Case 3: $w_2(t+1) =\half w_{2}(t) + \half w_{3}(t)$; $w_n(t+1) = \half w_{n-1}(t) + \half w_{n}(t)$
		\item Case 4: $w_2(t+1) = w_2(t)$; $w_n(t+1) = w_n(t)$
	\end{enumerate}

	Next, we derive a formula for $z_i(t+1)^2$ for $2 < i < n$.
	Observe first $z_i(t)^2 = (w_i(t)-1)^2 = w_i(t)^2 - 2 \cdot w_i(t) + 1$.

	\begin{enumerate}
		\item Case 1: $z_i(t+1)^2 = (\half w_{i-1}(t) + \half w_{i+1}(t)-1)^2 = \quarter w_{i-1}(t)^2 + \quarter w_{i+1}(t)^2 + \frac{w_{i-1}(t) \cdot w_{i+1}(t)}{2} - w_{i-1}(t) - w_{i+1}(t) +1 $
		\item Case 2: $z_i(t+1)^2  = \quarter w_{i-1}(t)^2 + \quarter w_{i}(t)^2 + \frac{w_{i-1}(t) \cdot w_{i}(t)}{2} - w_{i-1}(t) - w_{i}(t) +1 $
		\item Case 3: $z_i(t+1)^2  =\quarter w_{i}(t)^2 + \quarter w_{i+1}(t)^2 + \frac{w_{i}(t) \cdot w_{i+1}(t)}{2} - w_{i}(t) - w_{i+1}(t) +1 $
		\item Case 4: $z_i(t+1)^2  = z_i(t)^2$
	\end{enumerate}

	Similar formulas can be derived for $z_2(t+1)$ and $z_n(t+1)$.
	Since at most one of the three terms $d^{-}_i(t),d^{+}_i(t)$ and $d_i(t)$ is positive, and each $w_i(t)$ occurs exactly twice in all $z_i(t+1)$'s, the lemma follows.
	\qed
\end{proof}

\thirdLemma*

\begin{proof}
	By \Cref{lemma:potentialDifference}, we obtain

	\begin{align*}
		\Phi(t_{e_k})-\Phi(t_{e_{k+1}}) \geq \quarter \cdot \sum_{t=t_{e_k}}^{t_{e_{k+1}}} \sum_{i=2}^{n} d_{i}^{-}(t) + d_i(t) + d_{i}^{+}(t).
	\end{align*}

	The first part of the proof deals with finding a lower bound for any $d_{i}^{-}(t) + d_i(t) + d_{i}^{+}(t)$ given that at least one of the terms is larger than $0$ (at most one of the three terms is positive).
	The lower bound, however, depends on the sorted sequence $w_{\pi_1}(t)$, $\dots w_{\pi_n}(t)$.
	Since we lose much structure due to the sorting, some definitions are needed.
	Let $\pi$ be the function that maps the indices of $w_1(t_{e_k}), \dots, w_n(t_{e_k})$ into the sorted sequence $w_{\pi_1}(t_{e_k}), \dots, w_{\pi_n}(t_{e_k})$ and $\pi^{-1}$ its inverse.
	More precisely, for instance $\pi(i) = \pi_f$ if and only if $w_i(t_{e_k}) = w_{\pi_f}(t_{e_k})$.
	Furthermore, define $\sigma_{m,i,j}(t) = 1$ if and only if
	one of the following cases is fulfilled:

	\begin{enumerate}
		\item $d_{\pi_m}(t) > 0$ and $\pi(\pi^{-1}(\pi_m) -1) =\pi_i$ and $\pi(\pi^{-1}(\pi_m)+1) = \pi_j$ or vice versa
		\item $d_{\pi_m}^{-}(t) > 0$ and $\pi(\pi^{-1}(\pi_m)-1) = \pi_i$ and $\pi(\pi^{-1}(\pi_m)) = \pi_j$ or vice versa
		\item $d_{\pi_m}^{+}(t) > 0$ and $\pi(\pi^{-1}(\pi_m)) = \pi_i$ and $\pi(\pi^{-1}(\pi_m)+1) = \pi_j$ or vice versa
	\end{enumerate}

	Due to the sorting, we lose the nice property that only neighboring $w_i(t)$'s are involved in $d_{\pi_m}(t)$, $d^{-}_{\pi_m}(t)$ and $d^{+}_{\pi_m}(t)$.
	For instance in case $d_{\pi_m}(t) > 0$ we cannot conclude that $w_{\pi_{m}-1}(t)$ and $w_{\pi_{m}+1}(t)$ ar involved.
	Thus, intuitively, $\sigma_{m,i,j}(t) = 1$ if and only if $d_{\pi_m}(t), d_{\pi_m}^{-}(t)$ or a $d_{\pi_m}^{+}(t)$ is larger than $0$ and both $w_{\pi_i}(t)$ and $w_{\pi_j}(t)$ are involved.

	Next, define $t_{\ell}$ ($1 \leq \ell \leq n$) to be the first round larger than or equal to $t_{e_k}$ such that there exists three indices $\pi_i, \pi_j$ and $\pi_m$ ($\pi_i \neq \pi_j$ but $\pi_i = \pi_m$ or $\pi_j = \pi_m$ might hold) with $\pi_i \leq \pi_{\ell} < \pi_j$ (or vice versa) and $\sigma_{m,i,j}(t) = 1$.
	In other words, $t_{\ell}$ denotes the first round in which the values $w_{\pi_1}(t_{e_k}), \dots, w_{\pi_{\ell}}(t_{e_k})$ and $w_{\pi_{\ell+1}}(t_{e_k}), \dots,  w_{\pi_n}(t_{e_k})$ influence each other.
	By influencing each other, we mean that $w_{\pi_m}(t+1) = \half w_{\pi_i}(t)  + \half w_{\pi_j}(t)$,
	since either $d_{\pi_m}(t) > 0, d^{-}_{\pi_{m}}(t) > 0$ or $d^{
		+}_{\pi_{m}}(t) > 0$.\footnote{In the context of averaging consensus each index $1, \dots, n$ corresponds a node in the graph. Thus, the index $\ell$ can be interpreted as a cut in the graph and the time $t_{\ell}$ as the first time with a communication across the cut represented by $\ell$.}

	For all $t \in \{t_{e_k}, \dots,  t_{e_{k+1}}\}$ let $L(t) = \{\ell \, | \, t_{\ell} = t\}$, i.e. $L(t)$ represents all indices $\ell$ at time $t$ such that the two sets $\{w_{\pi_1}(t_{e_k}), \dots w_{\pi_{\ell}}(t_{e_k})\}$ and $\{w_{\pi_{\ell+1}}(t_{e_k}),$ $\dots, w_{\pi_n}(t_{e_k})\}$ influence each other for the first time.

	Now, we define all pairs of indices $\pi_i,\pi_j$ at time $t$ such that there exists an $\pi_\ell$ with $\pi_i \leq \pi_\ell < \pi_j$ and $\sigma_{\ell,i,j}(t) = 1$: $C_\ell(t) = \{\{\pi_i,\pi_j\} \, | \, \pi_i \leq \pi_\ell < \pi_j \textnormal{ and } \sigma_{\ell,i,j}(t) =1\}.$
	Lastly, define for fixed $i,j$ and $t$:
	$F_{ij}(t) = \{\ell \in L(t) \, | \, \{\pi_i, \pi_j\} \in C_{\ell}(t)\}$.

	Fix some $\pi_i$ and $\pi_j$ with $\pi_i < \pi_j$ and a round $t$ such that $|F_{ij}(t)| > 0$.
	Let $F_{ij}(t) = \{\ell_1, \dots, \ell_k\}$ sorted in increasing order.
	Since $\ell_1 \in L(t)$, it holds by definition that there exists no round $t' \in [t_{e_k}, \dots, t]$ and an index $\pi_m $ with $\pi_i \leq \pi_m < \pi_j$ such $\sigma_{m,i,j}(t') = 1$.
	It follows $w_{\pi_i}(t) \geq w_{\pi_{\ell_1}}(t_{e_k})$ (since $w_{\pi_i}(t)$ was so far only influenced by elements of the set $w_{\pi_1}(t_{e_k}), \dots, w_{\pi_{\ell}}(t_{e_k})$ which are all larger or equal to $w_{\pi_{\ell}}(t_{e_k})$).
	Similarly, one can argue $w_{\pi_j}(t) \leq w_{\pi_{\ell_k+1}}(t_{e_k})$.
	Hence, we can conclude

	\begin{align*}
		w_{\pi_i}(t) - w_{\pi_j}(t) \geq w_{\pi_{\ell_1}}(t_{e_k}) - w_{\pi_{\ell_k+1}}(t_{e_k}) \geq \sum_{\pi_\ell \in F_{ij}(t)} w_{\pi_\ell}(t_{e_k}) -w_{\pi_{\ell+1}}(t_{e_k}).
	\end{align*}

	The last line directly leads to

	\begin{align*}
		(w_{\pi_i}(t) - w_{\pi_j}(t))^2 \geq \sum_{\pi_\ell \in F_{ij}(t)} \left(w_{\pi_\ell}(t_{e_k}) -w_{\pi_{\ell+1}}(t_{e_k})\right)^2.
	\end{align*}

	The second part of the proof now deals with finding a lower bound for $\sum_{i=2}^{n} d_{i}^{-}(t) + d_i(t) + d_{i}^{+}(t)$ in a fixed round $t$.
	\begin{align*}
		\sum_{i=2}^{n} d_{i}^{-}(t) + d_i(t) + d_{i}^{+}(t) &=  \sum_{\left(\pi_m,\pi_i, \pi_j\right): \sigma_{m,i,j}(t) = 1} \left(w_{\pi_i}(t) - w_{\pi_j}(t)\right)^2 \\
		&\geq  \sum_{\left(\pi_m,\pi_i, \pi_j\right): \sigma_{m,i,j}(t) = 1} \sum_{\pi_\ell \in F_{ij}(t)} \left(w_{\pi_\ell}(t_{e_k}) -w_{\pi_{\ell+1}}(t_{e_k})\right)^2  \\
		&\geq  \sum_{\pi_\ell \in L(t)}\left(w_{\pi_\ell}(t_{e_k}) -w_{\pi_{\ell+1}}(t_{e_k})\right)^2.
	\end{align*}

	Lastly, we plug all insights together to conclude the proof.

	\begin{align*}
		\Phi(t_{e_k})-\Phi(t_{e_{k+1}}) &\geq \quarter \cdot \sum_{t=t_{e_k}}^{t_{e_{k+1}}} \sum_{i=2}^{n} d_{i}^{-}(t) + d_i(t) + d_{i}^{+}(t) \\
		&\geq \quarter \cdot \sum_{t=t_{e_k}}^{t_{e_{k+1}}} \sum_{\pi_\ell\in L(t)}\left(w_{\pi_\ell}(t_{e_k}) -w_{\pi_{\ell+1}}(t_{e_k})\right)^2. \\
		&= \quarter \sum_{\pi_\ell=1}^{n-1} \left(w_{\pi_\ell}(t_{e_k}) -w_{\pi_{\ell+1}}(t_{e_k})\right)^2.
	\end{align*}

	The last line follows since each robot moves at least once per epoch.
	\qed
\end{proof}

\fourthLemma*

\begin{proof}
	\Cref{lemma:sortedBound} leads to

	\begin{align*}
		\frac{\Phi(t_{e_k}) -\Phi(t_{e_{k+1}})}{\Phi(t_{e_k})} &\geq \frac{1}{4} \frac{ \sum_{\pi_\ell=1}^{n-1} \left(w_{\pi_\ell}(t_{e_k}) -w_{\pi_{\ell+1}}(t_{e_k})\right)^2}{ \sum_{\pi_\ell=1}^{n} \left(w_{\pi_\ell}(t_{e_k}) -1\right)^2.} \\
		&= \frac{1}{4} \frac{ \sum_{\pi_\ell=1}^{n-1} \left(w_{\pi_\ell}(t_{e_k}) -w_{\pi_{\ell+1}}(t_{e_k})\right)^2}{\sum_{\pi_\ell=1}^{n} \left(w_{\pi_\ell}(t_{e_k}) -w_{\pi_1}(t_{e_k})\right)^2.} \\
	\end{align*}

	The second line follows since $w_1(t) = 1$ for all $t$ and thus $w_{\pi_1}(t_{e_k}) =1$.
	Observe that the right-hand side does not change if we multiply each $w_{\pi_i}(t_{e_k})$ with the same constant.
	Additionally, it also does not change if we add the same constant to each $w_{\pi_i}(t_{e_k})$.
	Hence, we can assume w.l.o.g.\  $\sum_{\pi_\ell=1}^{n} w_{\pi_\ell}(t_{e_k}) = 0$ and $\sum_{\pi_\ell=1}^{n} \left(w_{\pi_\ell}(t_{e_k}) -w_{\pi_1}(t_{e_k})\right)^2 = 1$ and obtain

	\begin{align*}
		\frac{\Phi(t_{e_k}) -\Phi(t_{e_{k+1}})}{\Phi(t_{e_k})} &\geq \quarter \min_{\substack{w_1 \geq w_2, \dots, \geq w_n  \\ \sum_{i}w_i = 0 \\\sum_{i}(w_i-w_1)^2 = 1}} \sum_{i=1}^{n-1} \left(w_i-w_{i+1}\right)^2
	\end{align*}

	The assumption $\sum_{i}\left(w_i-w_1\right)^2 = 1$ implies that the average value of all $\left(w_i-w_1\right)^2$ is $\frac{1}{n}$ and hence there is at least some $j$ with $|w_j-w_1| \geq \frac{1}{\sqrt{n}}$.
	As a consequence, either $|w_1| \geq \frac{1}{2 \cdot \sqrt{n}}$ or $|w_j| \geq \frac{1}{2 \cdot \sqrt{n}}.$
	W.l.o.g. assume $|w_1| \geq \frac{1}{2 \sqrt{n}}$ and moreover assume $w_1 > 0$.
	The case $w_1 < 0$ can be handled by multiplying each $w_i$ with $-1$ and sort the elements in descending order.

	Now define $u_i = w_i - w_{i+1}$ for $i < n$ and $u_n = 0$.
	It holds $u_i \geq 0$ for all $i$ and $\sum_{i=1}^{n} u_i = w_1 - w_n$.
	Since at least one $w_1 \geq \frac{1}{2 \sqrt{n}}$ and the $\sum_{i}w_i = 0$, we can conclude $u_n < 0$ and thus $\sum_{i}u_i \geq \frac{1}{2 \sqrt{n}}$.

	As a final step, we obtain

	\begin{align*}
		\frac{\Phi(t_{e_k}) -\Phi(t_{e_{k+1}})}{\Phi(t_{e_k})} &\geq \quarter \min_{\substack{u_i  \geq 0 , \sum_{i}u_i \geq 1/(2 \sqrt{n})}} \sum_{i=1}^{n-1} u_i^2.
	\end{align*}

	The solution of the minimization problem is $u_i = \frac{1}{2 \cdot n^{3/2}}$ for each $i$.

	Hence,
	\begin{align*}
		\frac{\Phi(t_{e_k}) -\Phi(t_{e_{k+1}})}{\Phi(t_{e_k})} &\geq  \quarter \cdot \frac{1}{2 \cdot n^2} = \frac{1}{8n^2}.
	\end{align*}
	\qed
\end{proof}

\begin{restatable}{lemma}{fifthLemma}
	After $\mathcal{O}\left(n^2 \cdot \log \left(n/\varepsilon\right)\right)$ rounds, it holds $\sum_{i=2}^{n} w_i(t) \geq (1-\varepsilon) \cdot (n-1)$.
\end{restatable}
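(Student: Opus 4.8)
The plan is to feed the geometric contraction of the potential $\Phi$, already established in \Cref{lemma:sortedBound,lemma:potentialRatio}, into an elementary Cauchy--Schwarz estimate that converts ``$\Phi$ is small'' into ``$\sum_{i=2}^{n} w_i$ is close to $n-1$''. The only ingredient beyond bookkeeping is a clean bound on the value of $\Phi$ at the moment phase~$1$ finishes.

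First I would apply \Cref{lemma:firstLemma}: after $k_0 \in \mathcal{O}(n^2)$ epochs all robots lie on one vertical line at distinct positions and the configuration is connected. Relabel so that $y_1(t) \le \dots \le y_n(t)$. Connectivity of a collinear configuration with unit ranges forces every gap $w_i(t_{e_{k_0}}) = y_i(t_{e_{k_0}}) - y_{i-1}(t_{e_{k_0}})$ with $2 \le i \le n$ to be at most $1$ (a larger gap would split the robots below and above it into mutually invisible sets, contradicting connectivity), while distinctness gives $w_i(t_{e_{k_0}}) > 0$. Hence $z_i(t_{e_{k_0}}) = w_i(t_{e_{k_0}}) - 1 \in (-1,0]$, so $z_i(t_{e_{k_0}})^2 < 1$ and $\Phi(t_{e_{k_0}}) = \sum_{i=2}^{n} z_i(t_{e_{k_0}})^2 < n-1 \le n$.

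Next I would iterate \Cref{lemma:potentialRatio}, which gives $\Phi(t_{e_{k+1}}) \le \bigl(1 - \tfrac{1}{8n^2}\bigr)\Phi(t_{e_k})$ for every epoch with $\Phi(t_{e_k}) > 0$ (if $\Phi$ ever reaches $0$ the line already has maximal length, and by \Cref{lemma:potentialDifference} $\Phi$ is non-increasing in every single round, so it stays optimal). Combined with $1+x \le e^x$ this yields, for $k \ge k_0$,
\[
\Phi(t_{e_k}) \le \Bigl(1 - \tfrac{1}{8n^2}\Bigr)^{k-k_0}\Phi(t_{e_{k_0}}) \le n\,e^{-(k-k_0)/(8n^2)} .
\]
For the translation step, write $\sum_{i=2}^{n} w_i(t) = (n-1) + \sum_{i=2}^{n} z_i(t)$ and bound $\bigl|\sum_{i=2}^{n} z_i(t)\bigr| \le \sqrt{(n-1)\sum_{i=2}^{n} z_i(t)^2} = \sqrt{(n-1)\,\Phi(t)}$ by Cauchy--Schwarz, so $\sum_{i=2}^{n} w_i(t) \ge (n-1) - \sqrt{(n-1)\,\Phi(t)}$. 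Thus it suffices to reach $\Phi(t) \le \varepsilon^2(n-1)$; using $n-1 \ge n/2$, this is implied by $n\,e^{-(k-k_0)/(8n^2)} \le \varepsilon^2 n/2$, i.e.\ by $k - k_0 \ge 8n^2\ln(2/\varepsilon^2) = \mathcal{O}(n^2\log(1/\varepsilon))$. Adding the $k_0 = \mathcal{O}(n^2)$ epochs of phase~$1$ gives a total of $\mathcal{O}(n^2 + n^2\log(1/\varepsilon)) = \mathcal{O}(n^2\log(n/\varepsilon))$ epochs, which proves the claim.

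I expect no genuine obstacle here: the geometric-series manipulation and the Cauchy--Schwarz inequality are routine, and the single place that needs care is the initial bound $\Phi(t_{e_{k_0}}) \le n$, which is exactly where the ``phase~$1$ completed'' hypothesis of \Cref{lemma:firstLemma} together with preservation of connectivity is used. All the real work has already been done in \Cref{lemma:sortedBound,lemma:potentialRatio}.
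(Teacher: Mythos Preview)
Your argument is correct and follows essentially the same route as the paper: bound $\Phi$ at the end of phase~1 by $n-1$, iterate the contraction of \Cref{lemma:potentialRatio} via $(1-y)^x\le e^{-yx}$, and read off the epoch count. The one difference is that you make the final implication explicit via Cauchy--Schwarz (targeting $\Phi\le\varepsilon^2(n-1)$), whereas the paper simply drives $\Phi$ below $\varepsilon$ and asserts $\sum_{i\ge 2}w_i\ge(1-\varepsilon)(n-1)$ without spelling out the translation; your version is cleaner there and in fact shaves the epoch count for phase~2 to $\mathcal{O}(n^2\log(1/\varepsilon))$, still well within the stated bound.
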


\begin{proof}
	Fix any epoch $e_k$.
	By \Cref{lemma:potentialRatio}, we obtain $\Phi(t_{e_{k+1}}) \leq (1-\frac{1}{8n^2}) \cdot \Phi(t_{e_k})$ and thus $\Phi(t_{e_{k+x}}) \leq \left(1-\frac{1}{8n^2}\right)^x \cdot \Phi(t_{e_k})$.
	Observe that $(1-y)^x \leq e^{-y\cdot x}$ where $e$ denotes Euler's number.
	Thus, choosing $x \geq 8n^2 \cdot \ln(\frac{1}{r})$ yields $\Phi(t_{e_{k+x}}) \leq r\cdot \Phi(t_{e_k})$.
	Since $\Phi(t_{e_k}) < n-1$, $r \leq \frac{\varepsilon}{n-1}$ leads to  $\Phi(t_{e_{k+x}}) \leq \varepsilon$ and thus $\sum_{i=2}^{n} w_i(t) \geq (1-\varepsilon) \cdot (n-1)$.
	\qed
\end{proof}

\newpage
\section{Adjusted \fsync{} Algorithm} \label{section:fsyncAlgoNoMoving}
The \fsync{} algorithm presented in \Cref{section:lumiFsync} is -- to keep the pseudocode comprehensible -- designed  such that the robots still move to the left after \MaxLineFormation/ is already solved.
In this section, we explain how $3$ additional lights help to remove this behavior to design an algorithm that forms a stationary line.

Observe first that in \Cref{algorithm:luminousFsync}, two runs can only be located at two neighboring robots in case the algorithm is already in phase $2$.
Otherwise, at least one robot observes that its neighborhood is not yet aligned parallel to the $y$-axis, and the corresponding run is stopped (line 31 in \Cref{algorithm:luminousFsync}).
We use this observation as follows:
As soon as two runs meet at neighboring robots, the two robots activate a light $\ell_{final}$ to store this information.
Robots with an active light $\ell_{final}$ do not move to the left anymore.
Additionally, robots that  observe a robot in their neighborhood that has activated $\ell_{final}$, activate their own light $\ell_{final}$.
Hence, after $\mathcal{O}(n)$ rounds, all robots have activated $\ell_{final}$.
While propagating $\ell_{final}$, it might, however, happen that some robots move to the left while other robots remain stationary (due to the limited visibility).
See \Cref{fig:fsyncAlgoNonStationary} for a depiction of such a case.
To rebuild the line-shape again runs at robots with active light $\ell_{final}$ behave slightly different: the vertical movement is identical to before.
The horizontal movement changes: instead of moving to the left, a robot moves a distance of $1$ to the right if it is leftmost in its neighborhood, and there is at least one robot in a horizontal distance of $1$ to the right.
Finally, the robots align again on the initial line (before activating $\ell_{final}$) and \MaxLineFormation/ gets solved after $\mathcal{O}(n)$ rounds.

\begin{figure}[htbp]
	\centering
	\includegraphics[width=0.75\textwidth]{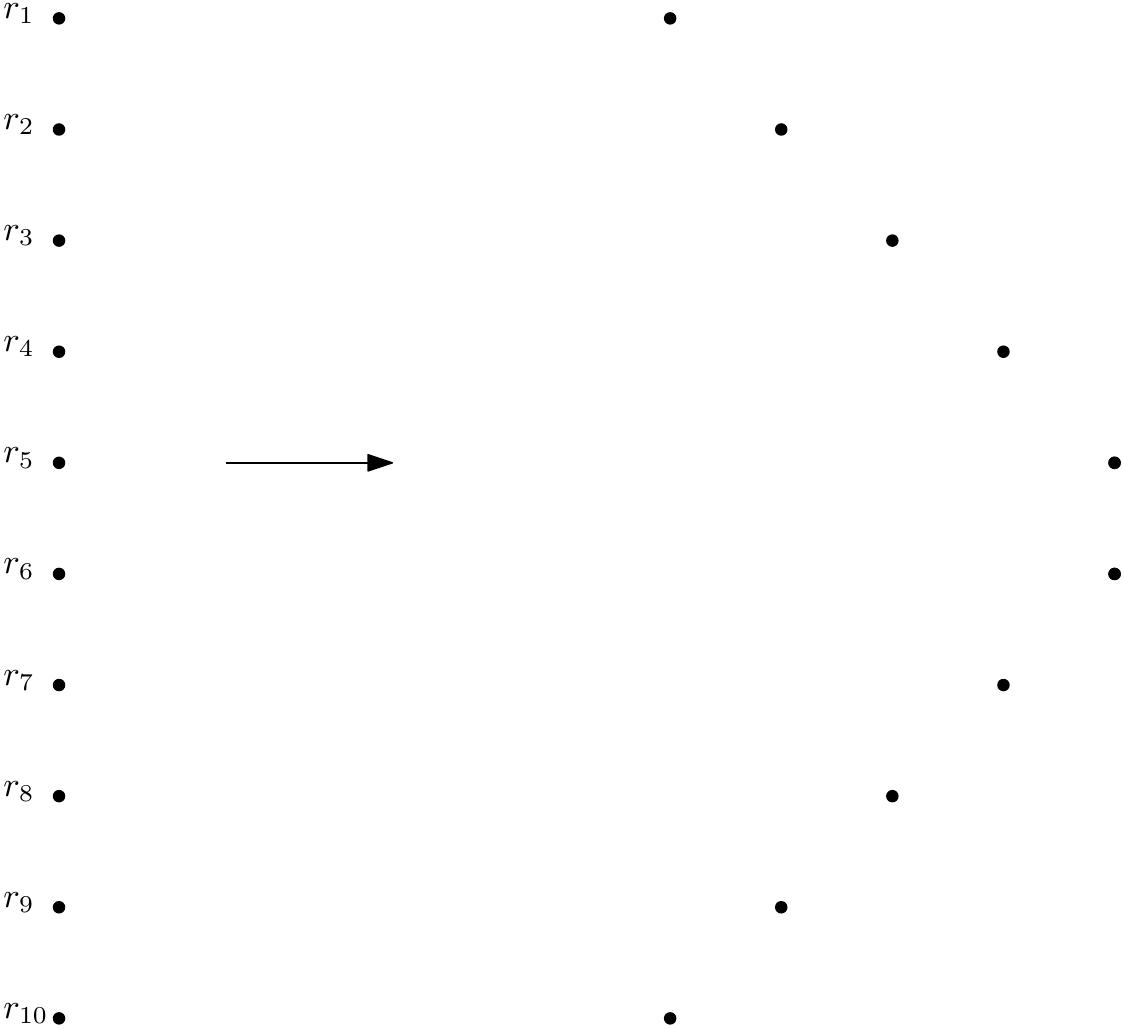}
	\caption{To the left, the robots are arranged on a straight line parallel to the $y$-axis. After some time, two runs meet in the middle at $r_5$ and $r_6$ that activate their light $\ell_{final}$. In the next round, $r_4$ and $r_7$ activate their light $\ell_{final}$ and so on. However, $r_1$, $r_2$, $r_3$, $r_4$, $r_7$, $r_8$, $r_9$ and $r_{10}$ move a distance of $1$ to the left before $r_4$ and $r_7$ become stationary. Similarly, $r_1, \dots, r_3$ and $r_8, \dots, r_{10}$ move a distance of $1$ further to the left than $r_3$ and $r_6$ and so on. The final configuration might look like it is depicted to the right.}
	\label{fig:fsyncAlgoNonStationary}
\end{figure}
\newpage
\section{Omitted Proofs of \Cref{section:luminousAlgorithms}} \label{section:appendixLumiProof}

\sixthLemma*

\begin{proof}
	The connectivity and collision avoidance follow directly from the algorithm description.
	To prove the linear runtime, define $x_{max}(t)$ to be the maximal $x$-coordinate in the global coordinate system.
	Furthermore, define $k_{max}(t)$ to be the number of robots with $x_i(t) \in (x_{max}(t)-1,x_{max}(t)]$.
	Observe that every robot $r_i$ with $x_i(t) \in (x_{max}(t)-1,x_{max}(t)]$ moves such that $x_i(t+1) \in (x_{max}(t)-2, x_{max}(t)-1]$.
	Since the configuration is always connected, there must have been a robot $r_j$ with $x_j(t) \in  (x_{max}(t)-2, x_{max}(t)-1]$ that cannot leave the interval.
	It follows $k_{max}(t+1) \geq k_{max}(t) +1$
	Thus, after $\mathcal{O}(n)$ rounds, all robots have $x$-coordinates in an interval of size at most $1$.
	Fix one round $t$ and assume that all robots have $x$-coordinates in an interval of size at most $1$.
	Consider the robot $r_{max}$ with globally maximal $x$-coordinate.
	None of its neighbors can see a robot that has a larger $x$-coordinate, and thus, all robots of $N_{max}(t)$ are collinear in round $t+1$.
	Furthermore, all of these robots still have the globally largest $x$-coordinate in round $t+1$.
	Now, consider the topmost robot $r_j \in N_{max}(t)$.
	It follows that all robots of $N_{j}(t+1)$ have a smaller or equal $x$-coordinate and cannot see any other robot with a larger $x$-coordinate.
	Thus, in round $t+2$ all robots in $N_{max}(t)$ and $N_{j}(t+1)$ are collinear.
	The same argument holds for the bottom-most robot.
	Applying the same argument inductively yields that all robots are collinear after $\mathcal{O}(n)$ rounds.
\end{proof}

\thirdTheorem*

\begin{proof}
	By \Cref{lemma:fsyncCollinear} all robots are collinear on a line parallel to the $y$-axis after $\mathcal{O}(n)$ epochs.
	It remains to prove the linear runtime until the optimal configuration is formed.
	Rename the robots such that $r_1$ is the topmost robot and $r_n$ is the bottom-most robot.
	Define $u_i(t)  = y_i(t) - y_{i-1}(t)$.
	Both $r_1$ and $r_n$ activate their light $\ell_{mov}$ every $3$ rounds and ensure $u_2(t) = u_n(t) = 1$.
	In round $t+1$ it holds $u_3(t+1) = u_{n-1}(t+1) = 1$ and so on.
	Assume $n$ to be even (the arguments for odd $n$ are analogous).
	After $\frac{n}{2}-1$ rounds, the movement meets at the two robots $r_{n/2}$ and $r_{n/2+1}$ and they move such that $y_{n/2+1}(t+\frac{n}{2}-1) = 1$ holds.
	The next two movements ensure $y_{n/2}(t+\frac{n}{2}+2) = y_{n/2+2}(t+\frac{n}{2}+2) = 1$ and so on.
	Since each $3$ rounds a new movement is started, the optimal configuration is reached in $\mathcal{O}(n)$ rounds.
\end{proof}

\newpage
\section{Omitted Proofs of \Cref{section:otherProblems}}\label{section:appendixSectionSixProofs}

\begin{restatable}{lemma}{seventhLemma}\label{lemma:gatheringCollinear}
	After $\mathcal{O}(\Delta)$ epochs, all robots are located on the same vertical line parallel to the $y$-axis.
	Moreover, the configuration is connected.
\end{restatable}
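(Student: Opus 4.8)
The statement mirrors Lemma~\ref{lemma:fsyncCollinear} for the \fsync{} \Luminous/ algorithm (\Cref{lemma:sixthLemma}), but now the scheduler is \ssync{} and the first phase is the one used by the \Oblot/ algorithm (\Cref{algorithm:Collisions}), in which only a robot that is rightmost (but not leftmost) in its neighborhood moves, horizontally to the $x$-coordinate of its leftmost neighbor (with a tiny vertical perturbation to dodge a collision). The plan is to reuse the argument of \Cref{lemma:firstLemma} almost verbatim: connectivity and the absence of collisions follow directly from the algorithm description (the $\frac13$-factor in the vertical escape move prevents two perturbing robots from colliding, and a robot only ever moves onto the vertical line of a robit it can already see, so no connection is broken). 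The only thing that needs attention is replacing the $\mathcal{O}(n^2)$ bound by $\mathcal{O}(\Delta)$.

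\textbf{Key steps.} First I would fix the global coordinate system and let $x_{\max}(t)$ denote the largest $x$-coordinate currently occupied by a robot. As in \Cref{lemma:firstLemma}, no robot ever increases its $x$-coordinate, and any robot that is rightmost in its neighborhood and not leftmost moves strictly left onto the $x$-coordinate of an existing leftward neighbor, creating no new $x$-value. Second, I would argue a \emph{per-epoch} decrease of the number of occupied $x$-coordinates to the right of any threshold: within one epoch every robot is activated at least once, so a robot sitting alone at the global maximum $x_{\max}(t)$ is (at the time it is activated) rightmost in its neighborhood; if it still has a neighbor strictly to its left it moves left, and if it does not, then by connectivity all robots share that $x$-coordinate and we are already collinear. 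Hence each epoch strictly reduces the count of robots at the current maximal $x$-coordinate, and once that count drops to zero the maximal $x$-coordinate has decreased by at least the minimum gap. Third — and this is where $\Delta$ enters instead of $n$ — I would observe that the relevant progress measure is not ``number of distinct $x$-coordinates'' but the \emph{width} $x_{\max}(t)-x_{\min}(t)$, which starts at $\Delta_x \le \Delta$. Since the viewing/connectivity range is $1$ and the configuration stays connected, whenever the width exceeds $1$ there is a robot at the global maximum with a neighbor at distance at most $1$ to its left, onto which it jumps in the epoch it is activated; a counting/amortization argument (identical in spirit to the one announced for the \Gathering/ first phase, ``a slightly more elaborate argument'') then shows the width shrinks to at most $1$ within $\mathcal{O}(\Delta)$ epochs, after which the $N_{\max}$-propagation argument of \Cref{lemma:sixthLemma} — topmost and bottommost robots of the rightmost cluster pull in their neighborhoods layer by layer — collapses everything onto a single vertical line in $\mathcal{O}(n)=\mathcal{O}(\Delta)$ further epochs (note $n-1\le \Delta$ since robots start at pairwise distinct positions at mutual distance $\ge$ some positive amount is \emph{not} quite enough; rather, once the width is $\le 1$ the rightmost cluster contains a robot whose neighborhood spans a $y$-interval, and iterating costs at most $n$ epochs, and $n = \mathcal{O}(\Delta)$ because a connected swarm of $n$ robots with range $1$ but pairwise distinct positions need not have diameter $\Omega(n)$ — so I would instead bound this tail by the same $\mathcal{O}(\Delta)$ width-collapse argument applied in the $x$-direction only, avoiding any dependence on $n$).

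\textbf{Main obstacle.} The delicate point is the third step: turning ``width at most $\Delta_x$'' into an $\mathcal{O}(\Delta)$ epoch bound under \ssync{}. Under \ssync{} an adversary may activate only one rightmost robot per epoch, so a naive argument gives $\mathcal{O}(n\cdot\Delta)$; I expect the fix to be a potential-function / amortization argument showing that across one epoch the \emph{sum} over all currently-rightmost robots of their leftward jump lengths is $\Omega(1)$ (because each such jump is onto a visible neighbor at distance $\le 1$, and there is always at least one such jump of length bounded below by the configuration's minimum nonzero gap, which however can be arbitrarily small) — so the honest argument must instead track $x_{\max}(t)$ itself and show it strictly decreases by a fixed amount per epoch once the width exceeds $1$: in any epoch, every robot at the current $x_{\max}$ gets activated and leaves that coordinate, and no robot enters it, so $x_{\max}$ drops; bounding the drop below requires noting that a robot leaving $x_{\max}$ lands at the $x$-coordinate of a neighbor, and after finitely many (at most $n$, but better: at most ``number of distinct occupied $x$-values'') such epochs $x_{\max}$ has moved left past the next occupied coordinate. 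I would ultimately phrase it as: the number of distinct occupied $x$-coordinates never increases and strictly decreases every epoch as long as it exceeds $1$ — wait, that again gives $n$, not $\Delta$. The genuinely careful version, matching the paper's claim, is the ``slightly more elaborate'' accounting promised for \Gathering/: one shows that within $\mathcal{O}(\Delta)$ epochs the interval of occupied $x$-coordinates has length $\le 1$, using that each epoch either halves (in an amortized sense) the number of robots at the frontier or advances the frontier by a step comparable to the local gap, and that the total ``frontier advancement'' needed is $\Delta_x \le \Delta$; this is the step I would spend the most care on, and I would defer its detailed bookkeeping to the companion \Gathering/ proof (\Cref{section:appendixSectionSixProofs}) since the two arguments are structurally the same.
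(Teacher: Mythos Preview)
You have misidentified both the algorithm and the scheduler that this lemma is about. \Cref{lemma:gatheringCollinear} belongs to \Cref{section:gathering} and concerns the \Gathering/ algorithm (\Cref{algorithm:gathering}) under the \fsync{} scheduler, not the \Oblot/ \MaxLineFormation/ algorithm (\Cref{algorithm:Collisions}) under \ssync{}. In \Cref{algorithm:gathering}, \emph{every} robot moves in \emph{every} round to the $x$-coordinate $x^{i}_r(t)-1$ (the first phase here is borrowed from \Cref{algorithm:luminousFsync}, not from \Cref{algorithm:Collisions}), and there is no collision avoidance because collisions are harmless for \Gathering/. Your entire plan --- only rightmost robots move, \ssync{} adversary, the $\tfrac13$-perturbation --- is directed at the wrong object.

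This misidentification is exactly why your ``main obstacle'' feels unresolvable: under \ssync{} with only-rightmost-moves one genuinely cannot do better than the $\mathcal{O}(n^2)$ of \Cref{lemma:firstLemma}, and you correctly sense this but then try to argue around it. The paper's actual argument exploits \fsync{} and the all-robots-move-left rule: partition the occupied $x$-range into unit intervals anchored at $x_{\min}(t)$; each round the rightmost nonempty interval empties completely (every robot there moves to $x^{i}_r(t)-1$, landing in the next interval to the left). The subtlety is that $x_{\min}(t)$ can also drift left, creating new intervals; the paper bounds the number of such new intervals by $\Delta_y$, giving $\Delta_x+\Delta_y\in\mathcal{O}(\Delta)$ rounds until the $x$-width is at most $1$. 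After that, the layer-by-layer propagation from the rightmost robot's neighborhood (the same mechanism as in \Cref{lemma:fsyncCollinear}) makes everyone collinear in at most $\Delta_y$ further rounds --- not $\mathcal{O}(n)$, which, as you yourself noticed, is not obviously $\mathcal{O}(\Delta)$.
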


\begin{proof}
	Define by $x_{min}(t)$ the minimal $x$-coordinate of all robots.
	Next, define intervals $i_1(t) = [x_{min}(t)+1], i_2(t) = [x_{min}(t)+1, x_{min}(t)+2]$ and so on.
	Additionally, $R_j(t) := \{r_k \,|\, x_k(t) \in i_j(t)\}$ and $k(t)$ is the largest index $j$ of an interval such that $R_j(t) \neq \emptyset$.
	Observe that $k(t) \leq \Delta_x$.
	Fix a round $t_0$.
	By the same arguments as in the proof of \Cref{lemma:fsyncCollinear}, it holds $R_{k(t_0)}(t_0+1) = \emptyset$ (since all robots leave the rightmost interval).
	At the same time, it can happen that some of the robots in the leftmost interval $i_1(t_0)$ create a new interval to the left such that $i_1(t_0+1) \neq i_1(t_0)$.
	In total, however, at most $\Delta_y$ new intervals can be created since for every new interval, it must hold that the robots cannot observe the position of any of the previous new generated intervals.
	Since initially at most $\Delta_x$ intervals exist, we obtain a runtime of $\Delta_x+\Delta_y \in \mathcal{O}(\Delta)$ until all robots have an $x$-coordinate in the same interval.
	Now, consider the robot $r_{max}$ with globally maximal $x$-coordinate.
	None of its neighbors can see a robot that has a larger $x$-coordinate, and thus, all robots of $N_{max}(t)$ are collinear in round $t+1$.
	Furthermore, all of these robots still have the globally largest $x$-coordinate in round $t+1$.
	Next, consider the topmost robot $r_j \in N_{max}(t)$.
	It follows that all robots of $N_{j}(t+1)$ have a smaller or equal $x$-coordinate and cannot see any other robot with a larger $x$-coordinate.
	Thus, in round $t+2$ all robots in $N_{max}(t)$ and $N_{j}(t+1)$ are collinear.
	The same argument holds for the bottom-most robot.
	This case can occur at most $\Delta_y$ times.
	Afterward, all robots are collinear.
	The runtime of $\mathcal{O}(\Delta)$ follows.
\end{proof}

\fourthTheorem*

\begin{proof}
	According to \Cref{lemma:gatheringCollinear}, all robots are collinear after $\mathcal{O}(\Delta)$ rounds.
	Rename the robots such that $r_1$ is the topmost robot and $r_n$ the bottom-most robot.
	We prove exemplary for $r_1$ that it moves a constant distance toward $r_n$ every two rounds.
	The arguments for $r_n$ are analogous.
	Observe first that $r_1$ remains topmost because $r_1$ moves to the midpoint between its position and its farthest neighbor $r_f$.
	In case there is any robot between $r_1$ and $r_f$, this robot can also see both $r_1$ and $r_f$ and either moves to the same position as $r_1$ or can see a robot $r_{f'}$ that lies below of $r_f$.
	Similarly, $r_f$ and robots below of $r_f$ can only compute target points below the target point of $r_1$.
	Hence, $r_1$ remains the topmost robot.
	Now consider a round in which $r_1$ moves a distance of less than $\frac{1}{10}$ downwards.
	This implies that its farthest neighbor $r_f$ is in distance at most $\frac{1}{5}$.
	Since the configuration is connected, $r_f$ can see a robot $r_{f'}$ in distance at least $1$ of $r_1$.
	Hence, $r_1$ moves a distance of at least $\half$ downwards.
	Thus, the distance between $r_1$ and $r_f$ in round $t+1$ is at least $\half$.
	Hence, $r_1$ moves a constant distance (at least $\quarter$) in round $t+2$.
	Thus, every two rounds, $r_1$ and $r_n$ move at least a constant distance.
	Finally, they can see each other, and all robots gather in the next round.

\end{proof}

\subsection{Proof of \Cref{theorem:chainFormationTheorem}}\label{section:chainFormationProofs}

\begin{lemma} \label{lemma:potentialDifferenceChain}
	For any round $t$, it holds
	\begin{align*}
		\Phi(t+1) = \Phi(t) - \frac{1}{4}\sum_{i=2}^{n} d_{i}^{-}(t) + d_i(t) + d_i^{+}(t).
	\end{align*}
\end{lemma}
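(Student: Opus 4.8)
The plan is to mirror the proof of \Cref{lemma:potentialDifference} almost verbatim, the only genuine change being that the stretching move of the two extreme robots in \Cref{algorithm:Collisions} is replaced by the fact that $r_0$ and $r_{n+1}$ are stationary. Since the midpoint rule of \textsc{GTM} acts coordinate-wise, the $x$- and $y$-components of the edge vectors $w_i(t)$ evolve independently, so I would fix the $x$-component and work with the scalars $w^x_1(t), \dots, w^x_{n+1}(t)$, treating $\Phi_2(t)=\sum_{i=1}^{n+1} z_i(t)^2$ as a scalar quantity. The definitions of $\tau_i$, $\mu^{\pm}_i$ and $d^{-}_i, d_i, d^{+}_i$ are exactly as in the analysis of \Cref{section:oblot}, with the convention $\tau_0(t)\equiv\tau_{n+1}(t)\equiv 0$ because the outer robots never move.

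First I would record the one-step update of each edge. For an internal edge $w^x_i$ with $2\le i\le n$ (both endpoints are moving robots) the same four cases as in \Cref{lemma:potentialDifference} arise -- both of $r_{i-1},r_i$ active, exactly one of them active, or neither -- and the update formulas ($w^x_i(t+1)=\tfrac12 w^x_{i-1}(t)+\tfrac12 w^x_{i+1}(t)$ in the first case, $\tfrac12 w^x_{i-1}(t)+\tfrac12 w^x_i(t)$ or $\tfrac12 w^x_i(t)+\tfrac12 w^x_{i+1}(t)$ in the second, $w^x_i(t)$ in the last) are literally the ones derived there, because a robot moving to the midpoint of its two chain neighbours sets the two incident edges to the common average of the adjacent edges. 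For the two extreme edges $w^x_1$ and $w^x_{n+1}$ only two of the four cases can occur (since $r_0$, resp.\ $r_{n+1}$, never moves): if $r_1$ is active then $w^x_1(t+1)=\tfrac12 w^x_1(t)+\tfrac12 w^x_2(t)$ and otherwise $w^x_1(t+1)=w^x_1(t)$, symmetrically at the other end.

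Next I would pass to the potential. The crucial structural fact is that the centre $\overline{x}=\tfrac{1}{n+1}\sum_{i=1}^{n+1} w^x_i(t)$ is an \emph{invariant}: the midpoint rule is a convex combination of the $p_j$, and $\sum_i w^x_i(t)=p^x_{n+1}-p^x_0=D$ is conserved because $r_0,r_{n+1}$ stay put. Hence, subtracting $\overline{x}$ and using the elementary identity $\big(\tfrac{a+b}{2}-c\big)^2=\tfrac12 (a-c)^2+\tfrac12 (b-c)^2-\tfrac14 (a-b)^2$, each case turns $z_i(t+1)^2$ into an average of two of the $z_j(t)^2$ minus $\tfrac14$ times exactly the squared edge-difference appearing in $d^{-}_i(t)$, $d_i(t)$ or $d^{+}_i(t)$ -- and, as in the \Oblot/ analysis, the prefactors $\mu^{\pm}_i$ and $\tau_i\tau_{i-1}$ make at most one of these three nonzero in any round. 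Summing $z_i(t+1)^2$ over all edges and applying the same bookkeeping as in \Cref{lemma:potentialDifference} -- every $z_j(t)^2$ occurs with total coefficient exactly $1$, the coefficient $1-\tfrac12(\tau_{j-1}+\tau_j)$ coming from edge $j$ itself being compensated by the contributions $\tfrac12\tau_{j-1}$ and $\tfrac12\tau_j$ coming from its two neighbouring edges -- collapses the quadratic part back to $\Phi_2(t)$ and leaves precisely $-\tfrac14$ times the telescoping sum of the $d$-terms, as claimed.

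The only real point of care -- and the step most likely to hide a sign or index slip -- is the boundary: one must check that with the stationary endpoints the two extreme edges still fit the four-case template (with two cases vacuous) and that the "coefficient $1$" count survives there, i.e.\ that the absent neighbouring edge on the far side of $w^x_1$ (resp.\ $w^x_{n+1}$) is exactly accounted for by $\tau_0=0$ (resp.\ $\tau_{n+1}=0$). Everything else is a transcription of the already-proved \Cref{lemma:potentialDifference}, with the outer robots' "move toward the virtual robot at distance $1$" simply deleted.
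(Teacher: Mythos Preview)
Your proposal is correct and follows exactly the route the paper intends: the paper's own proof of \Cref{lemma:potentialDifferenceChain} consists of the single line ``Analogous to the proof of \Cref{lemma:potentialDifference}'', and what you wrote is precisely that analogy spelled out --- replacing the virtual-robot stretching at the ends by stationary endpoints via $\tau_0\equiv\tau_{n+1}\equiv 0$, noting that $\overline{x}$ is conserved (so the centring survives), and then repeating the case split and the coefficient bookkeeping verbatim. If anything, your write-up is more careful than the paper's, in particular your explicit check that the boundary edges $w^x_1$ and $w^x_{n+1}$ fit the four-case template with two cases vacuous.
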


\begin{proof}
	Analogous to the proof of \Cref{lemma:potentialDifference}.
\end{proof}

Next, define $w_{\pi_1}(t_{e_k}), w_{\pi_2}(t_{e_k}),$ $\dots,  w_{\pi_n}(t_{e_k})$ to the values $w^{x}_1(t), \dots, w^{x}_{n+1}(t)$ sorted from largest to smallest with ties broken arbitrarily.

\begin{lemma} \label{lemma:sortedBoundChain}
	For any epoch $k$, it holds
	\begin{align*}
		\Phi(t_{e_k}) - \Phi(t_{e_{k+1}}) \geq \frac{1}{4} \sum_{i=1}^{n-1} \left(w_{\pi_i}(t)-w_{\pi_{i+1}}(t)\right)^2.
	\end{align*}
\end{lemma}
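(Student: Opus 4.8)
\emph{Proof plan.} The argument is essentially a transcription of the proof of \Cref{lemma:sortedBound}; I will point out only where the chain setting enters. The first observation is that \textsc{GTM}, restricted to the $x$-coordinates, produces for the difference vectors $w^x_i(t)$ exactly the four update cases appearing in the proof of \Cref{lemma:potentialDifference}: if both $r_{i-1}$ and $r_i$ move then $w^x_i(t+1)=\half w^x_{i-1}(t)+\half w^x_{i+1}(t)$; if exactly one of them moves one gets $\half w^x_{i-1}(t)+\half w^x_i(t)$ or $\half w^x_i(t)+\half w^x_{i+1}(t)$; if neither moves $w^x_i(t+1)=w^x_i(t)$; and since $r_0$ and $r_{n+1}$ never move, the endpoint vectors $w^x_1$ and $w^x_{n+1}$ obey the boundary versions, combining $w^x_1$ only with $w^x_2$ and $w^x_n$ only with $w^x_{n+1}$. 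Hence \Cref{lemma:potentialDifferenceChain} follows with the proof of \Cref{lemma:potentialDifference}, and I take it as given. Note also that $\overline{x}=\tfrac{1}{n+1}(x_{n+1}-x_0)$ is constant in time, so here the $w^x_i(t)$ genuinely average toward a fixed value, which makes the adaptation even cleaner than in \Cref{section:oblot}.

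First I would sum the per-round identity of \Cref{lemma:potentialDifferenceChain} over the rounds $t=t_{e_k},\dots,t_{e_{k+1}}-1$ of epoch $k$. Since every term $d_i^-(t)$, $d_i(t)$, $d_i^+(t)$ is a $\{0,1\}$-valued coefficient times a square, hence nonnegative, the telescoping gives
\[
\Phi(t_{e_k})-\Phi(t_{e_{k+1}}) \;=\; \frac14\sum_{t=t_{e_k}}^{t_{e_{k+1}}-1}\ \sum_i\bigl(d_i^-(t)+d_i(t)+d_i^+(t)\bigr),
\]
so it suffices to bound the inner double sum from below by the sum of squared consecutive differences of the sorted values $w_{\pi_1}(t_{e_k})\ge w_{\pi_2}(t_{e_k})\ge\dots$. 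For this I would reuse verbatim the cut argument of \Cref{lemma:sortedBound}: introduce the indicators $\sigma_{m,i,j}(t)$ recording which two sorted values a positive $d$-term combines in round $t$; for each threshold $\ell$ let $t_\ell$ be the first round of the epoch at which some such combination crosses the cut separating $\{w_{\pi_1}(t_{e_k}),\dots,w_{\pi_\ell}(t_{e_k})\}$ from the remaining values; and observe that because every \textsc{GTM} update replaces a value by a convex combination of two current values, the value currently on the high side of cut $\ell$ stays at least $w_{\pi_\ell}(t_{e_k})$ and the one on the low side stays at most $w_{\pi_{\ell+1}}(t_{e_k})$ until round $t_\ell$. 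This yields $\bigl(w_{\pi_i}(t)-w_{\pi_j}(t)\bigr)^2 \ge \sum_{\pi_\ell\in F_{ij}(t)}\bigl(w_{\pi_\ell}(t_{e_k})-w_{\pi_{\ell+1}}(t_{e_k})\bigr)^2$ for every combination occurring at $t$; summing over all of them at a fixed $t$ collapses to $\sum_{\pi_\ell\in L(t)}\bigl(w_{\pi_\ell}(t_{e_k})-w_{\pi_{\ell+1}}(t_{e_k})\bigr)^2$; and summing over the epoch, using that every robot — in particular every interior robot, hence every cut of the path — is activated at least once, each threshold $\ell$ lands in some $L(t)$, so the double sum becomes $\sum_i\bigl(w_{\pi_i}(t_{e_k})-w_{\pi_{i+1}}(t_{e_k})\bigr)^2$, which is the claim.

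The only step that is not a mechanical copy is this last ``every cut is crossed within one epoch'' assertion, and the point to check there is that the two stationary outer robots do not block the propagation of influence: concretely, the endpoint updates $w^x_1(t+1)=\half w^x_1(t)+\half w^x_2(t)$ and $w^x_{n+1}(t+1)=\half w^x_n(t)+\half w^x_{n+1}(t)$ must be counted as combinations of two consecutive difference vectors (which they are), so that an activation of $r_1$ or $r_n$ crosses the outermost cuts just as an interior midpoint move crosses an interior cut. Once this is verified, the connectedness of the path together with one activation per robot per epoch forces every cut to be crossed, and nothing else distinguishes the argument from that of \Cref{lemma:sortedBound}. I expect this bookkeeping about the endpoints to be the only genuinely new point; it is why the proof is ``analogous'' rather than identical.
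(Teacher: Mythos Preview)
Your proposal is correct and follows exactly the approach the paper intends: the paper's own proof consists of the single line ``Analogous to the proof of \Cref{lemma:sortedBound},'' and what you have written is a faithful unpacking of that analogy, including the one genuinely new check about the endpoint updates for $w^x_1$ and $w^x_{n+1}$ induced by the stationary outer robots. Your observation that $\overline{x}$ is time-invariant (so the $z_i$ are centered at a fixed value) is also correct and, if anything, makes the transcription cleaner than in \Cref{section:oblot}.
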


\begin{proof}
	Analogous to the proof of \Cref{lemma:sortedBound}. \qed
\end{proof}

\begin{lemma} \label{lemma:potentialRatioChain}
	Suppose that $\Phi(t_{e_k}) > 0$.
	Then,

	\begin{align*}
		\frac{\Phi(t_{e_k}) -\Phi(t_{e_{k+1}})}{\Phi(t_{e_k})} \geq \frac{1}{4 (n+1)^2}.
	\end{align*}
\end{lemma}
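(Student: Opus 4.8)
\textbf{Proof proposal for \Cref{lemma:potentialRatioChain}.}

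The plan is to mirror the structure of the proof of \Cref{lemma:potentialRatio}, but to account for the crucial difference that in the \ChainFormation/ setting the outer robots $r_0$ and $r_{n+1}$ do \emph{not} move, so there is no distinguished fixed vector $w_1(t) = 1$ as there was for \MaxLineFormation/. Instead, the quantity that is preserved is the average $\overline{x} = \frac{1}{n+1}\sum_{i=1}^{n+1} w^x_i(t)$, and the potential is $\Phi_2(t) = \sum_{i=1}^{n+1} z_i(t)^2$ with $z_i(t) = w^x_i(t) - \overline{x}$. Starting from \Cref{lemma:sortedBoundChain}, which gives $\Phi(t_{e_k}) - \Phi(t_{e_{k+1}}) \geq \frac14 \sum_{i=1}^{n} (w_{\pi_i}(t_{e_k}) - w_{\pi_{i+1}}(t_{e_k}))^2$ over the sorted sequence of the $n+1$ values, the goal is to lower bound the ratio by $\frac{1}{4(n+1)^2}$.

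First I would divide the inequality of \Cref{lemma:sortedBoundChain} by $\Phi(t_{e_k}) = \sum_{i=1}^{n+1} z_i(t_{e_k})^2 = \sum_{i=1}^{n+1}(w_{\pi_i}(t_{e_k}) - \overline{x})^2$, which is legitimate since $\Phi(t_{e_k}) > 0$. Then I would observe that the right-hand side is invariant under adding a common constant to all $w_{\pi_i}$ and under multiplying all of them by a common constant, so we may normalize so that $\sum_i w_{\pi_i}(t_{e_k}) = 0$ (hence $\overline{x} = 0$ and the denominator becomes $\sum_i w_{\pi_i}(t_{e_k})^2$) and $\sum_i w_{\pi_i}(t_{e_k})^2 = 1$. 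This reduces the task to the combinatorial minimization
\begin{align*}
	\frac{\Phi(t_{e_k}) - \Phi(t_{e_{k+1}})}{\Phi(t_{e_k})} \geq \frac14 \min_{\substack{w_1 \geq \dots \geq w_{n+1} \\ \sum_i w_i = 0 \\ \sum_i w_i^2 = 1}} \sum_{i=1}^{n} (w_i - w_{i+1})^2.
\end{align*}
Since $\sum_i w_i^2 = 1$ with $n+1$ terms, some index $j$ satisfies $|w_j| \geq \frac{1}{\sqrt{n+1}}$, and with $\sum_i w_i = 0$ the telescoping sum $\sum_{i=1}^{n}(w_i - w_{i+1}) = w_1 - w_{n+1}$ is at least $\frac{1}{\sqrt{n+1}}$ (the extreme sorted values straddle $0$, so $w_1 - w_{n+1} \geq |w_j|$). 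Writing $u_i = w_i - w_{i+1} \geq 0$, we have $\sum_{i=1}^{n} u_i \geq \frac{1}{\sqrt{n+1}}$, and minimizing $\sum u_i^2$ subject to a fixed nonnegative sum is achieved by spreading equally, $u_i = \frac{1}{n\sqrt{n+1}}$, giving $\sum_i u_i^2 = \frac{1}{n^2(n+1)} \geq \frac{1}{(n+1)^3}$. Plugging back yields $\frac{\Phi(t_{e_k}) - \Phi(t_{e_{k+1}})}{\Phi(t_{e_k})} \geq \frac14 \cdot \frac{1}{(n+1)^3}$, which is weaker than claimed by a factor of $n+1$; I would therefore tighten the bound on $w_1 - w_{n+1}$ by the standard spread estimate (if $\sum w_i = 0$ and $\sum w_i^2 = 1$ over $m$ values then $\max_i w_i - \min_i w_i \geq \sqrt{2/m}$, in fact one can get $\Omega(1/\sqrt{m})$ with a better constant, but more importantly the minimization of $\sum u_i^2$ should be carried out over $n$ variables with sum bounded below by a constant times $1/\sqrt{n+1}$, matching exactly the $\frac{1}{4(n+1)^2}$ computation in \Cref{lemma:potentialRatio} with $n$ replaced by $n+1$).

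The main obstacle is getting the constant exactly right: in \Cref{lemma:potentialRatio} the fixed value $w_{\pi_1} = 1$ let the authors argue that either $|w_1| \geq \frac{1}{2\sqrt{n}}$ or $|w_j| \geq \frac{1}{2\sqrt{n}}$ and then bound $\sum u_i \geq \frac{1}{2\sqrt{n}}$, ultimately producing $\frac14 \cdot \frac{1}{2n^2}$. Here, without a distinguished vector, I expect the clean way is to note that after the normalization $\sum w_i = 0$, $\sum w_i^2 = 1$ over $n+1$ values, the largest value $w_1$ is positive and the smallest $w_{n+1}$ is negative, so $w_1 - w_{n+1} \geq |w_j| \geq \frac{1}{\sqrt{n+1}}$ directly (no factor of $2$ loss), hence $\sum_{i=1}^{n} u_i \geq \frac{1}{\sqrt{n+1}}$, and the equal-spreading solution $u_i = \frac{1}{n\sqrt{n+1}}$ gives $\sum u_i^2 = \frac{1}{n^2(n+1)}$. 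One then needs $\frac14 \cdot \frac{1}{n^2(n+1)} \geq \frac{1}{4(n+1)^2}$, i.e.\ $(n+1)^2 \geq n^2(n+1)$, which is false for $n \geq 2$; so the honest statement is that the exact form $\frac{1}{4(n+1)^2}$ should follow from the \emph{same} chain of inequalities as in \Cref{lemma:potentialRatio} — replace every $n$ there by $n+1$, use the crude bound "some $|w_j| \geq \frac{1}{2\sqrt{n+1}}$ after normalization" and "$\sum u_i \geq \frac{1}{2\sqrt{n+1}}$", whose equal-spreading minimum over $n$ variables is $\frac{1}{4(n+1)^3} \cdot n$-type expression; I would simply transcribe that argument verbatim, and the resulting bound is $\frac14 \cdot \frac{1}{2(n+1)^2} = \frac{1}{8(n+1)^2}$, after which I would either accept the slightly weaker constant or re-examine whether the sorted-bound prefactor can be improved to close the gap to $\frac{1}{4(n+1)^2}$. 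In short: the proof is routine given \Cref{lemma:sortedBoundChain}; the only real work is bookkeeping the constants, and I would present it as "analogous to the proof of \Cref{lemma:potentialRatio}" with the substitution $n \mapsto n+1$.
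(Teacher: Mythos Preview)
Your approach is exactly the paper's: divide by $\Phi(t_{e_k})$, normalize to $\sum_i w_i = 0$ and $\sum_i w_i^2 = 1$ over the $n+1$ sorted values, pigeonhole to find $|w_j| \geq \frac{1}{\sqrt{n+1}}$, substitute $u_i = w_i - w_{i+1} \geq 0$, and minimize $\sum u_i^2$ under the linear constraint. The long discussion in the second half about the constant not closing is unnecessary and stems from an arithmetic slip.

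Concretely: with $\sum_{i=1}^{n} u_i \geq \frac{1}{\sqrt{n+1}}$ and the equal-spreading minimizer $u_i = \frac{1}{n\sqrt{n+1}}$, you get
\[
\sum_{i=1}^{n} u_i^2 \;=\; n \cdot \frac{1}{n^2(n+1)} \;=\; \frac{1}{n(n+1)},
\]
not $\frac{1}{n^2(n+1)}$ as you wrote (you forgot to multiply by the number of terms). Hence $\frac{\Phi(t_{e_k}) - \Phi(t_{e_{k+1}})}{\Phi(t_{e_k})} \geq \frac{1}{4}\cdot\frac{1}{n(n+1)} \geq \frac{1}{4(n+1)^2}$, which is precisely the claimed bound. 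No factor-of-two detour, no weakening to $\frac{1}{8(n+1)^2}$, and no need to revisit the sorted-bound prefactor. Once you fix that one line, your proof is complete and matches the paper's.
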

\begin{proof}
	\Cref{lemma:sortedBound} leads to

	\begin{align*}
		\frac{\Phi(t_{e_k}) -\Phi(t_{e_{k+1}})}{\Phi(t_{e_k})} &\geq \frac{1}{4} \frac{ \sum_{\pi_\ell=1}^{n} \left(w_{\pi_\ell}(t_{e_k}) -w_{\pi_{\ell+1}}(t_{e_k})\right)^2}{ \sum_{\pi_\ell=1}^{n+1} \left(w_{\pi_\ell}(t_{e_k}) -\overline{x}\right)^2.} \\
	\end{align*}

	Observe that the right-hand side does not change if we multiply each $w_{\pi_i}(t_{e_k})$ with the same constant.
	Additionally, it also does not change if we add the same constant to each $w_{\pi_i}(t_{e_k})$.
	Hence, we can assume w.l.o.g.\  $\sum_{\pi_\ell=1}^{n+1} w_{\pi_\ell}(t_{e_k}) = 0$ and $\sum_{\pi_\ell=1}^{n+1} \left(w_{\pi_\ell}(t_{e_k})\right)^2 = 1$ and obtain

	\begin{align*}
		\frac{\Phi(t_{e_k}) -\Phi(t_{e_{k+1}})}{\Phi(t_{e_k})} &\geq \quarter \min_{\substack{w_1 \geq w_2, \dots, \geq w_{n+1}  \\ \sum_{i}w_i = 0 \\\sum_{i}(w_i-w_1)^2 = 1}} \sum_{i=1}^{n} \left(w_i-w_{i+1}\right)^2
	\end{align*}

	The assumption $\sum_{i}w_i^2 =1$ implies that the average value of all $w_i^2$ is $\frac{1}{n}$ and hence there is at least some $j$ with $|w_j| \geq \frac{1}{\sqrt{n}}$.
	W.l.o.g. assume this $w_j$ is positive.
	The case $w_J < 0$ can be handled by multiplying each $w_i$ with $-1$ and sorting the elements in descending order.

	Now define $u_i = w_i - w_{i+1}$ for $i < n+1$ and $u_{n+1} = 0$.
	It holds $u_i \geq 0$ for all $i$ and $\sum_{i=1}^{n} u_i = w_1 - w_{n+1}$.
	Since  $w_j \geq \frac{1}{\sqrt{n}}$ and $\sum_{i}w_i = 0$, we can conclude $u_{n+1} < 0$ and thus $\sum_{i}u_i \geq \frac{1}{\sqrt{n}}$.

	As a final step, we obtain

	\begin{align*}
		\frac{\Phi(t_{e_k}) -\Phi(t_{e_{k+1}})}{\Phi(t_{e_k})} &\geq \quarter \min_{\substack{u_i  \geq 0 , \sum_{i}u_i \geq 1/(2 \sqrt{n})}} \sum_{i=1}^{n+1} u_i^2.
	\end{align*}

	The solution of the minimization problem is $u_i = \frac{1}{(n+1)^{3/2}}$ for each $i$.

	Hence,
	\begin{align*}
		\frac{\Phi(t_{e_k}) -\Phi(t_{e_{k+1}})}{\Phi(t_{e_k})} &\geq  \quarter \cdot \frac{1}{(n+1)^2}
	\end{align*}
	\qed
\end{proof}

\begin{proof}[Proof of \Cref{theorem:chainFormationTheorem}]
	Fix any epoch $e_k$.
	By \Cref{lemma:potentialRatioChain}, we obtain $\Phi(t_{e_{k+1}}) \leq (1-\frac{1}{4n^2}) \cdot \Phi(t_{e_k})$ and thus $\Phi(t_{e_{k+x}}) \leq \left(1-\frac{1}{4n^2}\right)^x \cdot \Phi(t_{e_k})$.
	Observe that $(1-y)^x \leq e^{-y\cdot x}$ where $e$ denotes Euler's number.
	Thus, choosing $x \geq 4(n+1)^2 \cdot \ln(\frac{1}{r})$ yields $\Phi(t_{e_{k+x}}) \leq r\cdot \Phi(t_{e_k})$.
	Since $\Phi(t_{e_k}) < n+1$, $r \leq \frac{\varepsilon}{n+1}$ leads to  $\Phi(t_{e_{k+x}}) \leq \varepsilon$ and the theorem follows.
	\qed
\end{proof}




%
%

\end{document}